\providecommand{\algorithmname}{Algorithm}
\newcommand{\lyxaddress}[1]{
	\par {\raggedright #1
	\vspace{1.4em}
	\noindent\par}
}
\theoremstyle{plain}
\newtheorem{thm}{\protect\theoremname}
\theoremstyle{definition}
\newtheorem{defn}[thm]{\protect\definitionname}
\theoremstyle{plain}
\newtheorem{prop}[thm]{\protect\propositionname}
\newenvironment{proof}[1][\protect\proofname]{\par
	\normalfont\topsep6\p@\@plus6\p@\relax
	\trivlist
	\itemindent\parindent
	\item[\hskip\labelsep\scshape #1]\ignorespaces
}{%
	\endtrivlist\@endpefalse
}
\providecommand{\proofname}{Proof}
\theoremstyle{plain}
\newtheorem{lem}[thm]{\protect\lemmaname}
\newmdenv[userdefinedwidth=1.0\columnwidth,
  innerleftmargin=0.5cm,
  rightline=false,
  bottomline=false,
  leftline=false,
  topline=false]{mdframed1}
\providecommand{\definitionname}{Definition}
\providecommand{\lemmaname}{Lemma}
\providecommand{\propositionname}{Proposition}
\providecommand{\theoremname}{Theorem}
\begin{document}
\title{Improved solution to data gathering with mobile mule}
\author{Yoad Zur and Michael Segal}
\date{September 17, 2019}
\maketitle

\lyxaddress{\begin{center}
Communication Systems Engineering Department,\\
School of Electrical and Computer Engineering,\\
Ben-Gurion University of the Negev, Beer-Sheva, Israel\\
Emails: yoadzu@post.bgu.ac.il, segal@bgu.ac.il
\par\end{center}}
\begin{abstract}
In this work we study the problem of collecting protected data in
ad-hoc sensor network using a mobile entity called MULE. The objective
is to increase information survivability in the network. Sensors from
all over the network, route their sensing data through a data gathering
tree, towards a particular node, called the \textit{sink}. In case
of a failed sensor, all the aggregated data from the sensor and from
its children is lost. In order to retrieve the lost data, the MULE
is required to travel among all the children of the failed sensor
and to re-collect the data. There is a cost to travel between two
points in the plane. We aim to minimize the MULE traveling cost, given
that any sensor can fail. In order to reduce the traveling cost, it
is necessary to find the optimal data gathering tree and the MULE
location. We are considering the problem for the unit disk graphs
(UDG) and Euclidean distance cost function. We propose a primal-dual
algorithm that produces a $\left(20+\varepsilon\right)$-approximate
solution for the problem, where $\varepsilon\rightarrow0$ as the
sensor network spreads over a larger area. The algorithm requires
$O\left(n^{3}\cdot\varDelta\left(G\right)\right)$ time to construct
a gathering tree and to place the MULE, where $\varDelta\left(G\right)$
is the maximum degree in the graph and $n$ is the number of nodes.
\end{abstract}
\textbf{Keywords:} WSN $\cdot$ MULE $\cdot$ MWCDS $\cdot$ Primal-Dual
$\cdot$ UDG

\section{Introduction}

Consider an Ad-Hoc sensor network randomly embedded in the plane.
An example application is the use of sensors scattered in forests
using aircraft, to give an indication when a fire starts. The main
problem in WSN is long-term survivability. Because of the limited
energy of the sensor, its lifetime depends on its energy consumption.
A standard solution for efficient utilization of energy is to build
a data gathering tree in the network. In this method, the data is
aggregated in order to reduce the number of messages each sensor transmits.
The sensors create a logical construction of a tree in the network,
and all the data routed to the sink according to that tree. The basic
principle of the data aggregation is that each sensor waits to aggregate
the data from all its children before it sends its own message to
its father. When a sensor failure occurs, all the aggregated data
from the sensor and its children are lost. Also, if the sensor is
not a leaf in the tree, its entire subtree is disconnected. It takes
a long time to recognize the fault, and for the network to reconstruct
a new gathering tree. Here comes the use of data MULE to retrieve
the lost data. Data MULE is a mobile unit provided with extended computing
and storage capabilities, and short-range wireless communication.

In order to gather data from the children of a failed sensor, the
MULE should travel and approach each child. After visiting all of
the children, the MULE returns to its starting point. There is a cost
to travel between two points. Given a graph of gathering sensors network,
we would like to find the data gathering tree and the position of
the MULE, which minimizes the traveling cost, while each sensor could
fail. It is common to model wireless sensor networks as a Unit Disk
Graph (UDG), where the transmission range of the sensors is defined
as one unit. The sensors are the nodes, and between each pair of sensors
that can communicate directly with each other, there will be an edge
in the graph. The MULE problem was previously defined by Crowcroft
et al. \cite{Crowcroft2016} and Yedidsion et al. \cite{Yedidsion2017}.

In this work, we focused primarily on the study of the problem on
general unit disk graphs for Euclidean distance cost function, and
only one sensor can fail at a time.

The paper is organized as follows. In the next section we present
the previous related work. In Section \ref{sec:Preliminaries-and-Model}
we define our problem and model, and introduce the main subjects that
our method is based on. In Section \ref{sec:Reduction} we show reductions
of the problem which we use later. We present our algorithm and analyze
its performance, in Section \ref{sec:The-Algorithm}. In Section \ref{sec:Empirical-Results}
we present simulation results of our algorithm, and conclude in Section
\ref{sec:Conclusions}.

\section{Related work}

In the last years, some research has been started to evaluate how
Data MULEs can benefit the performance of wireless sensor networks.
Two major approaches have developed in this area. The first approach
is the regular use of data MULEs in order to pass data between remote
components in the network \cite{Shah2003,Levin2014,Somasundara2004}.
The second approach is using data MULEs in order to increase information
survivability in the network. The main idea is to use MULEs to amend
failures in the network. This is where our work takes place. Crowcroft
et al. \cite{Crowcroft2016} use this approach to deal with the problem
of efficient data recovery using the data MULEs approach. They consider
a sensor network that uses a data gathering tree. The main idea is
to use MULEs in order to retrieve lost data caused by failed sensors
by visiting the children of the failed sensors. They define the problem
as \textquotedblleft $(\alpha,\beta)-Mule$\textquotedblright , when
$\alpha$ is the amount of simultaneously failed sensors and $\beta$
is the number of MULEs in the network. They study several aspects
of the problem and suggest a variety of solutions to the problem on
the following graph topologies: complete graph and UDG (line, random
line, and grid). Yedidsion et al. \cite{Yedidsion2017} also study
the same problem of data gathering in sensor networks using MULE.
They consider the problem when there is one MULE in the network, and
only one sensor can fail at the same time. They study the problem
for several topologies, such as UDG on a line and general UDG. Further,
they also consider a failure probability for each sensor and study
the problem for a complete graph. For the UDG on a line, they present
an $O\left(n\right)$ time algorithm that solves the problem. For
a general UDG graph, they present an algorithm and two approaches
to analyze its approximation ratio, with a tradeoff between runtime
and approximation. The first approach achieves $\left(57+\epsilon\right)$
approximation and takes $O\left(n^{3}\log n\right)$ time, where $\epsilon$
comes from a TSP calculation (according to \cite{Rao1998}) for the
approximation ratio. The second approach reduces operations and achieves
$O\left(n^{2}\log n\right)$ time. However, it pays with a worse approximation
of $\left(114+\epsilon\right)$. Ashur \cite{Ashur2018} continued
their study on a general UDG, and came up with a very large constant-factor
approximation algorithm for the problem, which requires $O\left(n\log n\right)$
runtime.

Like Yedidsion et al. \cite{Yedidsion2017} and Ashur \cite{Ashur2018},
we will also focus on the MULE problem for a general UDG graph, when
there is one MULE, and only one sensor can fail at a time. We use
a different approach to solve the problem, and propose a primal-dual
algorithm that achieves a $\left(20+\varepsilon\right)$-approximation
for the MULE problem but has a higher runtime of $O\left(n^{3}\cdot\varDelta\left(G\right)\right)$.
We note that $\varDelta\left(G\right)$ is the maximum degree in the
graph, $n$ is the number of nodes, and $\varepsilon\rightarrow0$
as the sensor network spreads over a larger area.

\section{Preliminaries and Model\label{sec:Preliminaries-and-Model}}

\subsection{Problem definition}

\global\long\def\MULE{\left(1,1\right)-MULE-UDG}%

\begin{defn}[Data gathering tree]
Data gathering tree $T=\left(V,E_{T}\right)$ is a directed spanning
tree in graph $G=\left(V,E\right)$. Consider a spanning tree in $G$
and a root node $r$, $T$ is the rooted directed version of the spanning
tree, where from each node in the tree there is a directed path to
$r$.\label{def:Data-gathering-tree}
\end{defn}
In this work, we study the MULE problem for UDG, where there is one
MULE and only one sensor failure at a time, in the network. We will
refer to this problem as ``$\MULE$'', and it is defined as follows:
Consider a network of sensors embedded in the plane with the same
transmission range. Let $G=\left(V,E\right)$ be a connected UDG that
models the WSN, when the sensors are nodes, and an edge is defined
between a pair of nodes only if they are within the transmission range
of each other. Let $T$ be a data gathering tree which spans all the
nodes in $G$. $T$ is rooted at node $r$, and sensing data propagates
from leaf nodes to the root. One MULE is located at some node on the
graph. Let $m$ be the MULE location. The MULE can travel between
a set of nodes, where the cost function of a tour $t$ (referred to
as $c\left(t\right)$) is its total \textit{Euclidean distances}.
Let $\delta\left(v,T\right)$ be the set of children that $v$ has
in tree $T$. When sensor $v_{f}$ fails, all data gathered from its
children is lost, and the MULE must travel through all of its children,
in order to retrieve the lost data. The MULE performs a circular tour
that goes through all the nodes $\delta\left(v_{f},T\right)$ and
at the end returns to the starting position $m$. Let $t\left(m,\delta\left(v_{f},T\right)\right)$
be the shortest circular tour through all the nodes in $\left\{ m\right\} \cup\delta\left(v_{f},T\right)$,
that the MULE has to take. The objective will be to minimize the overall
MULE tour, regardless of which sensor fails. Formally, the problem
is defined as follows:
\begin{defn}[$\MULE$ problem\label{def:(1,1)-MULE-UDG}]
\ Given a connected UDG graph denoted by $G=\left(V,E\right)$,
our goal is to find the MULE location $m$ and the data gathering
tree $T$, which minimizes the objective function: 
\[
\min_{m,T}\left[\sum_{v_{f}\in V}\left[c\left(t\left(m,\delta\left(v_{f},T\right)\right)\right)\right]\right]\;.
\]
\end{defn}

\subsection{The primal-dual method}

The field of combinatorial optimization problems has been heavily
influenced by the field of Linear Programming (LP) since many combinatorial
problems can be described as linear programming problems. The \textsl{primal-dual}
method was first introduced by Dantzig, Ford, and Fulkerson \cite{Dantzig1956}
in 1956, in order to find an exact solution for linear programming
problems. The principle of the \textsl{primal-dual} method can also
be useful for finding an approximate solution in polynomial time for
NP-hard optimization problems, and this method is called ``\textsl{The
Primal-Dual Method for Approximation Algorithms}''. The main idea
is to construct a feasible solution for the LP problem (referred as
the ``\textsl{Primal LP Problem}'') from scratch, using a related
LP problem (referred as the ``\textsl{Dual LP Problem}'') that guides
us during the construction of the feasible solution. In our case,
the relationship between the \textsl{primal} and the \textsl{dual}
problems is such that, the \textsl{primal} solution serves as an upper
bound, while the \textsl{dual} solution serves as a lower bound for
the optimal solution OPT. The performance of a \textsl{primal-dual}
approximation algorithm is measured by the ratio between the \textsl{primal}
and the \textsl{dual} solutions it finds. In 1995, Goemans and Williamson
\cite{Goemans1995} proposed a general technique for designing approximation
algorithms based on this method. Their technique produces 2-approximation
algorithms for a broad set of graph problems that run in $O\left(n^{2}\log n\right)$
time. Their technique had a significant impact on our research. For
more details about the \textsl{primal-dual} method, we will refer
the reader to the papers \cite{Goemans1995,Williamson2002,Goemans1997}.

\subsection[MWCDS]{Minimum-weighted connected dominating set problem}

For a later use, we want to define the term MIS and the problem of
MWCDS.

Given a graph $G=\left(V,E\right)$, a set $S\subseteq V$ will be
called an \textit{Independent Set} (IS), if for each pair of nodes
$u,v\in S$ there is no edge $\left(u,v\right)\in E$. The set will
also be called a \textit{Maximal Independent Set} (MIS), if it is
an IS and there is no node $v\in V\setminus S$ which can be added
to the set while it will still remain independent.

Consider the graph $G=\left(V,E\right)$, a set of nodes $S\subseteq V$
will be called a \textsl{Dominating Set} (DS), if each node $v\in V$
is either in the set or is a neighbor of any node in the set. The
DS with the minimum cardinality is called a \textsl{Minimum} \textsl{Dominating
Set} (MDS). A set of nodes $S\subseteq V$ will be called \textsl{Connected
Set} (CS) if its induced graph in $G$ is connected. A set of nodes
$S\subseteq V$ will be called \textsl{Connected Dominating Set} (CDS)
if $S$ is both dominating set and connected. The CDS with the minimum
cardinality is called a \textsl{Minimum} \textsl{Connected Dominating
Set} (MCDS). Consider a weight function for each node in graph $G$,
the DS with the minimum total node weights is called a \textsl{Minimum-Weighted
Dominating Set} (MWDS) and the CDS with the minimum total node weights
is called a \textsl{Minimum-Weighted Connected Dominating Set} (MWCDS).
\begin{defn}[MWCDS problem\label{def:MWCDS}]
\ Given a connected graph denoted by $G=\left(V,E\right)$ and a
node weight function $w\left(v\right)\vcentcolon V\rightarrow R^{+}$,
our goal is to find a subset of nodes $S\subseteq V$, which is CDS
in $G$ and minimizes the objective function: 
\[
\min_{S\subseteq V}\left\{ \sum_{v\in S}w\left(v\right)\right\} \;.
\]
\end{defn}
Finding MCDS is an NP-complete problem for UDG (\cite{Clark1990}).
Hence the more general case of MWCDS is also an NP-complete problem.
In 2010 Erlebach and Matus \cite{Erlebach2010} presented a polynomial-time
algorithm that achieves approximation of $\left(4+\epsilon\right)$
for MWDS, and they combine it with a known algorithm for node-weighted
Steiner trees to achieve a $7.875$-approximation for MWCDS in UDG
graphs. In addition, independently, in 2011 Zou et al. \cite{Zou2011}
also introduced a polynomial-time algorithm that achieves $\left(4+\varepsilon\right)$-approximation
to the MWDS problem on UDG. Due to the method they were based on,
there is a trade-off between the approximation performance ratio and
the runtime. So they only proved that the runtime was polynomial,
but they did not analyze it accurately.

We wanted to put more emphasis on a reasonable runtime, and therefore
in this work, we chose to use a different method than the others:
we will use the \textit{Primal-Dual} method.

\section{Reduction of The Problem\label{sec:Reduction}}

\global\long\def\R{R}%

\global\long\def\RM{R_{M}}%

\global\long\def\dist#1#2{dist\left\{  #1,#2\right\}  }%

\global\long\def\LWF#1#2{LWF\left(#1,#2\right)}%

\global\long\def\LWB#1#2{LWB\left(#1,#2\right)}%

\global\long\def\WAC#1#2{WAC\left(#1,#2\right)}%

\global\long\def\Cone{C_{1}}%

\global\long\def\NR{N_{R}}%

\global\long\def\C{C}%

\global\long\def\PD{Primal-Dual}%

\global\long\def\MWCDSw{MWCDS}%

\global\long\def\MWCDSwIP{\MWCDSw-IP}%

\global\long\def\MWCDSwLP{\MWCDSw-LP}%

\global\long\def\MWCDSwD{\MWCDSw-D}%

\global\long\def\cost#1#2{Cost\left(#1,#2\right)}%

\subsection[Reduction to MWCDS]{Reduction to an approximate MWCDS problem}

The MULE problem we study seems to be an NP-hard problem, although
it has not yet been proven. It is interesting to note that with some
reasonable assumptions, the $\MULE$ problem can be reduced to the
problem of finding a MWCDS, when the node weight is proportional to
its \textsl{Euclidean distance} from the MULE. We assume that the
MULE has a fixed transmission range of $\RM$, within the range $0<\RM<0.3\cdot\R$,
where $\R$ is the sensor's transmission range. In our case, for UDG,
$\R=1$ and the range will be $0<\RM<0.3$. In other words, the MULE
does not need to reach the exact location of the sensor, but it needs
to get close enough to communicate with it.

Let $m$ be any location of the MULE, and $T$ be some gathering tree.
The MULE's tour for a failure $t\left(m,\delta\left(v,T\right)\right)$
of node $v$ can be represented as a series of nodes according to
their order $\left\{ m,u_{1},u_{2},u_{3},\ldots,u_{\left|\delta\left(v,T\right)\right|},m\right\} $.
We will consider the tour as three separate parts. $\LWF vT$ (``Long
Walk Forward'') is the cost of the MULE's walk from node $m$ to
node $u_{1}$, for the failure of node $v$ in gathering tree $T$.
$\LWB vT$ (``Long Walk Backward'') is the cost of the MULE's walk
from node $u_{\left|\delta\left(v,T\right)\right|}$ to node $m$.
$\WAC vT$ (``Walking Among Children'') is the cost of the MULE's
walk from node $u_{1}$ to node $u_{\left|\delta\left(v,T\right)\right|}$,
through the rest of $v$'s children $u_{2},u_{3},\ldots,u_{\left|\delta\left(v,T\right)\right|-1}$
in tree $T$.

Let $\cost Tm$ be the total tours cost $\mbox{\ensuremath{\sum_{v_{f}\in V}c\left(t\left(m,\delta\left(v_{f},T\right)\right)\right)}}$
for a MULE located at $m$ and for gathering tree $T$. Let us denote
the BackBone ($BB$) of gathering tree $T$ to be the induced undirected
subgraph of $T$, which contains all the non-leaf nodes, that is,
all nodes with degree greater than one. The BackBone is denoted by
$T_{BB}=\left(V_{BB},E_{BB}\right)$, where $V_{BB}$ is the set of
the BackBone's nodes and $E_{BB}$ is its set of edges. Let $OPT_{T^{\ast}}$
be the cost of the optimal solution for the $\MULE$ problem, and
$m^{\ast}$ be the optimal MULE location, where $T^{\ast}=\left(V,E_{T}^{\ast}\right)$
is the data gathering tree of the solution. That is, $\mbox{\ensuremath{OPT_{T^{\ast}}=\cost{T^{\ast}}{m^{\ast}}}}$.
Let $V_{BB}^{\ast}$ be the BackBone nodes of $T^{\ast}$. We denote
by $\dist mv$ the \textsl{Euclidean distance} between node $v$ and
the MULE location $m$.
\begin{prop}
If we consider the MULE's transmission range as a constant in the
range $0<\RM<0.3$, then for any data gathering tree $T$ and for
each node $v\in V$, we can bound $\WAC vT\leq\Cone$, where $\Cone=\left(1+\RM\right)\left(1+\pi\left(1+\left\lceil \nicefrac{\left(1-\RM\right)}{2\RM}\right\rceil \right)\right)$.\label{prop:WAC-bounded-by-constant}
\end{prop}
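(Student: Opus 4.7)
My plan is to upper-bound $\WAC vT$ by exhibiting one explicit walk from $u_1$ to $u_{|\delta(v,T)|}$ that passes within distance $\RM$ of every intermediate child of $v$. Since $\WAC vT$ is the length of the shortest such walk, an explicit bound on the constructed walk will prove the proposition. The key geometric observation is that, because $G$ is a UDG with sensor transmission range $\R=1$, every child of $v$ lies in the closed disk $B(v,1)$, so the task reduces to building a short curve inside $B(v,1)$ whose $\RM$-neighborhood contains all children.

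First I would cover $B(v,1)$ by the point $\{v\}$ together with $k=\lceil(1-\RM)/(2\RM)\rceil$ concentric circles $C_i$ of radius $r_i=2i\RM$ centered at $v$, for $i=1,\ldots,k$. By the definition of $k$, the outermost radius satisfies $r_k\ge 1-\RM$; consecutive radii differ by $2\RM$, so every point of $B(v,1)$ lies within distance $\RM$ of $\{v\}$ or of some $C_i$, giving $1+k$ coverage elements in total. I would then assemble the walk starting from $u_1$ as follows: travel to $v$; visit $C_1,C_2,\ldots,C_k$ in order, traversing each circle fully and connecting consecutive circles by a radial segment of length $2\RM$; finally travel from the point on $C_k$ closest to $u_{|\delta(v,T)|}$ out to $u_{|\delta(v,T)|}$. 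Because the curve passes through $v$ and fully traverses every $C_i$, every child lies within $\RM$ of some point of the walk.

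The length of the walk decomposes into three parts: the circle traversals of total length $\sum_{i=1}^{k} 2\pi r_i = 2\pi\RM\, k(k+1)$; the radial transitions of total length $2k\RM$; and the two endpoint segments, bounded by $\dist{u_1}{v}\le 1$ for the entry and $\bigl|\dist{u_{|\delta(v,T)|}}{v}-r_k\bigr|\le 1+\RM$ for the exit. The crucial algebraic ingredient is the inequality $2\RM k\le 1+\RM$, equivalently $k\le (1+\RM)/(2\RM)$, which is immediate from $k=\lceil(1-\RM)/(2\RM)\rceil$; it collapses the circle sum to $\pi(1+\RM)(k+1)$, after which the remaining terms must be shown to fit inside the additive slack of $\Cone=(1+\RM)(1+\pi(1+k))$.

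I expect the main obstacle to be booking the two endpoint segments precisely, since the naive triangle-inequality bound on entry plus exit already uses about $2(1+\RM)$, whereas $\Cone$ only leaves an additive $(1+\RM)$ beyond the circle-plus-transition terms. I would resolve this by letting the MULE orient the walk adaptively: pick the direction of the spiral (outward from $v$ or inward to $v$) and the entry/exit points on $C_k$ based on the positions of $u_1$ and $u_{|\delta(v,T)|}$, so that one of the endpoint segments is absorbed by a radial transition of the spiral itself (costing $\le 2\RM$ rather than a fresh $1+\RM$). After that, the required inequality $2\pi\RM\, k(k+1) + 2k\RM + (1+\RM) \le (1+\RM)(1+\pi(1+k))$ reduces to an elementary algebraic check using $2\RM k\le 1+\RM$, and the constraint $0<\RM<0.3$ gives the geometric room needed to make the adaptive choice well-defined.
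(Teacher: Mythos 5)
Your core construction is the same as the paper's: cover the unit disk around $v$ by $k=\left\lceil (1-\RM)/(2\RM)\right\rceil$ concentric circles of radii $2\RM,4\RM,\dots$, traverse them with radial transitions of length $2\RM$, and close the estimate with $2k\RM\le 1+\RM$. The paper, however, charges \emph{only} the spiral to $\WAC vT$: its walk is declared to begin at the center $v$ (the approach to $v$ is effectively part of the $\dist mv+1$ bound on $\LWF vT$ used in Lemma \ref{lem:OPT-bounded-from-above}), and for the spiral alone the arithmetic closes exactly, $2k\RM\bigl(1+\pi(1+k)\bigr)\le(1+\RM)\bigl(1+\pi(1+k)\bigr)=\Cone$.

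The gap in your version is the inequality you defer to an ``elementary algebraic check'': $2\pi\RM k(k+1)+2k\RM+(1+\RM)\le(1+\RM)\bigl(1+\pi(1+k)\bigr)$ is false for part of the admissible range of $\RM$. It is equivalent to $2k\RM\bigl(\pi(k+1)+1\bigr)\le\pi(1+\RM)(k+1)$, and since $2k\RM$ can be arbitrarily close to $1+\RM$ (take $\RM\to 0.2^{-}$, so $k=3$ and $2k\RM\to 1.2=1+\RM$), the left side approaches $(1+\RM)\bigl(\pi(k+1)+1\bigr)$, which exceeds the right side by $(1+\RM)$. Concretely, at $\RM=0.19$ one gets $k=3$, circles plus transitions $\approx 15.47$, and adding your residual endpoint segment of $1+\RM=1.19$ gives $\approx 16.66$, while $\Cone\approx 16.14$. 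Your adaptive-orientation idea removes only \emph{one} of the two endpoint legs, so the surviving $(1+\RM)$ term still breaks the bound; no choice of spiral direction fixes this, because in the tight regime the spiral already consumes essentially all of $\Cone$. To land on the stated constant you must not book either endpoint segment against $\WAC vT$: exhibit the tour $m\to v\to\text{spiral}\to m$ and charge the two straight legs to $\LWF vT$ and $\LWB vT$ (which is how the constant $\C=2+\Cone$ is assembled in Lemma \ref{lem:OPT-bounded-from-above}), or else prove the proposition with a slightly larger constant than $\Cone$.
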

\begin{proof}
Note that all the children of any node are scattered over a fixed
area due to the UDG property, a disk area of radius $1$ (i.e. the
node transmission area). One extreme possibility for the MULE's tour
will be to reach a failed node with precisely one child. In this case,
the MULE has no need to walk in the transmission area around the node,
and therefore we have $\WAC vT=0$. As the nodes density increases
in the graph, the children of a node are increasingly filling the
transmission area around it, and the MULE will have to walk a lot
more to reach all of them. Therefore, the extreme case on the other
side will be when the MULE has to walk \textbf{all over the transmission
area} of the failed node, in order to reach all of its children. Since
we assumed that the MULE has a fixed transmission range $\RM$, we
want to show that there is a fixed-length walk during which the MULE
approaches each child at a distance of $\RM$.

The description of the walk is as follows. First, the MULE reaches
the center of the area, where the failed node is located. From there,
it moves in increasing circles, until full coverage of the area (as
described in Figure \ref{fig:WAC-walk}). The first circle is with
radius $2\RM$, and for all the rest, each circle has a radius that
grows by $2\RM$ from its predecessor. Each time the MULE completes
an entire circle, it moves $2\RM$ units away in the direction of
the radius and starts the walk of the next circle. In order to cover
the whole transmission area, the amount of circles required is $\left\lceil \nicefrac{\left(\R-\RM\right)}{2\RM}\right\rceil $.

Calculating the total length of the MULE walk consists of walking
in circles $\mbox{\ensuremath{\sum_{i=1}^{\left\lceil \nicefrac{\left(\R-\RM\right)}{2\RM}\right\rceil }2\pi\cdot\left(i\cdot2\RM\right)}}$
and walking between circuits $\mbox{\ensuremath{\left\lceil \nicefrac{\left(\R-\RM\right)}{2\RM}\right\rceil \cdot2\RM}}$.
Let us define $\mbox{\ensuremath{\NR\triangleq\left\lceil \nicefrac{\left(\R-\RM\right)}{2\RM}\right\rceil }}$,
we get: 
\[
\begin{array}{rl}
\WAC vT\leq & \NR\cdot2\RM+\sum_{i=1}^{\NR}2\pi\cdot\left(i\cdot2\RM\right)\\
= & \NR\cdot2\RM+\frac{\NR}{2}\left(2\pi\cdot2\RM+2\pi\cdot\NR\cdot2\RM\right)\\
= & 2\RM\cdot\NR\left(1+\pi\left(1+\NR\right)\right)\\
= & 2\RM\cdot\left\lceil \nicefrac{\left(\R-\RM\right)}{2\RM}\right\rceil \left(1+\pi\left(1+\left\lceil \nicefrac{\left(\R-\RM\right)}{2\RM}\right\rceil \right)\right)\\
\leq & \left(1+\RM\right)\left(1+\pi\left(1+\left\lceil \nicefrac{\left(1-\RM\right)}{2\RM}\right\rceil \right)\right)\\
= & \Cone\;.
\end{array}
\]
The second inequality derives from the rounding up operation and since
$R=1$.
\end{proof}
\begin{figure}
\begin{centering}
\includegraphics[scale=0.2]{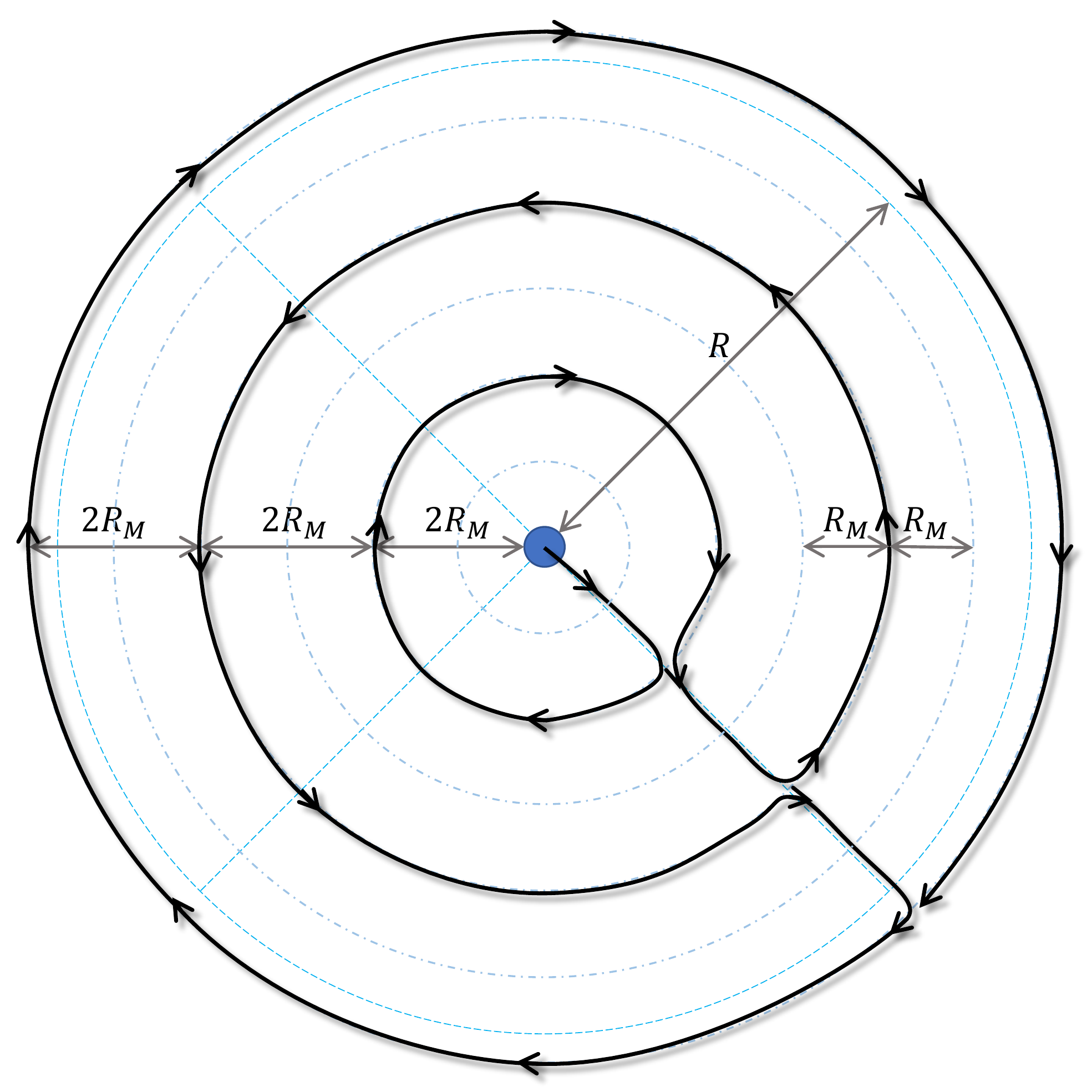}
\par\end{centering}
\caption{\label{fig:WAC-walk}Bounded walk among children of a failed sensor.}
\end{figure}

\begin{lem}
Let $\left(2\cdot\dist mv+\C\right)$ be the weight of node $v$,
where $\C$ is a positive constant dependent on $\RM$. Then $\cost Tm$
can be bounded from \textbf{above} by the total weight of $T's$ BackBone
nodes.\label{lem:OPT-bounded-from-above}
\end{lem}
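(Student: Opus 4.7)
The plan is to decompose each tour cost $c\left(t\left(m,\delta\left(v_{f},T\right)\right)\right)$ into its three components $\LWF{v_{f}}{T}$, $\WAC{v_{f}}{T}$, and $\LWB{v_{f}}{T}$, bound each piece separately, and then sum over all failed sensors $v_{f}\in V$. The first observation I would use is that a leaf of $T$ has no children, so $\delta\left(v_{f},T\right)=\emptyset$ and that term contributes zero; only the non-leaf nodes, i.e.\ the BackBone nodes $V_{BB}$, can contribute to $\cost Tm$. This immediately collapses the outer sum from $V$ to $V_{BB}$, which is the form demanded by the lemma.

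Next, for each $v_{f}\in V_{BB}$, I would bound the two ``long-walk'' legs of the tour. Let $u_{1}$ and $u_{\left|\delta\left(v_{f},T\right)\right|}$ denote the first and last children visited by the MULE. Both are children of $v_{f}$ in $T$ and therefore neighbors of $v_{f}$ in the UDG, so $\dist{v_{f}}{u_{1}}\leq\R=1$ and likewise $\dist{v_{f}}{u_{\left|\delta\left(v_{f},T\right)\right|}}\leq 1$. The MULE needs to travel at most $\dist m{u_{i}}$ to reach child $u_{i}$, so invoking the triangle inequality gives
\[
\LWF{v_{f}}{T}+\LWB{v_{f}}{T}\;\leq\;\dist m{u_{1}}+\dist m{u_{\left|\delta\left(v_{f},T\right)\right|}}\;\leq\;2\dist m{v_{f}}+2.
\]
The middle segment $\WAC{v_{f}}{T}$ is handled directly by the already-proved Proposition \ref{prop:WAC-bounded-by-constant}, which supplies $\WAC{v_{f}}{T}\leq\Cone$ uniformly in $T$, $v_{f}$, and $m$.

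Combining the two estimates yields $c\left(t\left(m,\delta\left(v_{f},T\right)\right)\right)\leq 2\dist m{v_{f}}+\left(2+\Cone\right)$ for every $v_{f}\in V_{BB}$. Defining $\C\triangleq 2+\Cone$, which is positive and depends only on $\RM$, and summing over the BackBone nodes then produces
\[
\cost Tm\;\leq\;\sum_{v\in V_{BB}}\bigl(2\dist mv+\C\bigr)\;=\;\sum_{v\in V_{BB}}w(v),
\]
which is exactly the upper bound claimed by the lemma.

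I do not anticipate a deep obstacle: the argument reduces to a single triangle inequality combined with a direct quotation of Proposition \ref{prop:WAC-bounded-by-constant}. The only mildly delicate point is making the bookkeeping about ``nodes with at least one child'' match the BackBone definition (nodes of undirected degree greater than one); these two sets can differ at the root when the root has a single child, but this edge case can be absorbed into a slightly larger $\C$ and does not affect the structure of the bound.
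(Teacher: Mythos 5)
Your proposal is correct and follows essentially the same route as the paper: collapse the sum to the BackBone nodes, bound the two long-walk legs by $\dist mv+1$ each via the triangle inequality and the unit-disk property, invoke Proposition \ref{prop:WAC-bounded-by-constant} for the middle segment, and set $\C=2+\Cone$. Your closing remark about a root with a single child is a fair observation on an edge case the paper itself glosses over (and only addresses later, in the proof of Theorem \ref{thm:Gathering-tree-approximation-by-MWCDS}, by re-rooting), but it does not change the argument.
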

\begin{proof}
Since a leaf node has no children in the tree $T$, the MULE does
not go on tour for the failure of a leaf node. Therefore, the only
nodes for which the tour cost $c\left(t\left(m,\delta\left(v_{f},T\right)\right)\right)$
is greater than $0$ are the BackBone nodes of $T$. So, we can write:
\begin{equation}
\cost Tm\triangleq\sum_{v_{f}\in V}c\left(t\left(m,\delta\left(v_{f},T\right)\right)\right)=\sum_{v\in V_{BB}}c\left(t\left(m,\delta\left(v,T\right)\right)\right)\;.\label{eq:T-R-P-2}
\end{equation}
The MULE's tour consists of $3$ different walks, $\LWF vT$, $\LWB vT$
and $\WAC vT$. Note that from the triangle inequality, and since
a child of node $v$ can be at most one unit away from $v$, we get:
$\mbox{\ensuremath{\LWF vT\leq\dist mv+1}}$ and $\mbox{\ensuremath{\LWB vT\leq\dist mv+1}}$.
Combining this with Proposition \ref{prop:WAC-bounded-by-constant},
we will get:
\begin{equation}
\begin{array}{rl}
c\left(t\left(m,\delta\left(v,T\right)\right)\right)= & \LWF vT+\WAC vT+\LWB vT\\
\leq & \left(\dist mv+1\right)+\Cone+\left(\dist mv+1\right)\\
= & 2\cdot\dist mv+2+\Cone\;.
\end{array}\label{eq:T-R-P-3}
\end{equation}
Let us define $\C\triangleq2+\Cone$. Thus, in conclusion:
\begin{equation}
\begin{array}{rl}
\cost Tm\stackrel{\left(\ref{eq:T-R-P-2}\right)}{=} & \sum_{v\in V_{BB}}c\left(t\left(m,\delta\left(v,T\right)\right)\right)\\
\stackrel{\left(\ref{eq:T-R-P-3}\right)}{\leq} & \sum_{v\in V_{BB}}\left(2\cdot\dist mv+\C\right)\;,
\end{array}\label{eq:R-T-M-1}
\end{equation}
where, $\C=3+\RM+\pi\left(1+\RM\right)\left(1+\left\lceil \nicefrac{\left(1-\RM\right)}{2\RM}\right\rceil \right)$
(Figure \ref{fig:C(Rm)}).
\end{proof}
\begin{figure}
\begin{centering}
\includegraphics[scale=0.5]{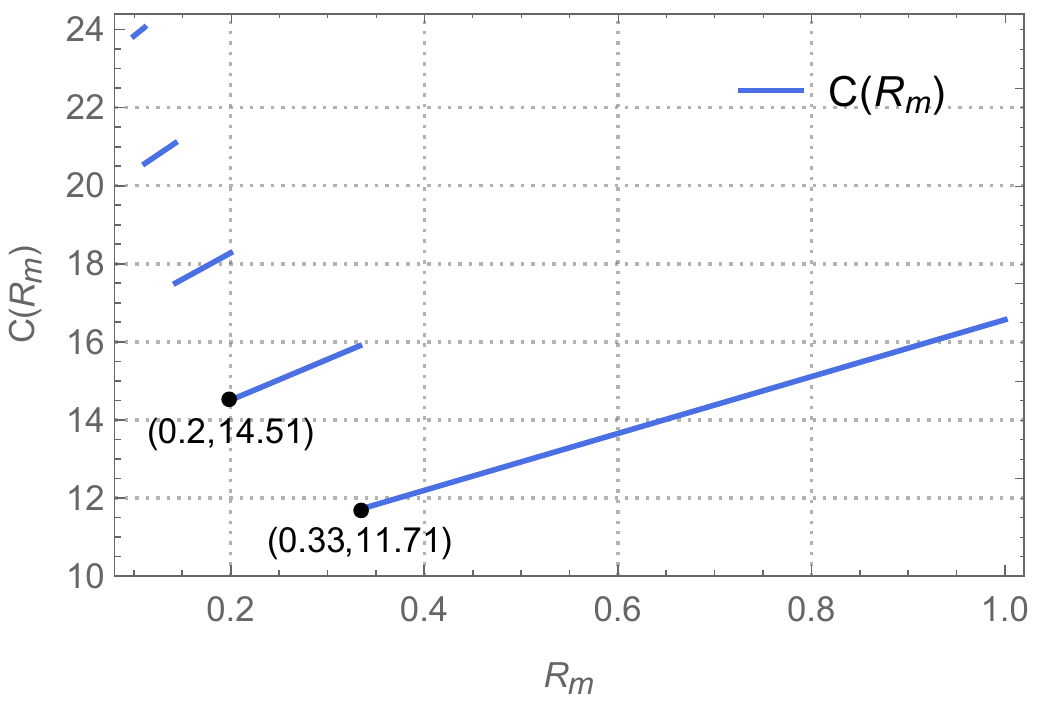}
\par\end{centering}
\caption{\label{fig:C(Rm)}Constant $C$ as a function of $\protect\RM$.}
\end{figure}

\begin{lem}
Let $\left(2\cdot\dist mv-2.6\right)$ be the weight of node $v$.
Then $\cost Tm$ can be bounded from \textbf{below} by the total weight
of $T's$ BackBone nodes.\label{lem:OPT-bounded-from-below}
\end{lem}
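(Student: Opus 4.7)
The plan is to follow the structure of Lemma \ref{lem:OPT-bounded-from-above}, but with all inequalities reversed, and to rely on only two basic facts: tour costs are non-negative, and every child of a backbone node is within one unit of it by the UDG property. My first step would be to discard the contributions from non-backbone vertices. Since a leaf $v$ has $\delta(v,T)=\emptyset$ and hence $c\left(t\left(m,\delta\left(v,T\right)\right)\right)=0$, and any non-backbone root (which would have a single child) contributes a non-negative term, I would obtain
\begin{equation*}
\cost Tm \;\geq\; \sum_{v\in V_{BB}} c\left(t\left(m,\delta\left(v,T\right)\right)\right),
\end{equation*}
reducing the lemma to a per-node bound $c\left(t\left(m,\delta\left(v,T\right)\right)\right)\geq 2\cdot\dist mv-2.6$ for each $v\in V_{BB}$.

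The key step is this per-node bound, which I would prove by a short geometric observation. First I would note that any closed walk starting and ending at $m$ that reaches a point at Euclidean distance $d$ from $m$ has total length at least $2d$: split the walk at the farthest-reached point and apply the triangle inequality to each of the two halves. Since $v\in V_{BB}$, $v$ has at least one child $u_1\in\delta(v,T)$, and the tour must approach $u_1$ to within the MULE's transmission range $\RM$. Thus the tour reaches a point at distance at least $\dist m{u_1}-\RM$ from $m$, giving $c(t)\geq 2\bigl(\dist m{u_1}-\RM\bigr)$. The UDG edge $(v,u_1)$ yields $\dist v{u_1}\leq\R=1$, so by the triangle inequality $\dist m{u_1}\geq\dist mv-1$. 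Substituting and using $\RM<0.3$,
\begin{equation*}
c\left(t\left(m,\delta\left(v,T\right)\right)\right) \;\geq\; 2\cdot\dist mv-2-2\RM \;\geq\; 2\cdot\dist mv-2.6.
\end{equation*}

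I do not anticipate any serious obstacle; the argument is actually cleaner than the upper bound in Lemma \ref{lem:OPT-bounded-from-above}, because there is no need for the spiral-walk construction used to bound $\WAC vT$ — on the lower-bound side a single child suffices. The only mild subtlety is that $2\cdot\dist mv-2.6$ can be negative for backbone nodes very close to $m$, but then the per-node inequality is trivially implied by $c(t)\geq 0$, so summing over $v\in V_{BB}$ still delivers the claimed lower bound on $\cost Tm$.
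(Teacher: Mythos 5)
Your proposal is correct and follows essentially the same route as the paper's proof: both reduce to the backbone nodes, use the fact that a backbone node has at least one child within one unit (UDG), and absorb the MULE range via $\RM<0.3$ to reach the constant $2.6$. Your per-node bound via the ``closed walk reaching distance $d$ has length at least $2d$'' observation is a slightly cleaner and more rigorous justification than the paper's case analysis of the extremal single-child configuration, but it is the same underlying idea.
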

\begin{proof}
Note that the shortest tour to a particular failed node $v$ is in
case the node has exactly one child in $T$, which is located as close
as possible to the MULE. Since this is only one child, then $\WAC vT=0$.
We will distinguish between two possible cases: when the MULE is located
within the transmission area of the failed node $v$, and when it
is located outside it. In the first case, $\dist mv\leq1$ and the
shortest possible tour will have a cost of $0$ (in case the MULE
is located such that the only child is within its transmission range).
So we have $\mbox{\ensuremath{\LWF vT=\LWB vT=0\geq\dist mv-1.3}}$.
In the second case, where $\dist mv>1$, the location of the child
that minimizes its distance from the MULE will be directly between
the node $v$ and the MULE, at the edge of $v$'s transmission range
(exactly one unit away from $v$). This implies that $\mbox{\ensuremath{\LWF vT=\LWB vT=\dist mv-1-\RM}}$.
Recall that $\RM<0.3$. Therefore we can bound the MULE's tour from
below, as follows: 
\begin{equation}
c\left(t\left(m,\delta\left(v,T\right)\right)\right)=\LWF vT+\WAC vT+\LWB vT>2\cdot\left(\dist mv-1.3\right)\;.\label{eq:T-R-P-5}
\end{equation}

Combine it with Equation \ref{eq:T-R-P-2} of Lemma \ref{lem:OPT-bounded-from-above},
we get:
\[
\cost Tm\stackrel{\left(\ref{eq:T-R-P-2}\right)}{=}\sum_{v\in V_{BB}}c\left(t\left(m,\delta\left(v,T\right)\right)\right)\stackrel{\left(\ref{eq:T-R-P-5}\right)}{>}\sum_{v\in V_{BB}}\left(2\cdot\dist mv-2.6\right)\;.
\]
\end{proof}
\begin{thm}
Consider the optimal solution for the MWCDS problem {[}\ref{def:MWCDS}{]}
for node weights $\left(2\cdot\dist{m^{\ast}}v+\C\right)$, where
$\C$ is a positive constant dependent on $\RM$ and $m^{\ast}$ is
the optimal MULE location for the $\MULE$ problem. Let $V_{BB}^{\ast}$
be the BackBone nodes of the optimal gathering tree. If the condition
$\mbox{\ensuremath{\underset{v\in V_{BB}^{\ast}}{Average}\left[\dist{m^{\ast}}v\right]>1.3}}$
is met, then a data gathering tree whose BackBone consists of the
optimal solution to this specific MWCDS problem is an $\left(1+\varepsilon\right)$-approximate
solution for the $\mbox{\ensuremath{\MULE}}$ problem, where as the
area of the BackBone increases, the smaller the $\varepsilon$ (i.e.
$\varepsilon\rightarrow0$ for $\mbox{\ensuremath{\underset{v\in V_{BB}^{\ast}}{Average}\left[\dist{m^{\ast}}v\right]\rightarrow\infty}}$).\label{thm:Gathering-tree-approximation-by-MWCDS}
\end{thm}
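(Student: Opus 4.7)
The plan is to sandwich $OPT_{T^{\ast}}$ using Lemmas~\ref{lem:OPT-bounded-from-above} and \ref{lem:OPT-bounded-from-below}, exploiting the fact that the non-leaf nodes of any spanning tree of $G$ form a CDS of $G$. First I would observe that $V_{BB}^{\ast}$ is itself a connected dominating set of $G$: the backbone of any spanning tree is a subtree (hence connected), and every pruned leaf is dominated by its parent in the backbone. Writing $S^{\ast}$ for the optimal MWCDS with weights $w(v) = 2\cdot\dist{m^{\ast}}{v} + \C$, the optimality of $S^{\ast}$ over feasible CDSs therefore gives
\[
\sum_{v \in S^{\ast}} w(v) \;\leq\; \sum_{v \in V_{BB}^{\ast}} w(v).
\]

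Next I would construct an explicit candidate: a gathering tree $T$ obtained by taking any spanning tree of the induced subgraph $G[S^{\ast}]$, attaching every vertex outside $S^{\ast}$ as a leaf of one of its neighbors in $S^{\ast}$, and orienting toward an arbitrary root. Every non-leaf node of $T$ lies in $S^{\ast}$, so applying Lemma~\ref{lem:OPT-bounded-from-above} with the MULE located at $m^{\ast}$ and using positivity of the weights produces
\[
\cost{T}{m^{\ast}} \;\leq\; \sum_{v \in V_{BB}(T)} w(v) \;\leq\; \sum_{v \in S^{\ast}} w(v) \;\leq\; \sum_{v \in V_{BB}^{\ast}} w(v).
\]
Applying Lemma~\ref{lem:OPT-bounded-from-below} to the optimal pair $(T^{\ast}, m^{\ast})$ delivers the matching lower bound
\[
OPT_{T^{\ast}} \;>\; \sum_{v \in V_{BB}^{\ast}}\bigl(2\cdot\dist{m^{\ast}}{v} - 2.6\bigr) \;=\; \sum_{v \in V_{BB}^{\ast}} w(v) \;-\; |V_{BB}^{\ast}|(\C + 2.6).
\]

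To convert this sandwich into the approximation ratio, I would set $k = |V_{BB}^{\ast}|$ and $A = \underset{v\in V_{BB}^{\ast}}{Average}\left[\dist{m^{\ast}}{v}\right]$, so that $\sum_{v \in V_{BB}^{\ast}} w(v) = k(2A + \C)$. Dividing,
\[
\frac{\cost{T}{m^{\ast}}}{OPT_{T^{\ast}}} \;<\; \frac{k(2A + \C)}{k(2A + \C) - k(\C + 2.6)} \;=\; 1 + \frac{\C + 2.6}{2A - 2.6},
\]
which is the desired $(1 + \varepsilon)$-bound with $\varepsilon := (\C + 2.6)/(2A - 2.6)$. The hypothesis $A > 1.3$ is precisely what keeps the denominator positive, and $\varepsilon \rightarrow 0$ as $A \rightarrow \infty$, matching the claim that the approximation improves as the backbone spreads over a larger area.

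The trickiest step is the middle one: I need the cost of the MWCDS-derived tree to be controlled by the weight of $V_{BB}^{\ast}$ (not merely by the weight of $S^{\ast}$), because the lower-bound side of the sandwich is phrased in terms of $V_{BB}^{\ast}$. This is where the two threads meet, and it relies on combining the MWCDS optimality of $S^{\ast}$ against the feasible competitor $V_{BB}^{\ast}$ with the observation that $V_{BB}(T) \subseteq S^{\ast}$. A subtle point to verify is that some vertices of $S^{\ast}$ may actually end up as leaves in $T$, shrinking its backbone; since all weights are positive this only strengthens the upper bound, so the chain of inequalities survives intact.
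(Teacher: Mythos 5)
Your proposal is correct and follows essentially the same route as the paper: both arguments observe that any tree's backbone is a CDS, play the MWCDS optimality of $S^{\ast}$ off against the feasible competitor $V_{BB}^{\ast}$, sandwich $OPT_{T^{\ast}}$ between the bounds of Lemmas~\ref{lem:OPT-bounded-from-above} and \ref{lem:OPT-bounded-from-below}, and arrive at the identical $\varepsilon = (\C+2.6)/\bigl(2A-2.6\bigr)$. Your version is in fact slightly tighter in one respect — you explicitly build the candidate tree from $S^{\ast}$ and apply Lemma~\ref{lem:OPT-bounded-from-above} to it, a step the paper leaves implicit in its final chain of inequalities.
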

\begin{proof}
We will denote $OPT_{cds^{\ast}}$ to be the total weight of the optimal
solution for the MWCDS problem, where $cds^{\ast}$ is the optimal
CDS itself, which achieves this weight. If the nodes weights are $\left(2\cdot\dist{m^{\ast}}v+\C\right)$,
then according to the problem definition {[}\ref{def:MWCDS}{]}, we
have: 
\begin{equation}
OPT_{cds^{\ast}}=\sum_{v\in cds^{\ast}}\left(2\cdot\dist{m^{\ast}}v+\C\right)\;.\label{eq:T-R-P-7}
\end{equation}

From the other side, of the $\MULE$ problem, let us define the weight
of a gathering tree $T$ (referred as $Weight_{BB}\left(T\right)$),
to be the total weight of its BackBone nodes.

We want to claim that the BackBone of $T$ is a CDS. Therefore we
are required to show that the BackBone is both dominating and connected.
$T$ is a spanning tree in $G$ by definition {[}\ref{def:Data-gathering-tree}{]}.
Thus, it is connected and contains all the nodes. Since $T$ is connected,
each node in $T$ is either a leaf or a BackBone node, and if it is
a leaf, then it has a neighbor in the BackBone. So, the BackBone is
a DS. $T$ is connected by definition {[}\ref{def:Data-gathering-tree}{]}.
Therefore, if we will prune all of its leaves it will still remain
connected. Hence, the BackBone nodes are also a CDS.

We assume that the root node is not a leaf. Otherwise, we can always
replace $T^{\ast}$ with such a tree, without affecting the approximation
ratio. Let $r$ and $v$ be the root node and one of its neighbors,
respectively. If $r$ is a leaf, we can replace $v$ to be the root
instead. Thus we will create a new tree $T'$, for which $\LWF v{T'}=\LWB v{T'}=0$
and $\WAC v{T'}\leq\WAC v{T^{\ast}}+4$. The worst case of adding
$4$ to the total tour cost of $T'$ does not affect the approximation
ratio.

The BackBone nodes of $T$ is a CDS, and the root is one of these
nodes. If the total weight of the BackBone is the weight of the tree,
then it can be said that, finding a tree with the lowest weight is
equivalent to finding a MWCDS in the graph. Formally, the following
equality holds: 
\begin{equation}
\begin{array}{rl}
\min\limits _{T}\left\{ Weight_{BB}\left(T\right)\right\} = & \min\limits _{T}\left\{ \sum\limits _{v\in V_{BB}}\left(2\cdot\dist{m^{\ast}}v+\C\right)\right\} \\
\begin{array}{c}
=\\
\\
\end{array} & \begin{array}{l}
\min\limits _{S\subseteq V}\left\{ Weight\left(S\right)\right\} \\
s.t\:S\:is\:a\:CDS
\end{array}\\
= & \sum\limits _{v\in cds^{\ast}}\left(2\cdot\dist{m^{\ast}}v+\C\right)\;.
\end{array}\label{eq:T-R-P-8}
\end{equation}

A gathering tree with the lowest weight has a lower weight than any
other tree, in particular $T^{\ast}$. Note that these are not necessarily
the same gathering trees. Therefore: 
\begin{equation}
\begin{array}{rl}
\min\limits _{T}\left\{ Weight_{BB}\left(T\right)\right\} = & \min\limits _{T}\left\{ \sum\limits _{v\in V_{BB}}\left(2\cdot\dist{m^{\ast}}v+\C\right)\right\} \\
\leq & Weight_{BB}\left(T^{\ast}\right)\\
= & \sum\limits _{v\in V_{BB}^{\ast}}\left(2\cdot\dist{m^{\ast}}v+\C\right)\;.
\end{array}\label{eq:T-R-P-9}
\end{equation}

From Lemma \ref{lem:OPT-bounded-from-above} and Lemma \ref{lem:OPT-bounded-from-below}
we have seen that $OPT_{T^{\ast}}$ can be bounded as follows: 
\[
\sum_{v\in V_{BB}^{\ast}}\left(2\cdot\dist{m^{\ast}}v-2.6\right)<OPT_{T^{\ast}}\leq\sum_{v\in V_{BB}^{\ast}}\left(2\cdot\dist{m^{\ast}}v+\C\right)\;.
\]

Now, we will calculate how much the upper bound is greater than the
optimal solution. 
\begin{equation}
\begin{array}{rl}
\frac{\sum_{v\in V_{BB}^{\ast}}\left(2\cdot\dist{m^{\ast}}v+\C\right)}{OPT_{T^{\ast}}}< & \frac{\sum_{v\in V_{BB}^{\ast}}\left(2\cdot\dist{m^{\ast}}v+\C\right)}{\sum_{v\in V_{BB}^{\ast}}\left(2\cdot\dist{m^{\ast}}v-2.6\right)}\\
= & \frac{\sum_{v\in V_{BB}^{\ast}}2\cdot\dist{m^{\ast}}v+\left|V_{BB}^{\ast}\right|\C}{\sum_{v\in V_{BB}^{\ast}}2\cdot\dist{m^{\ast}}v-\left|V_{BB}^{\ast}\right|2.6}\\
= & 1+\frac{1}{\frac{2}{\left(\C+2.6\right)}\cdot\left(\frac{\sum_{v\in V_{BB}^{\ast}}\dist{m^{\ast}}v}{\left|V_{BB}^{\ast}\right|}-1.3\right)}\\
= & 1+\frac{1}{\frac{2}{\left(\C+2.6\right)}\cdot\left(\underset{v\in V_{BB}^{\ast}}{Average}\left[\dist{m^{\ast}}v\right]-1.3\right)}\\
= & 1+\varepsilon\;,
\end{array}\label{eq:T-R-P-10}
\end{equation}
where, $\varepsilon\triangleq\left(\frac{2}{\left(\C+2.6\right)}\cdot\left(\underset{v\in V_{BB}^{\ast}}{Average}\left[\dist{m^{\ast}}v\right]-1.3\right)\right)^{-1}$,
and hence $\varepsilon\rightarrow0$ for $\mbox{\ensuremath{\underset{v\in V_{BB}^{\ast}}{Average}\left[\dist{m^{\ast}}v\right]\rightarrow\infty}}$.
That is, the $\varepsilon$ error decreases as the average distance
of the BackBone nodes from the MULE, increases. Also, as the graph
will be deployed over a wider area. We should note that for correct
approximation, the condition $\mbox{\ensuremath{\underset{v\in V_{BB}^{\ast}}{Average}\left[\dist{m^{\ast}}v\right]>1.3}}$
must be met.

Finally, we will summarize everything in one equation: 
\begin{equation}
\begin{array}{rl}
OPT_{cds^{\ast}}\stackrel{\left(\ref{eq:T-R-P-7}\right)}{=} & \sum_{v\in cds^{\ast}}\left(2\cdot\dist{m^{\ast}}v+\C\right)\\
\stackrel{\left(\ref{eq:T-R-P-8}\right)}{=} & \min\limits _{T}\left\{ \sum_{v\in V_{BB}}\left(2\cdot\dist{m^{\ast}}v+\C\right)\right\} \\
\stackrel{\left(\ref{eq:T-R-P-9}\right)}{\leq} & \sum_{v\in V_{BB}^{\ast}}\left(2\cdot\dist{m^{\ast}}v+\C\right)\\
\stackrel{\left(\ref{eq:T-R-P-10}\right)}{<} & \left(1+\varepsilon\right)OPT_{T^{\ast}}\;.
\end{array}\label{eq:R-T-M-2}
\end{equation}
\end{proof}

\subsection[Primal-dual for MWCDS]{The primal-dual method for MWCDS problem\label{subsec:Primal-dual-for-MWCDS}}

In this section, we will see how the \textsl{$\PD$} method can assist
us in finding an approximate solution to the MWCDS problem with which
we are dealing.

Consider the following Integer Program (IP) problem:
\[
\begin{array}{rlc}
{\displaystyle \min_{\bar{x}}} & {\displaystyle \sum_{v\in V}w\left(v\right)\cdot x_{v}}\\
s.t. & {\displaystyle \sum_{v\in N\left(S\right)}x_{v}\geq f\left(S\right)} & \forall S\vcentcolon S\subseteq V,S\neq\emptyset\\
 & x_{v}\in\left\{ 0,1\right\} 
\end{array}\qquad\left(\MWCDSwIP\right)\;.
\]
Let $\bar{x}$ be a vector of size $\left|V\right|$, where $x_{v}$
are its components, and let $w\left(v\right)$ be some weight function
$\mbox{\ensuremath{w\left(\cdot\right)\colon V\rightarrow R^{+}}}$.
We denote by $N\left(S\right)$ the set of nodes which neighbors to
set $S\subseteq V$. Formally: $\mbox{\ensuremath{N\left(S\right)\triangleq\left\{ u\colon\exists\,e_{u,v}\in E\colon v\in S,u\in V\setminus S\right\} }}$.
$f\left(S\right)$ is an indicator function which returns $0$ if
set $S\subseteq V$ is a DS. Formally: 
\begin{equation}
f\left(S\right)\triangleq\begin{cases}
0 & for\;N\left(S\right)\cup S=V\\
1 & otherwise
\end{cases}\;\begin{array}{c}
\,\\
.
\end{array}\label{eq:f}
\end{equation}
This IP optimization problem, denoted by $\MWCDSwIP$, is to find
a feasible vector $\bar{x}$ which minimizes the sum $W_{\bar{x}}\triangleq\sum_{v\in V}w\left(v\right)\cdot x_{v}$.

We will denote by $\MWCDSwLP$ the Linear Program (LP) relaxation
of the $\MWCDSwIP$ problem, where we relax the integrality constraint
$x_{v}\in\left\{ 0,1\right\} $ to be $x_{v}\geq0$. This is our \textsl{primal}
problem.

The \textsl{dual} problem for $\MWCDSwLP$ will be referred to as
$\MWCDSwD$, and is defined as follows: 
\[
\begin{array}{rlc}
{\displaystyle \max_{\bar{y}}} & {\displaystyle \sum_{S\subseteq V\vcentcolon S\neq\emptyset}f\left(S\right)\cdot y_{S}}\\
s.t. & {\displaystyle \sum_{S\subseteq V\vcentcolon S\neq\emptyset,v\in N\left(S\right)}y_{S}\leq w\left(v\right)} & \forall v\in V\\
 & y_{S}\geq0
\end{array}\qquad\left(\MWCDSwD\right)\;.
\]
$y_{S}$ is the components of vector $\bar{y}$. The $\MWCDSwD$ problem
is to find a feasible vector $\bar{y}$ which maximizes the sum $F_{\bar{y}}\triangleq\sum_{S\subseteq V\vcentcolon S\neq\emptyset}f\left(S\right)\cdot y_{S}$.
For more details about the $\PD$ relaxation, we will refer the reader
to \cite{Goemans1997}.

Let $\bar{x}$ be some feasible solution to $\MWCDSwIP$, and $\bar{x}^{\ast}$
is the optimal solution. Consider that $\bar{x}_{LP}^{\ast}$ is the
optimal solution to $\MWCDSwLP$. Also, let $\bar{y}$ be some feasible
dual solution to $\MWCDSwD$, and $\bar{y}^{\ast}$ is the optimal
solution. From the weak duality theorem we can state that 
\begin{equation}
W_{\bar{x}}\geq W_{\bar{x}^{\ast}}\geq W_{\bar{x}_{LP}^{\ast}}=F_{\bar{y}^{\ast}}\geq F_{\bar{y}}\;.\label{eq:T-R-P-11}
\end{equation}

Now, we argue that MWCDS and $\MWCDSwIP$ are equivalent problems.
Let $\bar{\sigma}$ be the equivalent solution of $cds^{\ast}$ to
$\MWCDSwIP$, i.e. a binary vector with ones for each $v\in cds^{\ast}$.
Symmetrically, let $\chi$ be the equivalent solution of $\bar{x}^{\ast}$
to MWCDS problem, i.e. the set of nodes with value $1$ in $\bar{x}^{\ast}$.
Note that $\mbox{\ensuremath{OPT_{cds^{\ast}}\triangleq Weight\left(cds^{\ast}\right)=W_{\bar{\sigma}}}}$
and $Weight\left(\chi\right)=W_{\bar{x}^{\ast}}$. We denote by $\text{\ensuremath{\mathscr{S}}}$
the collection of all subsets $\mbox{\ensuremath{\text{\ensuremath{\mathscr{S}}}\triangleq\left\{ S\colon S\subseteq V,S\neq\emptyset\right\} }}$.
We will show that $\bar{\sigma}$ is an optimal solution for $MWCDS-IP$.
\begin{lem}
$\bar{\sigma}$ is a feasible solution to the $\MWCDSwIP$ problem.
\label{lem:MWCDSw-is-a-MWCDSw-IP}
\end{lem}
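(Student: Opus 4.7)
The plan is to verify the two constraints defining feasibility of $\MWCDSwIP$. The integrality constraint $\sigma_{v}\in\{0,1\}$ holds immediately since $\bar{\sigma}$ is by construction the characteristic vector of $cds^{\ast}$. The substantive work lies in establishing the covering constraint $\sum_{v\in N(S)}\sigma_{v}\geq f(S)$ for every nonempty $S\subseteq V$.

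I would split on the value of $f(S)$. When $f(S)=0$ the inequality reduces to $\sum_{v\in N(S)}\sigma_{v}\geq 0$, which holds since $\bar{\sigma}$ has nonnegative entries. The real content is the case $f(S)=1$, which by definition of $f$ in (\ref{eq:f}) means that $S$ is not a dominating set, so there exists a witness $u\in V\setminus(S\cup N(S))$. It suffices to exhibit a node of $cds^{\ast}$ inside $N(S)$, because then $\sum_{v\in N(S)}\sigma_{v}\geq 1=f(S)$.

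To produce such a node I would argue by cases on how $cds^{\ast}$ meets $S$. First, if $cds^{\ast}\cap S=\emptyset$, pick any $v\in S$; since $cds^{\ast}$ is dominating and $v\notin cds^{\ast}$, some neighbor of $v$ lies in $cds^{\ast}$, and this neighbor is automatically in $V\setminus S$, hence in $N(S)\cap cds^{\ast}$. Second, if $cds^{\ast}\subseteq S$, consider the witness $u\notin S\cup N(S)$; every neighbor of $u$ belongs to $V\setminus S$ and so none lies in $cds^{\ast}\subseteq S$, contradicting that $u$ is dominated by $cds^{\ast}$ — so this subcase cannot occur. Third, if both $cds^{\ast}\cap S$ and $cds^{\ast}\setminus S$ are nonempty, I would use that the subgraph of $G$ induced by $cds^{\ast}$ is connected (by the definition of CDS in the preliminaries): any path in this induced subgraph from a vertex of $cds^{\ast}\cap S$ to a vertex of $cds^{\ast}\setminus S$ must contain a consecutive pair of vertices straddling $S$, and the endpoint outside $S$ is then a node of $cds^{\ast}\cap N(S)$.

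The only place where care is needed is the third case: the argument rests on connectivity of the \emph{induced} subgraph on $cds^{\ast}$, not just on $cds^{\ast}$ being spanned by some connected subgraph of $G$. Once that is recognized — and it follows directly from how CS and CDS were defined in Section~\ref{sec:Preliminaries-and-Model} — the case analysis is routine and the feasibility of $\bar{\sigma}$ follows.
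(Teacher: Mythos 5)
Your proposal is correct and follows essentially the same route as the paper's proof: reduce to the case $f(S)=1$, then split on whether $cds^{\ast}$ is disjoint from $S$, contained in $S$, or partially overlaps it, using domination for the first subcase and connectivity of the induced subgraph on $cds^{\ast}$ for the last. The only cosmetic difference is that you rule out the containment subcase by contradiction with the undominated witness, whereas the paper observes directly that $S\supseteq cds^{\ast}$ forces $f(S)=0$; both are fine.
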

\begin{proof}
We have to show that $\bar{\sigma}$ complies with the constraints
$\mbox{\ensuremath{\sum_{v\in N\left(S\right)}\sigma_{v}\geq f\left(S\right)\quad\forall S\in\text{\ensuremath{\mathscr{S}}}}}$
of the problem. In order to prove so, we partition the collection
$\text{\ensuremath{\mathscr{S}}}$ into three possible cases. First,
$S$ contains $cds^{\ast}$; Second, $S$ does not contain $cds^{\ast}$
and $f\left(S\right)=0$; And third, $S$ does not contain $cds^{\ast}$
and $f\left(S\right)=1$. If $S$ contains a DS, then also $f\left(S\right)=0$.
Hence, the first two cases are trivial. As for the third case, we
claim that there is at least one node $\mbox{\ensuremath{v\in N\left(S\right)}}$
which $\sigma_{v}=1$. Recall that $\sigma_{v}=1$ if and only if
$\mbox{\ensuremath{v\in cds^{\ast}}}$. If $S$ and $cds^{\ast}$
are disjoint sets, then $cds^{\ast}$ dominates all $S$ nodes. Therefore,
each node $u\in S$ has a neighbor $v\in cds^{\ast}$ (i.e. $\mbox{\ensuremath{v\in N\left(S\right)}}$).
If $S$ and $cds^{\ast}$ are not disjoint sets, then at least one
node $u\in S\cap cds^{\ast}$ must have a neighbor $v\in cds^{\ast}\setminus S$
(i.e. $\mbox{\ensuremath{v\in N\left(S\right)}}$), since $cds^{\ast}$
is connected. To conclude, $\bar{\sigma}$ is a binary vector that
satisfies all the constraints, thus it is a feasible solution to $\MWCDSwIP$.
\end{proof}
\begin{lem}
$\chi$ is a feasible solution to the MWCDS problem. \label{lem:MWCDSw-IP-is-a-MWCDSw}
\end{lem}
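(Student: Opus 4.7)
The plan is to verify directly that $\chi=\{v\in V : x^{\ast}_v=1\}$ is both dominating in $V$ and induces a connected subgraph of $G$, so that $\chi$ is a CDS and hence a feasible solution to the MWCDS problem. Both properties will be proved by contradiction: in each case I will exhibit a witness set $S\in\mathscr{S}$ for which the covering constraint $\sum_{v\in N(S)} x^{\ast}_v \geq f(S)$ must fail, contradicting the feasibility of $\bar{x}^{\ast}$ for $\MWCDSwIP$.

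First I would show that $\chi$ is a dominating set. Assume toward contradiction that some $u\in V$ is not dominated, meaning $u\notin\chi$ and $N(u)\cap\chi=\emptyset$. The natural witness is $S=V\setminus(\{u\}\cup N(u))$: one checks that $N(S)\subseteq N(u)$, so $\sum_{v\in N(S)}x^{\ast}_v = 0$, while $u\notin S\cup N(S)$ forces $f(S)=1$. This directly violates the constraint at $S$. The only minor care needed is to handle the degenerate situation where $S$ might be empty, which can be excluded by using connectivity of $G$ and $|V|\geq 2$.

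Next I would show that the induced subgraph on $\chi$ is connected. Assume $\chi$ decomposes as $A\sqcup B$ with both parts nonempty and no edge of $G$ between $A$ and $B$. Take $S=A$. By the definition of $N(S)$ no node of $A$ lies in $N(S)$, and by the hypothesis on $A$--$B$ edges no node of $B$ does either, so $\sum_{v\in N(S)} x^{\ast}_v = 0$. Moreover, since no node of $B$ lies in $A\cup N(A)$, we have $S\cup N(S)\neq V$ and hence $f(S)=1$. This again contradicts the constraint, so $\chi$ must be connected.

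The main obstacle I expect is conceptual rather than computational: one must recognize that the exponential family of constraints of $\MWCDSwIP$ encodes domination and connectivity simultaneously through the single indicator $f$ together with the neighborhood operator $N(\cdot)$, and then choose the two witness sets that isolate each property. Once the choices $S=V\setminus(\{u\}\cup N(u))$ and $S=A$ are identified, the feasibility of $\bar{x}^{\ast}$ closes both arguments in essentially one line each.
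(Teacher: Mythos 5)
Your proof is correct and, while it follows the same template of exhibiting a violated constraint of $\MWCDSwIP$, it differs from the paper's in both witness choices. For domination the paper uses the single global witness $S=\chi$ (if $\chi$ is not dominating then $f\left(\chi\right)=1$ while $\sum_{v\in N\left(\chi\right)}x_{v}^{\ast}=0$), whereas you use the local witness $S=V\setminus\left(\left\{ u\right\} \cup N\left(u\right)\right)$ for an undominated $u$; these are interchangeable. The genuine divergence is in the connectivity step: the paper splits into the cases $f\left(V_{C}\right)=0$ and $f\left(V_{C}\right)=1$ for a maximal component $V_{C}$ of $\chi$, and in the first case it has to invoke the \emph{optimality} (minimum weight) of $\bar{x}^{\ast}$ via Lemma \ref{lem:MWCDSw-is-a-MWCDSw-IP}. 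You instead note that for $S=A$ the nonempty set $B$ is disjoint from $A\cup N\left(A\right)$, so $f\left(A\right)=1$ holds automatically and feasibility alone gives the contradiction. This removes the case split (the paper's first case is in fact vacuous, since maximality of the component already forces $N\left(V_{C}\right)\cap\chi=\emptyset$ and hence $f\left(V_{C}\right)=1$) and proves the slightly stronger fact that the support of \emph{any} feasible integral $\bar{x}$ is a CDS. One inaccuracy to fix: connectivity of $G$ and $\left|V\right|\geq2$ do \emph{not} rule out $S=\emptyset$ in the domination step. $S$ is empty exactly when $u$ dominates $G$, and combined with $u$ being undominated by $\chi$ this forces $\chi=\emptyset$, i.e.\ $\bar{x}^{\ast}=\bar{0}$, which is feasible precisely for complete graphs (where $f\equiv0$). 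The paper's witness $S=\chi$ breaks in the same situation, so both proofs tacitly assume $\chi\neq\emptyset$; the honest way to dispose of the degenerate case is to observe it only arises for graphs of small diameter, which the paper handles separately, rather than to appeal to connectivity.
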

\begin{proof}
We first prove that $\chi$ is a DS. Since its equivalent, $\bar{x}^{\ast}$,
is a solution to $\MWCDSwIP$, it must comply with the constraints
$\mbox{\ensuremath{\sum_{v\in N\left(S\right)}x_{v}^{\ast}\geq f\left(S\right)\quad\forall S\in\text{\ensuremath{\mathscr{S}}}}}$.
Let us assume in contradiction that $\chi$ is not a DS. Then, according
to the definition $f\left(\chi\right)=1$. For all nodes $v\in V\setminus\chi$
holds $x_{v}^{\ast}=0$, so $\mbox{\ensuremath{\sum_{v\in N\left(\chi\right)}x_{v}^{\ast}=0}}$.
This is in contradiction to the assumption that $\bar{x}^{\ast}$
satisfies the constraint $\mbox{\ensuremath{\sum_{v\in N\left(\chi\right)}x_{v}^{\ast}\geq f\left(\chi\right)}}$.
Therefore, $\chi$ must be a DS.\\
Now we prove that $\chi$ is a CDS. Since its equivalent, $\bar{x}^{\ast}$,
is an optimal solution to $\MWCDSwIP$, it must comply with all the
constraints $\mbox{\ensuremath{\sum_{v\in N\left(S\right)}x_{v}^{\ast}\geq f\left(S\right)\quad\forall S\in\text{\ensuremath{\mathscr{S}}}}}$
and has the minimum total weight. Let us assume in contradiction that
$\chi$ is not connected. Hence, there are at least two maximal connected
components in the induced subgraph of $\chi$ in $G$. Let us consider
one of them as $C_{\chi}^{1}=\left(V_{C},E_{C}\right)$. In the case
where $f\left(V_{C}\right)=0$, $V_{C}$ is a CDS, and therefore from
Lemma \ref{lem:MWCDSw-is-a-MWCDSw-IP} its equivalent vector satisfies
all the constraints of $\MWCDSwIP$. But, it has a smaller total weight
than $\bar{x}^{\ast}$, and this is in contradiction to the minimality
of $\bar{x}^{\ast}$. In the case where $f\left(V_{C}\right)=1$,
since $C_{\chi}^{1}$ is maximal, then it must be $\mbox{\ensuremath{N\left(V_{C}\right)\cap\chi=\emptyset}}$.
Therefore holds $\mbox{\ensuremath{\sum_{v\in N\left(V_{C}\right)}x_{v}^{\ast}=0}}$.
This is in contradiction to the assumption that $\bar{x}^{\ast}$
satisfies the constraint $\mbox{\ensuremath{\sum_{v\in N\left(V_{C}\right)}x_{v}^{\ast}\geq f\left(V_{C}\right)}}$.
Hence $\chi$ must be connected. To conclude, $\chi$ is dominating
and connected then it must be a CDS.
\end{proof}
\begin{thm}
MWCDS and $\MWCDSwIP$ are equivalent problems.\label{thm:MWCDSw-IP-=00003D-MWCDSw}
\end{thm}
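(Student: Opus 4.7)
The plan is to derive the theorem directly from the two preceding lemmas combined with the optimality of $\bar{x}^{\ast}$ and $cds^{\ast}$. Since the problems are minimization problems over feasible regions related by the bijection $S \mapsto \bar{\sigma}(S)$ that sends a subset to its characteristic vector, the natural notion of ``equivalence'' here is that the optimal objective values agree and that optimal solutions of one problem transfer to optimal solutions of the other under this bijection. I would make this explicit as the target of the proof before starting the calculation.

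The first step is to apply Lemma \ref{lem:MWCDSw-is-a-MWCDSw-IP}: the vector $\bar{\sigma}$ encoding the optimal CDS $cds^{\ast}$ is a feasible solution to $\MWCDSwIP$. Since $\bar{x}^{\ast}$ is optimal for $\MWCDSwIP$, we get $W_{\bar{x}^{\ast}} \leq W_{\bar{\sigma}} = OPT_{cds^{\ast}}$, by the definition $W_{\bar{\sigma}} \triangleq \sum_{v\in V} w(v)\cdot \sigma_v = \sum_{v \in cds^{\ast}} w(v) = Weight(cds^{\ast})$.

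The second step is to apply Lemma \ref{lem:MWCDSw-IP-is-a-MWCDSw} in the opposite direction: the set $\chi$ associated with $\bar{x}^{\ast}$ is a feasible CDS for the MWCDS problem. Since $cds^{\ast}$ is optimal for MWCDS, this gives $OPT_{cds^{\ast}} \leq Weight(\chi) = W_{\bar{x}^{\ast}}$. Combining the two inequalities yields $OPT_{cds^{\ast}} = W_{\bar{x}^{\ast}}$, so the two formulations share the same optimal value, and moreover $\bar{\sigma}$ is optimal for $\MWCDSwIP$ while $\chi$ is optimal for MWCDS, establishing the claimed equivalence.

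I do not anticipate any real obstacle here, because the substance of the proof was already absorbed into the two lemmas: Lemma \ref{lem:MWCDSw-is-a-MWCDSw-IP} handles the nontrivial direction of showing that connectedness of $cds^{\ast}$ forces the ``cut-neighborhood'' constraints to be satisfied by $\bar{\sigma}$, and Lemma \ref{lem:MWCDSw-IP-is-a-MWCDSw} handles the nontrivial converse that the IP constraints force $\chi$ to be dominating and connected. The only mild subtlety worth flagging in the write-up is that ``equivalent'' should be explicitly interpreted as equality of optimal values together with the fact that optimal solutions correspond via the characteristic-vector bijection; once this is stated, the argument is a two-line sandwich between the lemmas.
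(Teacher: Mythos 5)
Your proposal is correct and follows essentially the same route as the paper: both proofs sandwich $OPT_{cds^{\ast}}$ and $W_{\bar{x}^{\ast}}$ between the two inequalities obtained from Lemma \ref{lem:MWCDSw-is-a-MWCDSw-IP} (feasibility of $\bar{\sigma}$ plus optimality of $\bar{x}^{\ast}$) and Lemma \ref{lem:MWCDSw-IP-is-a-MWCDSw} (feasibility of $\chi$ plus optimality of $cds^{\ast}$), concluding equality of the optimal values and optimality of the transferred solutions. Your explicit statement of what ``equivalent'' means is a minor clarity improvement over the paper's terser write-up, but the argument is the same.
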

\begin{proof}
From Lemma \ref{lem:MWCDSw-is-a-MWCDSw-IP} $W_{\bar{\sigma}}\geq W_{\bar{x}^{\ast}}$
and from Lemma \ref{lem:MWCDSw-IP-is-a-MWCDSw} $\mbox{\ensuremath{W_{\bar{\sigma}}=OPT_{cds^{\ast}}\leq Weight\left(\chi\right)=W_{\bar{x}^{\ast}}}}$,
therefore must be hold that $\mbox{\ensuremath{OPT_{cds^{\ast}}=W_{\bar{\sigma}}=W_{\bar{x}^{\ast}}}}$.
Since also $\bar{\sigma}$ is a feasible solution for $MWCDS-IP$,
then the statement is true.
\end{proof}
To conclude, let $cds$ be some CDS in $G$. From Theorem \ref{thm:MWCDSw-IP-=00003D-MWCDSw}
and Equation \ref{eq:T-R-P-11} we can state that:
\begin{equation}
Weight\left(cds\right)\geq OPT_{cds^{\ast}}=W_{\bar{x}^{\ast}}\geq F_{\bar{y}}\;.\label{eq:T-R-P-12}
\end{equation}
So, if we find a CDS and a feasible dual solution, we can determine
the approximation ratio between $cds$ and $cds^{\ast}$.
\begin{defn}[Approximation ratio $\alpha$]
A feasible approximation ratio is based on a feasible CDS ($cds$)
and a feasible dual solution ($\bar{y}$), and is defined as follows:
\[
\alpha\triangleq\frac{Weight\left(cds\right)}{F_{\bar{y}}}\geq\frac{Weight\left(cds\right)}{OPT_{cds^{\ast}}}\;.
\]
\label{def:=0003B1}
\end{defn}

\section{The Algorithm\label{sec:The-Algorithm}}

\subsection{Approximation algorithm}

Here, we introduce our algorithm for constructing a data gathering
tree which is based on the $\PD$ method, on Theorem \ref{thm:Gathering-tree-approximation-by-MWCDS}
and on the conclusions of Subsection \ref{subsec:Primal-dual-for-MWCDS}.
The algorithm consists of three main stages (Algorithm \ref{alg:DS},
Algorithm\ref{alg:CDS} and Algorithm \ref{alg:Tree}). Algorithm
\ref{alg:DS} finds an independent DS (IDS), while constructing a
feasible dual solution. Algorithm \ref{alg:CDS} creates a CDS based
on the IDS from the previous stage, and update the dual solution.
Finally, Algorithm \ref{alg:Tree} builds a gathering tree and finds
an ideal MULE location. The BackBone of the gathering tree consists
of the CDS from the previous stage.

\paragraph{Algorithm \ref{alg:DS}}

Receives a connected UDG graph $G=\left(V,E\right)$ with at least
two nodes, and a positive node weight function $\mbox{\ensuremath{w\left(\cdot\right)\colon V\rightarrow R^{+}}}$.
The algorithm returns $ds$ which is an IDS of nodes, and $LB1$ which
is a lower bound value for the weight of the optimal solution to the
MWCDS problem. The algorithm works in steps to find $ds$, while constructing
a feasible dual solution which will serve as the lower bound.

\begin{algorithm}
\begin{algorithmic}[1]

\REQUIRE A connected UDG $G=(V,E)$ with $\left|V\right|\geq2$,
a node weight function $w\left(\cdot\right)\colon V\rightarrow R^{+}$

\ENSURE An independent dominating set $ds$ and a lower bound $LB1$

\STATE $ds\leftarrow\emptyset$~,\quad{}$A\leftarrow\emptyset$~,\quad{}$\text{\ensuremath{\mathscr{S}}}\leftarrow\left\{ \left\{ v\right\} \vcentcolon v\in V\right\} $~,\quad{}$c_{v}\triangleq\frac{1}{100}\cdot w\left(v\right)$\label{algline_DS:l1}

\STATE $y\left(\left\{ v\right\} \right)\leftarrow0\:\forall v\in V$~,\quad{}$g\left(\left\{ v\right\} \right)\leftarrow1\:\forall v\in V$\label{algline_DS:l2}

\REPEAT[The algorithm steps] \label{algline_DS:l3}

\STATE Find node $v\in V$ that minimizes $\epsilon\leftarrow\epsilon\left(v\right)$\label{algline_DS:l4}

\STATE $A\leftarrow A\cup\left\{ v\right\} $\label{algline_DS:l5}

\STATE \textbf{if} $v\notin N\left(ds\right)\cup ds$ \textbf{then}
$ds\leftarrow ds\cup\left\{ v\right\} $\label{algline_DS:l6}

\FOR{each $S\in\text{\ensuremath{\mathscr{S}}}$}\label{algline_DS:l7}

\STATE $y\left(S\right)\leftarrow y\left(S\right)+g\left(S\right)\epsilon$\label{algline_DS:l8}

\STATE \textbf{if} $v\in N\left(S\right)$ \textbf{then} $g\left(S\right)\leftarrow0$\label{algline_DS:l9}

\ENDFOR

\FOR{each $u\in V\setminus A$}\label{algline_DS:l11}

\IF{$\sum_{S\in\text{\ensuremath{\mathscr{S}}}\vcentcolon u\in N\left(S\right)}g\left(S\right)=0$}
\label{algline_DS:l12}

\STATE $S'\leftarrow V\setminus\left\{ u\right\} $\label{algline_DS:l13}

\STATE $\text{\ensuremath{\mathscr{S}}}\leftarrow\text{\ensuremath{\mathscr{S}}}\cup\left\{ S'\right\} $~,\quad{}$y\left(S'\right)\leftarrow0$~,\quad{}$g\left(S'\right)\leftarrow1$\label{algline_DS:l14}

\ENDIF 

\ENDFOR\label{algline_DS:l16}

\UNTIL{$\left(N\left(ds\right)\cup ds=V\right)$}\label{algline_DS:l17}

\RETURN $\left\{ ds\:,\quad LB1\leftarrow\sum_{v\in V}y\left(\left\{ v\right\} \right)\right\} $\label{algline_DS:l18}

\end{algorithmic}

\caption{Constructs an independent dominating set\label{alg:DS}}

\end{algorithm}

Initially, $ds$ is an empty set, the node capacity $c_{v}$ is $\frac{1}{100}\cdot w\left(v\right)$,
and $\text{\ensuremath{\mathscr{S}}}$ consists only of the \textsl{active}
subsets $\mbox{\ensuremath{\left\{ \left\{ v\right\} \vcentcolon v\in V\right\} }}$.
An \textsl{active} subset $S$ is a subset of which the algorithm
is willing to determine its $y_{S}$ value for the dual solution.
For all the other subsets in $\mbox{\ensuremath{\left\{ S\colon S\subseteq V,S\neq\emptyset\right\} }}$,
consider the value $0$. $y\left(S\right)$ is the $y_{S}$ value
determined by the algorithm to subset $S$ at the current step, and
is initialized to $0$. Node $v$ is said to be ``\textit{packed}''
if it satisfies the following equality: $\mbox{\ensuremath{\sum_{S\in\text{\ensuremath{\mathscr{S}}}\colon v\in N\left(S\right)}y\left(S\right)=c_{v}}}$.
Subset $S$ is said to be ``\textit{restricted}'' if it has a \textit{packed}
neighbor node, and therefore its $y\left(S\right)$ value can not
be increased. $g\left(S\right)$ is an indicator with value $0$ for
\textit{restricted} subsets. Formally, it is defined as follows: 
\[
g\left(S\right)=\begin{cases}
0 & if\;\exists v\in V\colon v\in N\left(S\right),\sum_{S'\in\text{\ensuremath{\mathscr{S}}}\vcentcolon v\in N\left(S'\right)}y\left(S'\right)=c_{v}\\
1 & otherwise
\end{cases}\;.
\]
$g\left(S\right)$ is initialized to $1$. The algorithm uses the
potential function $\epsilon\left(\cdot\right)$ to select nodes,
and this function is defined as follows: 
\begin{equation}
\epsilon\left(v\right)=\begin{cases}
\infty & for\;\sum_{S\in\text{\ensuremath{\mathscr{S}}}\colon v\in N\left(S\right)}g\left(S\right)=0\\
\frac{c_{v}-\sum_{S\in\text{\ensuremath{\mathscr{S}}}\colon v\in N\left(S\right)}y\left(S\right)}{\sum_{S\in\text{\ensuremath{\mathscr{S}}}\colon v\in N\left(S\right)}g\left(S\right)} & otherwise
\end{cases}\;.\label{eq:epsilon_v}
\end{equation}

At each step, the algorithm selects the node with the minimum potential
$\epsilon\left(\cdot\right)$ (line \ref{algline_DS:l4}). Let us
say node $v$. If the node is independent in $ds$ then it is added
to $ds$ (line \ref{algline_DS:l6}). The $y\left(S\right)$ value
is uniformly increased by $\epsilon\left(v\right)$ for all \textsl{active}
subsets that are not \textit{restricted} (line \ref{algline_DS:l8}).
Later, we claim that the selected node is \textit{packed} after the
uniform increase of $y\left(S\right)$. Since node $v$ is \textit{packed},
the algorithm updates all subsets that are adjacent to $v$ to be
\textit{restricted} (line \ref{algline_DS:l9}). For each node that
is not yet selected (not \textit{packed}) and all subsets adjacent
to it are \textit{restricted}, the algorithm creates a new subset
in $\text{\ensuremath{\mathscr{S}}}$, which is adjacent only to it
(lines \ref{algline_DS:l11} - \ref{algline_DS:l16}).

The algorithm ends once $ds$ becomes a DS (line \ref{algline_DS:l17}).
The algorithm returns $ds$ and the sum $\mbox{\ensuremath{\sum_{v\in V}y\left(\left\{ v\right\} \right)}}$
as $LB1$ (line \ref{algline_DS:l18}).
\begin{prop}
Each node selected by Algorithm \ref{alg:DS} is packed.\label{prop:alg1-packed}
\end{prop}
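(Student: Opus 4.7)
The plan is to observe that the value $\epsilon$ chosen in line~\ref{algline_DS:l4} is precisely calibrated so that the uniform dual growth in line~\ref{algline_DS:l8} saturates the packing equality $\sum_{S: v \in N(S)} y(S) = c_v$ at the selected node $v$. The argument has two pieces: first verify that the chosen $\epsilon$ is finite (so that the update is well-defined), and then unfold the definition of $\epsilon(v)$ from (\ref{eq:epsilon_v}) inside the new packing sum at $v$.

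For well-definedness, I would argue by induction on iterations that at the start of every iteration any node $u \notin A$ satisfies $\sum_{S \in \mathscr{S}: u \in N(S)} g(S) > 0$, hence $\epsilon(u) < \infty$. At initialization this holds because every singleton $\{w\}$ is active with $g(\{w\})=1$, and $u$ has at least one neighbor $w$ in $G$ (which is connected with $|V| \geq 2$), so each such neighbor contributes $1$ to the sum. At subsequent iterations, the loop of lines~\ref{algline_DS:l11}--\ref{algline_DS:l16} inserts a fresh subset $S'=V\setminus\{u\}$ precisely when this invariant would otherwise fail; since $G$ is connected we have $N(V\setminus\{u\})=\{u\}$, so $S'$ contributes $g(S')=1$ to the denominator at $u$. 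Consequently the minimizer in line~\ref{algline_DS:l4} returns a node $v$ with $\epsilon(v)<\infty$.

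Given this, line~\ref{algline_DS:l8} increases each $y(S)$ by $g(S)\epsilon$, so the new packing sum at $v$ equals the old sum plus $\epsilon \sum_{S: v \in N(S)} g(S)$. Substituting the definition of $\epsilon = \epsilon(v)$ from (\ref{eq:epsilon_v}), the denominator of $\epsilon$ cancels against the trailing sum and the expression simplifies to exactly $c_v$; by definition, this is the condition that $v$ is packed.

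The main obstacle is the first piece. The algebraic cancellation in the second piece is essentially a one-line computation, but the finiteness argument requires carefully tracking the invariant maintained by the add-new-subset loop: one must verify that each pass of lines~\ref{algline_DS:l11}--\ref{algline_DS:l16} restores $\sum_{S} g(S) > 0$ for every not-yet-selected node, and one must use the connectivity of $G$ together with $|V|\geq 2$ to justify $N(V\setminus\{u\}) = \{u\}$.
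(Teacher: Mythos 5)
Your proposal is correct and its core step is exactly the paper's proof: substitute the definition of $\epsilon(v)$ from (\ref{eq:epsilon_v}) into the updated packing sum at $v$ and cancel the factor $\sum_{S\colon v\in N(S)}g(S)$ to obtain $c_v$. The additional well-definedness argument you give (that the selected node always has $\sum_{S\colon v\in N(S)}g(S)>0$, maintained by the new-subset loop of lines \ref{algline_DS:l11}--\ref{algline_DS:l16}) is not part of the paper's proof of this proposition but is established separately later, in Proposition \ref{prop:alg1-node-selected-once} and Lemma \ref{lem:IDS}, so you have merely front-loaded a fact the paper defers.
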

\begin{proof}
The algorithm adds $\epsilon\left(v\right)$ to $y\left(S\right)$
of each \textsl{un-restricted} subset $S$ that $v\in N\left(S\right)$.
So we can write the following equation for the selected node $v$:
\[
\begin{gathered}{\displaystyle \sum_{S\in\text{\ensuremath{\mathscr{S}}}\colon v\in N\left(S\right)}}\left(y\left(S\right)+g\left(S\right)\epsilon\left(v\right)\right)=\\
{\displaystyle \sum_{S\in\text{\ensuremath{\mathscr{S}}}\colon v\in N\left(S\right)}}y\left(S\right)+\frac{c_{v}-\sum_{S\in\text{\ensuremath{\mathscr{S}}}\colon v\in N\left(S\right)}y\left(S\right)}{\sum_{S\in\text{\ensuremath{\mathscr{S}}}\colon v\in N\left(S\right)}g\left(S\right)}{\displaystyle \sum_{S\in\text{\ensuremath{\mathscr{S}}}\colon v\in N\left(S\right)}}g\left(S\right)=c_{v}\;.
\end{gathered}
\]
\end{proof}

\paragraph{Algorithm \ref{alg:CDS}}

Receives a connected UDG graph $G=\left(V,E\right)$ with at least
two nodes, a positive node weight function $\mbox{\ensuremath{w\left(\cdot\right)\colon V\rightarrow R^{+}}}$,
and $ds$ which is an IDS in $G$. The algorithm returns $cds$ which
is a CDS of nodes, and $LB2$ which is a lower bound value on the
weight of the optimal solution to the MWCDS problem. The algorithm
works in steps to find $cds$, while constructing a feasible dual
solution which will serve as the lower bound.

\begin{algorithm}
\begin{algorithmic}[1]

\REQUIRE A connected UDG $G=(V,E)$ with $\left|V\right|\geq2$,
a node weight function $w\left(\cdot\right)\colon V\rightarrow R^{+}$,
and an independent dominating set $ds$.

\ENSURE A connected dominating set $cds$ and a lower bound $LB2$.

\STATE $cds\leftarrow ds$~,\quad{}$\text{\ensuremath{\mathscr{S}}}\leftarrow\left\{ \left\{ v\right\} \vcentcolon v\in ds\right\} $~,\quad{}$c_{v}\triangleq\frac{99}{100}\cdot w\left(v\right)$\label{algline_CDS:l1}

\STATE $y\left(\left\{ v\right\} \right)\leftarrow\frac{99}{500}\cdot\left(w\left(v\right)-2\right)\:\forall v\in ds$~,\quad{}$g\left(\left\{ v\right\} \right)\leftarrow1\:\forall v\in ds$
\label{algline_CDS:l2}

\REPEAT[The algorithm steps] \label{algline_CDS:l3}

\STATE Find node $v\in V$ that minimizes $\epsilon\leftarrow\epsilon\left(v\right)$
\label{algline_CDS:l4}

\STATE $\text{\ensuremath{\mathscr{S}}}^{1}\leftarrow\left\{ S\in\text{\ensuremath{\mathscr{S}}}\colon g\left(S\right)=1,\,S\cap N\left(\left\{ v\right\} \right)\neq\emptyset\right\} $\label{algline_CDS:l5}

\STATE $\text{\ensuremath{\mathscr{S}}}^{2}\leftarrow\left\{ S\in\text{\ensuremath{\mathscr{S}}}\colon g\left(S\right)=1,\,S\cap N\left(\left\{ v\right\} \right)\cap cds\neq\emptyset\right\} $
\label{algline_CDS:l6}

\IF{$\sum_{S\in\text{\ensuremath{\mathscr{S}}}\vcentcolon v\in N\left(S\right)}g\left(S\right)>1$}
\label{algline_CDS:l7}

\STATE $cds\leftarrow cds\cup\left\{ v\right\} $ \label{algline_CDS:l8}

\IF{$\text{\ensuremath{\mathscr{S}}}^{1}\setminus\text{\ensuremath{\mathscr{S}}}^{2}\neq\emptyset$}
\label{algline_CDS:l9}

\STATE $S\leftarrow\left\{ u\vcentcolon u\mathrm{\;is\;one\;neighbor\;of\;}v\mathrm{\;from\;each\;}S'\in\text{\ensuremath{\mathscr{S}}}^{1}\setminus\text{\ensuremath{\mathscr{S}}}^{2}\right\} $
\label{algline_CDS:l10}

\STATE $cds\leftarrow cds\cup S$ \label{algline_CDS:l11}

\ENDIF  \label{algline_CDS:l12}

\ENDIF  \label{algline_CDS:l13}

\FOR{each $S\in\text{\ensuremath{\mathscr{S}}}$}\label{algline_CDS:l14}

\STATE $y\left(S\right)\leftarrow y\left(S\right)+g\left(S\right)\epsilon$
\label{algline_CDS:l15}

\STATE \textbf{if} $v\in N\left(S\right)$ \textbf{then} $g\left(S\right)\leftarrow0$
\label{algline_CDS:l16}

\ENDFOR \label{algline_CDS:l17}

\STATE $S\leftarrow\left({\displaystyle \bigcup_{S'\in\text{\ensuremath{\mathscr{S}}}^{1}}}S'\right)\cup\left\{ v\right\} $
\label{algline_CDS:l18}

\STATE $\text{\ensuremath{\mathscr{S}}}\leftarrow\text{\ensuremath{\mathscr{S}}}\cup\left\{ S\right\} $~,\quad{}$y\left(S\right)\leftarrow0$~,\quad{}$g\left(S\right)\leftarrow1$
\label{algline_CDS:l19}

\UNTIL{$\left(\sum_{S\in\text{\ensuremath{\mathscr{S}}}}g\left(S\right)=1\right)$}
\label{algline_CDS:l20}

\RETURN $\left\{ cds\:,\quad LB2\leftarrow\sum_{v\in ds}y\left(\left\{ v\right\} \right)\right\} $\label{algline_CDS:l21}

\end{algorithmic}

\caption{Constructs a connected dominating set\label{alg:CDS}}

\end{algorithm}

Initially, $cds$ contains the $ds$ nodes, the node capacity $c_{v}$
is $\frac{99}{100}\cdot w\left(v\right)$, and $\text{\ensuremath{\mathscr{S}}}$
consists only of the \textsl{active} subsets $\mbox{\ensuremath{\left\{ \left\{ v\right\} \vcentcolon v\in ds\right\} }}$.
$y\left(S\right)$ is initialized to $\mbox{\ensuremath{\frac{99}{500}\cdot\left(w\left(v\right)-2\right)}}$
for each subset $\mbox{\ensuremath{\left\{ v\right\} \vcentcolon v\in ds}}$,
and $g\left(S\right)$ is initialized to $1$.

At each step, the algorithm selects the node with the minimum potential
$\epsilon\left(\cdot\right)$ (line \ref{algline_CDS:l4}). Let us
consider node $v$. If the node is adjacent to at least two \textit{un-restricted}
subsets, then it is added to $cds$ (line \ref{algline_CDS:l8}).
For each subset adjacent to $v$ that does not contain a node from
$N\left(\left\{ v\right\} \right)\cap cds$, the algorithm also adds
to $cds$ a neighbor from that subset (lines \ref{algline_CDS:l10}
- \ref{algline_CDS:l11}). See Figure \ref{fig:Adding-nodes-to-cds}
for illustration of adding nodes. Node $v$ is selected, green nodes
from $ds$, blue from $cds\setminus ds$, and reds are the other nodes
that have been \textit{packed}. In this situation, $v$ and the red
node are added to $cds$. Next, the $y\left(S\right)$ value is uniformly
increased by $\epsilon\left(v\right)$ for all \textsl{active} subsets
that are not \textit{restricted} (line \ref{algline_CDS:l15}). We
later claim that the selected node is \textit{packed}, and therefore
the algorithm updates all subsets that are adjacent to $v$, to be
\textit{restricted} (line \ref{algline_CDS:l16}). The algorithm adds
a new subset to $\text{\ensuremath{\mathscr{S}}}$ containing $v$
and all its adjacent subsets that are not \textit{restricted}.

\begin{figure}
\begin{centering}
\includegraphics[scale=0.8]{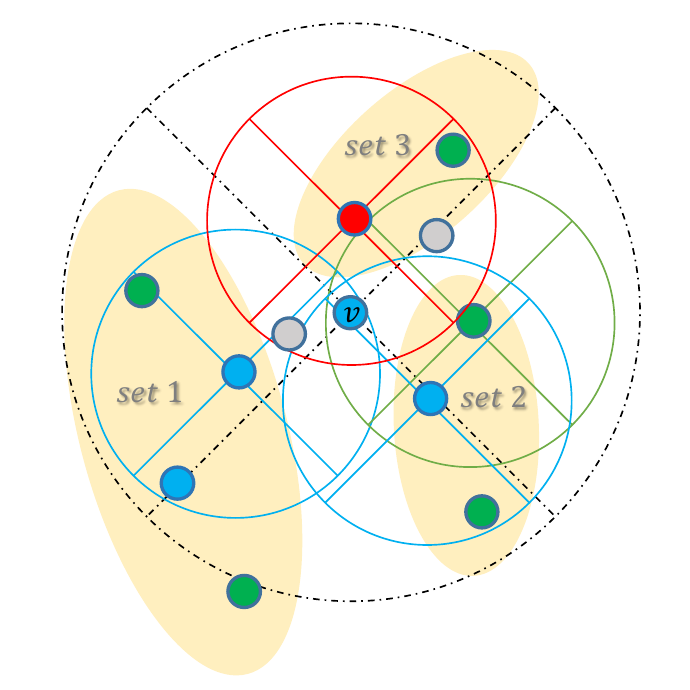}
\par\end{centering}
\caption{\label{fig:Adding-nodes-to-cds}Adding nodes to $cds$ in Algorithm.
\ref{alg:CDS}}

\end{figure}

The algorithm ends once $\text{\ensuremath{\mathscr{S}}}$ contains
only one subset that is not \textit{restricted} (line \ref{algline_CDS:l20}).
The algorithm returns $cds$ and the sum $\mbox{\ensuremath{\sum_{v\in ds}y\left(\left\{ v\right\} \right)}}$
as $LB2$ (line \ref{algline_CDS:l21}).
\begin{prop}
Each node selected by Algorithm \ref{alg:CDS} is packed.\label{prop:alg2-packed}
\end{prop}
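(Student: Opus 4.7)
The plan is to mimic the proof of Proposition \ref{prop:alg1-packed} almost verbatim, since the packedness argument only depends on how the potential function $\epsilon(\cdot)$ is defined together with the uniform increase rule $y(S)\leftarrow y(S)+g(S)\epsilon$ applied across all active subsets, and both of these are identical in the two algorithms (only the numerical values of $c_v$ and the initial $y(\{v\})$ differ).

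First I would fix the node $v$ selected at the current step in line \ref{algline_CDS:l4}, with $\epsilon=\epsilon(v)$ as defined in Equation \eqref{eq:epsilon_v}. I would then split the family of active subsets into those with $v\in N(S)$ and those without. For subsets $S$ with $v\in N(S)$, their $y(S)$ is updated in line \ref{algline_CDS:l15} by $g(S)\epsilon$; those with $g(S)=0$ contribute nothing to the increase. Summing after the update and substituting the definition of $\epsilon(v)$ gives exactly the same telescoping calculation as in Proposition \ref{prop:alg1-packed}, namely
\[
\sum_{S\in\mathscr{S}\colon v\in N(S)}\!\!\bigl(y(S)+g(S)\epsilon(v)\bigr)
= \sum_{S\in\mathscr{S}\colon v\in N(S)}\!\!y(S) + \epsilon(v)\!\!\sum_{S\in\mathscr{S}\colon v\in N(S)}\!\!g(S)
= c_v,
\]
so that $v$ is packed with respect to the updated dual variables.

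The one subtle point that I would need to address, which is not present in Algorithm \ref{alg:DS}, is that Algorithm \ref{alg:CDS} also creates a brand-new subset $S=\bigl(\bigcup_{S'\in\mathscr{S}^1}S'\bigr)\cup\{v\}$ in line \ref{algline_CDS:l18} and inserts it into $\mathscr{S}$. I would argue that this new subset does not affect the packing sum for $v$: since $v\in S$ by construction, and $N(S)$ by definition contains only nodes outside $S$, we have $v\notin N(S)$, so this new subset is not indexed by the packing constraint at $v$. Moreover even if it were, its $y$-value is initialized to $0$ in line \ref{algline_CDS:l19}. Hence the packing sum at $v$ after the iteration involves only the previously existing subsets, which is exactly the quantity evaluated in the displayed equation above.

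I do not anticipate a serious obstacle here; the main thing is to make explicit that the potential minimizer $v$ is well-defined at every step (i.e.\ $\epsilon(v)<\infty$ whenever the algorithm has not yet terminated, which follows from the termination condition in line \ref{algline_CDS:l20} ensuring that there is always some un-restricted subset remaining) and to verify the bookkeeping for the newly introduced subset $S$ as above. Once those two points are noted, the identity follows from a direct substitution of Equation \eqref{eq:epsilon_v}, exactly as in Proposition \ref{prop:alg1-packed}.
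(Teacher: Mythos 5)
Your proof is correct and follows essentially the same route as the paper, which simply observes that the uniform increase rule for $y\left(S\right)$ is identical to that of Algorithm \ref{alg:DS} and invokes Proposition \ref{prop:alg1-packed}. Your additional check that the newly created subset in line \ref{algline_CDS:l18} cannot disturb the packing sum at $v$ (since $v\in S$ implies $v\notin N\left(S\right)$, and its $y$-value is initialized to $0$ in any case) is a worthwhile detail the paper leaves implicit.
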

\begin{proof}
Stems from the fact that the uniform increase of the $y\left(S\right)$
value is identical to Algorithm \ref{alg:DS}, and from Proposition
\ref{prop:alg1-packed}.
\end{proof}

\paragraph{Algorithm \ref{alg:Tree}}

Receives a connected UDG graph $G=\left(V,E\right)$ with at least
two nodes. The algorithm returns a data gathering tree $T=\left(V,E_{T}\right)$
in graph $G$, the selected MULE location $m$, and the approximation
ratio $\alpha$ (from Theorem \ref{thm:Gathering-tree-approximation-by-MWCDS}).

\begin{algorithm}
\begin{algorithmic}[1]

\REQUIRE A connected UDG $G=(V,E)$ with $\left|V\right|\geq2$ 

\ENSURE A gathering tree $T=\left(V,E_{T}\right)$, the selected
location for the MULE $m$, and the MWCDS approximation ratio $\alpha$

\STATE $E_{T}\leftarrow\emptyset$~,\quad{}$cds\leftarrow\emptyset$

\FOR{each $m'\in V$} \label{algline_Tree:l2}

\STATE $w\left(v\right)\leftarrow\left(2\cdot\dist{m'}v+\C\right)\:\forall v\in V$
\label{algline_Tree:l3}

\STATE $\left\{ ds'\,,\;LB1\right\} \leftarrow$ Algorithm \ref{alg:DS}
$\left(\,G\,,\;w\left(\cdot\right)\,\right)$ \label{algline_Tree:l4}

\STATE $\left\{ cds'\,,\;LB2\right\} \leftarrow$ Algorithm \ref{alg:CDS}
$\left(\,G\,,\;w\left(\cdot\right)\,,\;ds'\,\right)$\label{algline_Tree:l5}

\IF{$\sum_{v\in cds'}w\left(v\right)<\sum_{v\in cds}w\left(v\right)$}

\STATE $m\leftarrow m'$~,\quad{}$LB\leftarrow LB1+LB2$~,\quad{}$cds\leftarrow cds'$
\label{algline_Tree:l7}

\STATE $r\leftarrow$ The first node selected in $ds'$ (at the first
step of Algorithm \ref{alg:DS}) \label{algline_Tree:l8}

\ENDIF 

\ENDFOR \label{algline_Tree:l10}

\STATE $SubG\leftarrow$ The induced subgraph of $cds$ in $G$ \label{algline_Tree:l11}

\STATE $\bar{p}\leftarrow BFS\left(SubG\,,\;r\right)$ \COMMENT{Compute
the predecessors of the nodes $v\in cds$ using BFS algorithm, starting
from the root $r$} \label{algline_Tree:l12}

\FOR{each $v\in cds\setminus\left\{ r\right\} $} \label{algline_Tree:l13}

\STATE $E_{T}\leftarrow E_{T}\cup\left\{ DirectedEdge\left(v\rightarrow p_{v}\right)\right\} $
\COMMENT{$p_{v}$ is the predecessor of node $v$ (a component of
$\bar{p}$)} \label{algline_Tree:l14}

\ENDFOR \label{algline_Tree:l15}

\FOR{each $u\in V\setminus\left(cds\cup\left\{ r\right\} \right)$}
\label{algline_Tree:l16}

\STATE $v\leftarrow\underset{v\in cds}{\arg\min}\left\{ \dist uv\right\} $
\label{algline_Tree:l17}

\STATE $E_{T}\leftarrow E_{T}\cup\left\{ DirectedEdge\left(u\rightarrow v\right)\right\} $
\label{algline_Tree:l18}

\ENDFOR

\RETURN $\left\{ T=\left(V,E_{T}\right)\:,\quad m\:,\quad\alpha\leftarrow\nicefrac{\sum_{v\in cds}w\left(v\right)}{LB}\right\} $\label{algline_Tree:l20}

\end{algorithmic}

\caption{Constructs a data gathering tree\label{alg:Tree}}

\end{algorithm}

The algorithm searches iteratively for the minimal weight CDS produced
by Algorithm \ref{alg:CDS}, over all possible MULE locations. The
node weight function is determined to be $\mbox{\ensuremath{w\left(v\right)=\left(2\cdot\dist mv+\C\right)}}$,
according to Theorem \ref{thm:Gathering-tree-approximation-by-MWCDS}.
The algorithm uses Algorithm \ref{alg:DS} to produce an IDS for Algorithm
\ref{alg:CDS}. The CDS with minimal weight is assigned to $cds$,
the corresponding MULE location is assigned to $m$, $LB$ is the
corresponding sum $LB1+LB2$, and $r$ is the first node selected
for the corresponding $ds'$.

After selecting the minimal weight CDS, the algorithm uses the \textsl{BFS
algorithm} to construct a directed spanning tree towards the root
node $r$, in the induced subgraph of $cds$ in $G$ (lines \ref{algline_Tree:l11}
- \ref{algline_Tree:l15}). In this way, the $cds$ nodes compose
the BackBone of the gathering tree. Then, each node in $\mbox{\ensuremath{V\setminus cds}}$
is connected by an outgoing edge to the nearest node in $cds$.

Finally, the algorithm returns the result tree as $T$, the MULE location
$m$, and the approximation ratio $\alpha$ of the result.

\subsection{Correctness}

In this section, we will show the correctness of our algorithms. We
will emphasize in advance that the algorithms are required to receive
a connected UDG graph with at least two nodes in order to return correct
structures of IDS, CDS, and gathering tree. However, for the approximation
ratio and for the lower bounds, additional constraints are required.
For correct lower bounds ($LB1$ and $LB2$) and the $\alpha$ parameter,
the graph should have a diameter of at least three. For correct approximation
of the $\MULE$ problem (i.e. $\left(1+\varepsilon\right)$), it is
required to be hold $\underset{v\in V_{BB}^{\ast}}{Average}\left[\dist{m^{\ast}}v\right]>1.3$,
according to Theorem \ref{thm:Gathering-tree-approximation-by-MWCDS}.
As for graphs that do not meet only the first condition (on the diameter),
these graphs are spread over a bounded area, and we will address this
situation later in Subsection \ref{subsec:Performance}. We will show
in a different approach that the result gathering tree still ensures
the approximation required. As for graphs that do not meet the second
condition, this issue still remains open.
\begin{defn}[A feasible dual solution and lower bound]
Consider the MWCDS problem and the dual problem $\MWCDSwD$. Vector
$\bar{y}$ will be called a \textsl{feasible dual solution}, if $\bar{y}$
is proven to be a solution to the $\MWCDSwD$ problem and satisfies
all its constraints. A \textit{feasible lower bound} is a value that
has been proven to have a \textit{feasible }\textsl{dual solution}
which is well defined, whose sum $\sum_{S\subseteq V\vcentcolon S\neq\emptyset}f\left(S\right)\cdot y_{S}$
is equal to this value. $y_{S}$ are the components of $\bar{y}$,
$V$ is the set of the graph nodes and $f\left(S\right)$ is defined
in Equation \ref{eq:f}.\label{def:feasible_LB}
\end{defn}

\paragraph{Correctness of Algorithm \ref{alg:DS}.}

Let $\text{\ensuremath{\mathscr{S}}}'$ be the collection of all \textsl{active}
subsets that Algorithm \ref{alg:DS} maintains in the current step
(i.e. $\text{\ensuremath{\mathscr{S}}}'\subseteq\text{\ensuremath{\mathscr{S}}}$),
and let $y'\left(S\right)$ be the value the algorithm assigns to
subset $S$ in the current step. We define the vector $\bar{y}'$
of size $\left|\text{\ensuremath{\mathscr{S}}}\right|$ (where $y'_{S}$
are its components) as follows: 
\begin{equation}
y'_{S}=\begin{cases}
y'\left(S\right) & for\;\left|S\right|=1\\
0 & for\;\left|S\right|>1
\end{cases}\quad\forall S\in\text{\ensuremath{\mathscr{S}}}\,.\label{eq:y'S}
\end{equation}
$\epsilon$ is the potential value of the selected node in the current
step. $A$ is the set of nodes that the algorithm has selected up
to the current step. From Lemma \ref{lem:IDS}, Algorithm \ref{alg:DS}
returns an IDS. From Lemma \ref{lem:LB1}, if the graph diameter is
at least three then it returns a feasible lower bound.
\begin{prop}
Once node $v$ is selected by Algorithm \ref{alg:DS} it becomes packed
and $\epsilon\left(v\right)=\infty$, and it remains so until the
end.\label{prop:alg1-remains-packed}
\end{prop}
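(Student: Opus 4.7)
The plan is to combine Proposition \ref{prop:alg1-packed} (which already establishes that $v$ is packed at the moment of selection) with a careful analysis of how the algorithm modifies $y(\cdot)$, $g(\cdot)$, and $\mathscr{S}$ in subsequent iterations. The goal is to show that neither the sum $\sum_{S\in\mathscr{S}\colon v\in N(S)} y(S)$ nor the sum $\sum_{S\in\mathscr{S}\colon v\in N(S)} g(S)$ changes in any later step, which gives both claims simultaneously.

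First, I would observe that immediately after $v$ is selected, line \ref{algline_DS:l9} sets $g(S)\leftarrow 0$ for every $S\in\mathscr{S}$ with $v\in N(S)$. Consequently, at the end of the selection step, $\sum_{S\in\mathscr{S}\colon v\in N(S)} g(S)=0$, and by the definition of $\epsilon(\cdot)$ in \eqref{eq:epsilon_v} we immediately get $\epsilon(v)=\infty$. Combined with Proposition \ref{prop:alg1-packed}, $v$ is also packed at this moment.

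The core of the proof is then to show these two properties persist. In any later iteration, three things can happen: (i) values $y(S)$ are incremented in line \ref{algline_DS:l8}; (ii) $g(S)$ is set to $0$ for additional subsets in line \ref{algline_DS:l9}; and (iii) new \emph{active} subsets are added to $\mathscr{S}$ in lines \ref{algline_DS:l11}--\ref{algline_DS:l16}. For (i), any $S$ with $v\in N(S)$ already has $g(S)=0$, so $y(S)$ is unchanged. For (ii), $g(S)$ can only decrease (from $1$ to $0$) and is never reset to $1$ in Algorithm \ref{alg:DS}, so every $S$ with $v\in N(S)$ keeps $g(S)=0$. The only nontrivial point is (iii): I would argue that the newly created subsets $S'=V\setminus\{u\}$ (for $u\in V\setminus A$) never contribute to the neighborhood of $v$. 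Indeed, since $v$ has been selected it belongs to $A$, so $u\neq v$ and therefore $v\in V\setminus\{u\}=S'$; then $N(S')\subseteq\{u\}$ and $v\notin N(S')$.

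The combination of these three observations yields that the family $\{S\in\mathscr{S}\colon v\in N(S)\}$ gains no new members after $v$ is selected, its members keep $g(S)=0$ forever, and their $y(S)$ values are frozen. Hence $\sum_{S\colon v\in N(S)} g(S)=0$ and $\sum_{S\colon v\in N(S)} y(S)=c_v$ both remain true until termination, proving $\epsilon(v)=\infty$ and that $v$ stays packed. The main (and essentially only) subtle step is the argument about newly added subsets in (iii); once one sees that these sets have at most one neighbor (the excluded vertex $u$), the rest of the proof is bookkeeping.
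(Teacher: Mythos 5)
Your proposal is correct and follows essentially the same route as the paper's proof: both establish packedness and $\epsilon(v)=\infty$ at selection time via Proposition \ref{prop:alg1-packed} and line \ref{algline_DS:l9}, and both observe that the only threat to persistence is a newly created active subset $S'=V\setminus\{u\}$ with $g(S')=1$, which cannot neighbor $v$ because its unique neighbor is $u\in V\setminus A$ while $v\in A$. Your explicit case split (i)--(iii) is a slightly more systematic write-up of the same bookkeeping.
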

\begin{proof}
From Proposition \ref{prop:alg1-packed}, after node $v$ is chosen
by the algorithm it is \textsl{packed}. According to lines \ref{algline_DS:l7}
and \ref{algline_DS:l9}, after node $v$ is chosen, there is $\mbox{\ensuremath{\sum_{S\in\text{\ensuremath{\mathscr{S}}}'\colon v\in N\left(S\right)}g\left(S\right)=0}}$,
and therefore by Equation \ref{eq:epsilon_v} $\epsilon\left(v\right)=\infty$.
The only place that can change the packaging condition or the $\infty$
potential value is in line \ref{algline_DS:l14}, since only here
the algorithm performs $g\left(S'\right)\leftarrow1$. But this assignment
is performed only for new subsets. For both conditions, subset $S'$
will influence node $v$ only if $v\in N\left(S'\right)$. New subsets
are of the form $V\setminus\left\{ u\right\} $, and they have only
one neighbor which is $u$. Since node $v$ is already selected and
added to $A$, it is not contained in the set $V\setminus A$. Hence,
according to line \ref{algline_DS:l11}, the subset $V\setminus\left\{ v\right\} $
will not be added later in the algorithm. Therefore, new subsets will
not influence node $v$, which will remain \textsl{packed} and $\epsilon\left(v\right)=\infty$.
\end{proof}
\begin{prop}
In Algorithm \ref{alg:DS}, if $g\left(S\right)=0$ for all $S\in\text{\ensuremath{\mathscr{S}}}'$,
then there must be $A=V$.\label{prop:A=00003DV}
\end{prop}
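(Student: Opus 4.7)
The plan is to prove the contrapositive: if $A \neq V$, then at least one active subset $S \in \text{\ensuremath{\mathscr{S}}}'$ has $g(S) = 1$. I would establish this via an invariant that Algorithm \ref{alg:DS} maintains, namely that at the end of every iteration of the REPEAT loop (immediately after line \ref{algline_DS:l16}), every node $u \in V \setminus A$ has at least one active subset $S \in \text{\ensuremath{\mathscr{S}}}'$ with $u \in N(S)$ and $g(S) = 1$.

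To verify the invariant I would trace through the for loop on lines \ref{algline_DS:l11}--\ref{algline_DS:l16}, which iterates over exactly the nodes $u \in V \setminus A$. For each such $u$, either the guard $\sum_{S \in \text{\ensuremath{\mathscr{S}}}'\,:\,u \in N(S)} g(S) = 0$ fails, in which case $u$ already has an un-restricted neighboring active subset; or it succeeds, and the algorithm appends $S' = V \setminus \{u\}$ to $\text{\ensuremath{\mathscr{S}}}$ with $g(S') = 1$. In the latter case $u \in N(S')$ because the graph is connected with $|V| \geq 2$, so $u$ has at least one edge to a vertex in $V \setminus \{u\}$. Either way $u$ gains the desired un-restricted adjacent active subset. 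I would also observe that processing a different $u'$ later in the same for loop cannot invalidate the conclusion for $u$: the only new subset the loop may create while handling $u'$ is $V \setminus \{u'\}$, whose sole neighbor is $u'$, so it neither removes nor modifies the $g$-value of any previously existing subset adjacent to $u$.

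Given the invariant, the proposition is immediate. Suppose $g(S) = 0$ for every $S \in \text{\ensuremath{\mathscr{S}}}'$ and, for contradiction, some $u \in V \setminus A$ existed; the invariant would exhibit a subset $S \in \text{\ensuremath{\mathscr{S}}}'$ with $u \in N(S)$ and $g(S) = 1$, contradicting the hypothesis. Hence $V \setminus A = \emptyset$, i.e., $A = V$. I do not foresee a serious obstacle here; the only delicate point is pinning down the moment at which the invariant holds, namely immediately after line \ref{algline_DS:l16}, which is also the natural point at which the REPEAT condition on line \ref{algline_DS:l17} is evaluated and at which the hypothesis of the proposition would be checked.
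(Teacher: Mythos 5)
Your proof is correct and takes essentially the same route as the paper's: both hinge on the fact that the loop in lines \ref{algline_DS:l11}--\ref{algline_DS:l16} guarantees every node of $V\setminus A$ an adjacent \textsl{un-restricted} active subset, with the fallback subset $V\setminus\left\{ u\right\} $ whose $g$-value can only be zeroed by selecting $u$ itself. The paper argues by contradiction and goes straight to exhibiting $V\setminus\left\{ u\right\} \in\text{\ensuremath{\mathscr{S}}}'$, whereas you argue the contrapositive via a loop invariant and split on whether the guard in line \ref{algline_DS:l12} fires; the mechanism is identical, and your explicit case analysis merely spells out a step the paper leaves implicit.
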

\begin{proof}
Let us assume in contradiction that $\mbox{\ensuremath{g\left(S\right)=0\;\forall S\in\text{\ensuremath{\mathscr{S}}}'}}$,
but $A\neq V$. If so, there must be a node $u\in V\setminus A$,
and from lines \ref{algline_DS:l12} - \ref{algline_DS:l14}, there
must be a subset $V\setminus\left\{ u\right\} \in\text{\ensuremath{\mathscr{S}}}'$
which is adjacent to $u$. Since $u$ is the only neighbor of $V\setminus\left\{ u\right\} $,
and $u$ is not yet selected by the algorithm, then the \textit{if}
statement in line \ref{algline_DS:l9} never met, nor does the assignment
$g\left(V\setminus\left\{ u\right\} \right)\leftarrow0$. This is
in contradiction to the assumption $\mbox{\ensuremath{g\left(S\right)=0\;\forall S\in\text{\ensuremath{\mathscr{S}}}'}}$,
and therefore must be $A=V$.
\end{proof}
\begin{prop}
The set $ds$ constructed by Algorithm \ref{alg:DS} is an IS, and
if $A=V$ then $ds$ must be a MIS.\label{prop:MIS}
\end{prop}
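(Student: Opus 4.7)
The plan is to prove the two assertions separately, reading the proof off the single line in Algorithm~\ref{alg:DS} that mutates $ds$, namely line~\ref{algline_DS:l6}.

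For the first claim (that $ds$ is an IS), I would argue by induction on the steps of the algorithm. The base case is trivial since $ds$ is initialized to $\emptyset$, which is vacuously independent. For the inductive step, the only operation that can enlarge $ds$ is line~\ref{algline_DS:l6}, and that line adds $v$ to $ds$ only when $v \notin N(ds) \cup ds$; in particular $v$ has no neighbor in $ds$ at the moment of insertion. Since the inductive hypothesis says $ds$ was independent before, and the newly added $v$ is not adjacent to any existing member, the resulting set remains independent. No node is ever removed from $ds$, so independence is preserved for the rest of the execution.

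For the second claim, assume $A=V$ at termination and let $u \in V \setminus ds$ be arbitrary; I want to show that $ds \cup \{u\}$ is not independent, which forces $ds$ to be maximal. Since $A=V$, node $u$ was selected by the algorithm in line~\ref{algline_DS:l4} at some step $k$. At that step, line~\ref{algline_DS:l6} did not add $u$ to $ds$ (otherwise $u$ would be in $ds$ at the end, contradicting the choice of $u$). By the negation of the \emph{if}-condition, it must have been the case that $u \in N(ds) \cup ds$ at step $k$. Because $u \notin ds$ at the end and $ds$ is only enlarged over time, in fact $u \notin ds$ at step $k$ either; hence $u \in N(ds)$ at step $k$, i.e.\ $u$ had some neighbor $w$ in $ds$ at that moment. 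Since $ds$ only grows, $w$ remains in $ds$ at termination, so $u$ still has a neighbor in $ds$. Thus $ds \cup \{u\}$ is not independent, and since $u$ was arbitrary, $ds$ is a maximal independent set.

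The argument is essentially bookkeeping on the monotonicity of $ds$, so no step is truly hard; the only subtlety worth flagging is noticing that one must use $A=V$ to guarantee that every candidate outsider $u$ was actually visited by line~\ref{algline_DS:l4} at some point (otherwise $u$ might never have been tested against the condition in line~\ref{algline_DS:l6} and could in principle extend $ds$). This is exactly why the hypothesis $A=V$ is needed for maximality, while plain independence holds unconditionally.
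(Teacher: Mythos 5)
Your proof is correct and follows essentially the same route as the paper's: independence is read off the guard in line~\ref{algline_DS:l6}, and maximality follows because $A=V$ guarantees every node outside $ds$ was selected at some step, at which point the failed \emph{if}-condition forces it to have a neighbor in $ds$. Your write-up is slightly more explicit than the paper's (a formal induction for independence, and a direct argument using the monotonicity of $ds$ in place of the paper's proof by contradiction), but the underlying idea is identical.
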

\begin{proof}
The set $ds$ is IS, since, in line \ref{algline_DS:l6} a node is
added to $ds$ only if it is independent of the $ds$ nodes. Note
that by line \ref{algline_DS:l5}, if there is a step where $A=V$,
it requires that each node was selected at least once. Let us assume
in contradiction that $A=V$, but the set $ds$ is not a MIS. Then
there is at least one node $v\in V\setminus ds$ that can be added
to $ds$, so it must be $v\notin N\left(ds\right)\cup ds$. But since
$v$ is already chosen by the algorithm, it is in contradiction to
the condition in line \ref{algline_DS:l6}. Hence, $ds$ must be MIS.
\end{proof}
\begin{prop}
Each node is selected at most once during Algorithm \ref{alg:DS}.\label{prop:alg1-node-selected-once}
\end{prop}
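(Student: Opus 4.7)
The plan is to combine Proposition \ref{prop:alg1-remains-packed} with an invariant on the potentials of the not-yet-selected nodes. By Proposition \ref{prop:alg1-remains-packed}, once a node $v$ enters $A$ its potential becomes $\epsilon(v)=\infty$ and stays $\infty$ until termination. Hence, to show that no node is ever re-selected in line \ref{algline_DS:l4}, it suffices to prove that at the start of every iteration every node $u\in V\setminus A$ has \emph{finite} potential, so that the $\arg\min$ in line \ref{algline_DS:l4} is necessarily attained at a not-yet-selected node.

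The key invariant I would establish by induction on the iteration is the following: at the start of every iteration and for every $u\in V\setminus A$, one has $\sum_{S\in\mathscr{S}\colon u\in N(S)} g(S)\ge 1$, so $\epsilon(u)<\infty$. For the base case, the initialization in lines \ref{algline_DS:l1}--\ref{algline_DS:l2} activates every singleton $\{w\}$ with $g(\{w\})=1$, and since $G$ is connected with $|V|\ge 2$, every $u$ has at least one neighbor $w$ for which $u\in N(\{w\})$. For the inductive step, the only way this sum can drop to zero for some $u\in V\setminus A$ during the current iteration is that line \ref{algline_DS:l9} restricts the last un-restricted subset neighboring $u$; but lines \ref{algline_DS:l11}--\ref{algline_DS:l16} then scan all $u\in V\setminus A$, detect this condition, and add the new active subset $V\setminus\{u\}$ with $g(V\setminus\{u\})=1$, whose only neighbor is $u$, restoring the invariant before the next iteration.

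It remains to rule out the degenerate case in which $V\setminus A=\emptyset$ at the moment a selection is attempted. If $A=V$, then Proposition \ref{prop:MIS} yields that $ds$ is a MIS and in particular a dominating set, so $N(ds)\cup ds=V$ and the \textbf{UNTIL} condition in line \ref{algline_DS:l17} has already fired at the end of the previous iteration; hence line \ref{algline_DS:l4} is never executed with $A=V$. Combining this with the invariant, every execution of line \ref{algline_DS:l4} picks a node from $V\setminus A$, proving the claim. I do not expect any serious obstacle; the only thing to verify carefully is that the scan in lines \ref{algline_DS:l11}--\ref{algline_DS:l16} indeed catches every $u\in V\setminus A$ whose sum dropped to zero during the iteration, which is immediate because that loop ranges over all such $u$ and checks the condition explicitly.
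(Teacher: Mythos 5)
Your proof is correct and follows essentially the same route as the paper: both rest on Proposition \ref{prop:alg1-remains-packed}, on the lines \ref{algline_DS:l11}--\ref{algline_DS:l16} mechanism guaranteeing that every not-yet-selected node keeps an un-restricted adjacent subset (which is exactly the content of the paper's Proposition \ref{prop:A=00003DV}), and on Proposition \ref{prop:MIS} to rule out $A=V$ before termination. The only difference is presentational --- you state the finite-potential invariant directly and argue forward, while the paper argues by contradiction from a selection with $\epsilon(v)=\infty$; the mathematical content is identical.
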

\begin{proof}
According to Proposition \ref{prop:alg1-remains-packed}, the potential
value of each node $v\in V$ selected by the algorithm becomes $\epsilon\left(v\right)=\infty$,
and remains so until the end. We want to show that there is no step
in which a node is selected with a potential value of $\epsilon\left(v\right)=\infty$.
To do this, let us assume in contradiction that there is a step where
node $v$ is selected with $\epsilon\left(v\right)=\infty$, but the
stop condition (line \ref{algline_DS:l17}) has not yet met. Then,
according to line \ref{algline_DS:l4}, for each node $u\in V$ holds
$\epsilon\left(u\right)=\infty$. Since $c_{u}<\infty$, then it must
be the first case in Equation \ref{eq:epsilon_v}. The graph is connected
with at least two nodes, so each subset in $\text{\ensuremath{\mathscr{S}}}'$
is adjacent to at least one node. Therefore, must be $g\left(S\right)=0$
for all $S\in\text{\ensuremath{\mathscr{S}}}'$, and according to
Proposition \ref{prop:A=00003DV}, must also hold $A=V$. Consequently,
from Proposition \ref{prop:MIS}, the set $ds$ must be a MIS, hence
it is also a DS. This is in contradiction to the assumption that $\mbox{\ensuremath{N\left(ds\right)\cup ds\neq V}}$.
Therefore, the claim is correct and node can not be selected twice.
\end{proof}
\begin{lem}
Given a connected UDG with at least two nodes and node weights $\mbox{\ensuremath{w\left(\cdot\right)\colon V\rightarrow R^{+}}}$,
Algorithm \ref{alg:DS} returns an IDS.\label{lem:IDS}
\end{lem}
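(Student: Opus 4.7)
The plan is to assemble the lemma directly from the propositions that have already been established (Propositions \ref{prop:alg1-packed}--\ref{prop:alg1-node-selected-once}). The two things to verify are that the returned set $ds$ is (i) independent, and (ii) dominating; and we also need to confirm that the algorithm actually halts so that a well-defined $ds$ is returned.

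First I would handle independence. This is essentially free from Proposition \ref{prop:MIS}: line \ref{algline_DS:l6} only adds a node $v$ to $ds$ when $v \notin N(ds) \cup ds$, so the invariant ``$ds$ is an IS'' is preserved at every step, and in particular in the returned set.

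Next I would argue termination, which is the one point that the earlier propositions do not state directly as a single statement. The outer loop (lines \ref{algline_DS:l3}--\ref{algline_DS:l17}) exits precisely when $N(ds)\cup ds = V$, which is exactly the definition of $ds$ being a DS, so once we reach the exit we get domination for free. To see that the exit is reached, I would invoke Proposition \ref{prop:alg1-node-selected-once}: each node can be selected at most once, so after at most $|V|$ iterations the set $A$ of chosen nodes equals $V$. But if the loop had not yet terminated by then, Proposition \ref{prop:MIS} tells us that $A=V$ forces $ds$ to be a MIS, and a MIS is in particular a DS, so $N(ds)\cup ds = V$ and the stopping condition is met at that iteration. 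Hence the algorithm halts in at most $|V|$ iterations with $ds$ a DS.

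Combining these two facts yields that $ds$ is an independent dominating set, which is the statement of the lemma. I do not anticipate a real obstacle: essentially all of the nontrivial content (packedness, the persistence of $\epsilon(v)=\infty$ after selection, and the MIS property when $A=V$) has been extracted into the preceding propositions, and this lemma only needs to glue them together while checking that the termination condition of the \texttt{repeat}-loop matches the definition of ``dominating''.
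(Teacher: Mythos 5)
Your proposal is correct and follows essentially the same route as the paper: independence comes from Proposition \ref{prop:MIS}, termination from Proposition \ref{prop:alg1-node-selected-once} (each node selected at most once, so the \texttt{repeat}-loop must halt, and it halts exactly when $N(ds)\cup ds=V$, i.e.\ when $ds$ is a DS). The only detail the paper adds that you omit is the brief opening check that at the first iteration some node has finite potential $\epsilon(v)<\infty$ and is therefore actually selectable, so that $ds$ is nonempty and the loop makes progress; this is minor and does not change the structure of the argument.
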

\begin{proof}
First, we argue that the set $ds$ contains at least one node. Since
the graph is connected and has at least two nodes, and $\text{\ensuremath{\mathscr{S}}}'$
is initialized to be $\mbox{\ensuremath{\left\{ \left\{ v\right\} \vcentcolon v\in V\right\} }}$,
then at the first iteration, each node $v\in V$ has at least one
subset $S\in\text{\ensuremath{\mathscr{S}}}'$ for which $v\in N\left(S\right)$.
Without loss of generality, let us consider node $v$ and some subset
$S$, where $v\in N\left(S\right)$. From line \ref{algline_DS:l2},
$g\left(S\right)\leftarrow1$, so for node $v$ holds: $\mbox{\ensuremath{\sum_{S\in\text{\ensuremath{\mathscr{S}}}'\colon v\in N\left(S\right)}g\left(S\right)>0}}$.
Since also $\mbox{\ensuremath{c_{v}=\frac{1}{100}\cdot w\left(v\right)<\infty}}$,
then from Equation \ref{eq:epsilon_v}, at least node $v$ will be
selected at the first iteration, and $\mbox{\ensuremath{\epsilon}\ensuremath{\ensuremath{\leq\epsilon\left(v\right)}<\ensuremath{\infty}}}$.

From line \ref{algline_DS:l17}, we can see that the algorithm halts
once set $ds$ becomes a DS. In Proposition \ref{prop:alg1-node-selected-once},
we saw that each node can be selected at most once. Then the iterations
must be halt, and this is as a result of the fact that the set $ds$
has become a DS.

Combine all that with Proposition \ref{prop:MIS}, we get that $ds$
is both IS and DS. Therefore it is an IDS, and hence the lemma is
correct.
\end{proof}
\begin{lem}
Given a connected UDG with a diameter of at least three and node weights
$\mbox{\ensuremath{w\left(\cdot\right)\colon V\rightarrow R^{+}}}$,
Algorithm \ref{alg:DS} returns a feasible lower bound, having $\bar{y}'$
as its feasible dual solution that also satisfies \linebreak{}
$\mbox{\ensuremath{\sum_{S\in\text{\ensuremath{\mathscr{S}}}\colon v\in N\left(S\right)}y'_{S}\leq\frac{1}{100}\cdot w\left(v\right)\;\forall v\in V}}$.\label{lem:LB1}
\end{lem}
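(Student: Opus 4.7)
The plan is to prove two claims about the vector $\bar{y}'$ defined in Equation~\eqref{eq:y'S}: first, the stronger inequality $\sum_{S\in\mathscr{S}\colon v\in N(S)} y'_S \leq \tfrac{1}{100}\,w(v)$ (which, together with the obvious non-negativity $y'_S\geq 0$, automatically implies the dual constraint $\leq w(v)$), and second, that the dual objective $F_{\bar{y}'}$ equals the value $LB1$ returned on line~\ref{algline_DS:l18}.

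The technical core is the following inductive packing invariant, maintained throughout the repeat-loop: at the end of every iteration, for every node $u\in V$,
\[
\sum_{S\in\mathscr{S}'\colon u\in N(S)} y(S) \;\leq\; c_u.
\]
All $y$-values start at zero, so the base case is trivial. For the inductive step, let $v$ be the node selected on line~\ref{algline_DS:l4} in the current iteration, with $\epsilon=\epsilon(v)$. The selection rule forces $\epsilon\leq\epsilon(u)$ for every node $u$; whenever $\sum_{S\colon u\in N(S)} g(S)>0$, Equation~\eqref{eq:epsilon_v} rearranges to
\[
\sum_{S\colon u\in N(S)} y(S) \;+\; \epsilon\sum_{S\colon u\in N(S)} g(S) \;\leq\; c_u,
\]
which is exactly the updated packing sum produced by line~\ref{algline_DS:l8}. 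Nodes $u$ with $\sum g(S)=0$ have their packing sum unchanged. Subsets freshly added in lines~\ref{algline_DS:l11}--\ref{algline_DS:l14} enter with $y$-value zero, so they contribute nothing to any node's packing sum. The invariant is therefore preserved, and termination of the loop follows from Proposition~\ref{prop:alg1-node-selected-once} together with Lemma~\ref{lem:IDS}.

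At termination the invariant yields $\sum_{S\in\mathscr{S}'\colon v\in N(S)} y(S)\leq c_v=\tfrac{1}{100}\,w(v)$. Since every summand is non-negative, dropping all non-singleton contributions gives $\sum_{u\in N(v)} y(\{u\})\leq\tfrac{1}{100}\,w(v)$. By the definition of $\bar{y}'$ only singletons contribute to $\sum_{S\colon v\in N(S)} y'_S$, so the bound transfers directly and proves the strong inequality in the lemma.

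It remains to evaluate $F_{\bar{y}'}=\sum_{S\subseteq V,\,S\neq\emptyset} f(S)\,y'_S$. Since $y'_S=0$ whenever $|S|>1$, the sum collapses to $\sum_{v\in V} f(\{v\})\,y(\{v\})$. A singleton $\{v\}$ is a dominating set precisely when $v$ is adjacent to every other vertex, which would force the graph to have diameter at most two. The diameter hypothesis rules this out, so $f(\{v\})=1$ for every $v\in V$, and $F_{\bar{y}'}=\sum_{v\in V} y(\{v\})=LB1$. Hence $LB1$ is a feasible lower bound in the sense of Definition~\ref{def:feasible_LB}. The main obstacle is the packing invariant itself, and in particular its interaction with the subsets freshly spawned in lines~\ref{algline_DS:l11}--\ref{algline_DS:l14}; the crucial point is that initializing them to $y=0$ ensures they neither violate nor affect the packing budgets already accumulated.
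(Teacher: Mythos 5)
Your proposal is correct and follows essentially the same route as the paper: the same inductive packing invariant $\sum_{S\in\text{\ensuremath{\mathscr{S}}}'\colon u\in N\left(S\right)}y\left(S\right)\leq c_{u}$ preserved via the minimality of $\epsilon\left(v\right)$ (you phrase the inductive step directly where the paper argues by contradiction, which is an immaterial difference), the same reduction of the dual constraint to the singleton contributions, and the same use of the diameter hypothesis to force $f\left(\left\{ v\right\} \right)=1$ so that $F_{\bar{y}'}=LB1$. No gaps.
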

\begin{proof}
We will prove by induction on the algorithm steps that it maintains
the constraints \linebreak{}
$\mbox{\ensuremath{\sum_{S\in\text{\ensuremath{\mathscr{S}}}'\colon v\in N\left(S\right)}y'\left(S\right)\leq c_{v}\;\forall v\in V}}$.
Since at the initialization phase $\mbox{\ensuremath{y'\left(S\right)=0\quad\forall S\in\text{\ensuremath{\mathscr{S}}}'}}$
(line \ref{algline_DS:l2}) and since $c_{v}>0$, then it is easy
to see that the constraints are maintained for base case $step=0$.
The induction hypothesis is that the statement is correct up to $step=k$,
and we will prove that the statement remains correct for $step=k+1$.
Note, the only line that can violate the constraints is line \ref{algline_DS:l8},
where $y'\left(S\right)$ is updated. In step $k+1$, the algorithm
selects node $v$. Therefore, according to Proposition \ref{prop:alg1-packed},
node $v$ is \textit{packed}, i.e. $\mbox{\ensuremath{\sum_{S\in\text{\ensuremath{\mathscr{S}}}'\colon v\in N\left(S\right)}y'\left(S\right)=c_{v}}}$,
and according to Proposition \ref{prop:alg1-remains-packed}, it remains
\textit{packed} until the end of the algorithm. Then, nodes where
the statement can be violated are only those that have not yet been
selected (i.e. from $V\setminus A$). Let us assume in contradiction
that there is a node $u\in V\setminus A$ that after step $k+1$ violates
the statement, and let $y'_{k}\left(S\right)$ be the value of $y'\left(S\right)$
at step $k$. We can write: 
\[
\begin{array}{rlc}
\sum_{S\in\text{\ensuremath{\mathscr{S}}}'\colon u\in N\left(S\right)}y'\left(S\right)> & c_{u}\\
\sum_{S\in\text{\ensuremath{\mathscr{S}}}'\colon u\in N\left(S\right)}\left(y'_{k}\left(S\right)+g\left(S\right)\epsilon\left(v\right)\right)> & c_{u}\\
\epsilon\left(v\right)> & \frac{c_{u}-\sum_{S\in\text{\ensuremath{\mathscr{S}}}'\colon u\in N\left(S\right)}y'_{k}\left(S\right)}{\sum_{S\in\text{\ensuremath{\mathscr{S}}}'\colon u\in N\left(S\right)}g\left(S\right)}= & \epsilon\left(u\right)\;.
\end{array}
\]
The result is in contradiction to the minimality of $\epsilon\left(v\right)$.
So, the statement is correct for step $k+1$. Hence, the algorithm
satisfies the constraints.

$\text{\ensuremath{\mathscr{S}}}'$ contains subsets $\mbox{\ensuremath{\left\{ v\right\} \vcentcolon v\in V}}$
(line \ref{algline_DS:l1}) and $\mbox{\ensuremath{y'_{S}=0\;\forall S:\left|S\right|>1}}$
(Equation \ref{eq:y'S}), then we get \linebreak{}
$\mbox{\ensuremath{\sum_{S\in\text{\ensuremath{\mathscr{S}}}\colon v\in N\left(S\right)}y'_{S}=\sum_{S\in\text{\ensuremath{\mathscr{S}}}'\colon v\in N\left(S\right)}y'_{S}}}$.
Since $\mbox{\ensuremath{\sum_{S\in\text{\ensuremath{\mathscr{S}}}'\colon v\in N\left(S\right)}y'\left(S\right)\leq c_{v}}}$,
then $\mbox{\ensuremath{\epsilon\left(v\right)\geq0}}$ (Equation
\ref{eq:epsilon_v}). Combine it with line \ref{algline_DS:l8}, we
get $\mbox{\ensuremath{y'\left(S\right)\geq0}}$. So, from Equation
\ref{eq:y'S}, $\mbox{\ensuremath{y'_{S}\leq y'\left(S\right)}}$.
Recall that $\mbox{\ensuremath{c_{v}\triangleq\frac{1}{100}\cdot w\left(v\right)}}$.
Then the following holds: 
\[
\sum_{S\in\text{\ensuremath{\mathscr{S}}}\colon v\in N\left(S\right)}y'_{S}\leq\sum_{S\in\text{\ensuremath{\mathscr{S}}}'\colon v\in N\left(S\right)}y'\left(S\right)\leq c_{v}=\frac{1}{100}\cdot w\left(v\right)\leq w\left(v\right)\;\forall v\in V\,.
\]
Since also $\mbox{\ensuremath{y'_{S}\geq0}}$ (Equation \ref{eq:y'S}),
then vector $\bar{y}'$ satisfies all the constraints of $\MWCDSwD$.
Hence, $\bar{y}'$ is a feasible dual solution.

Since the diameter of the graph is at least three, no subset of a
single node is a DS. That is, $\mbox{\ensuremath{f\left(\left\{ v\right\} \right)=1\;\forall v\in V}}$
(Equation \ref{eq:f}). Combine it with Equation \ref{eq:y'S} and
line \ref{algline_DS:l18}, we get 
\begin{equation}
\sum_{S\subseteq V\vcentcolon S\neq\emptyset}f\left(S\right)\cdot y'_{S}=\sum_{S\subseteq V\vcentcolon S\neq\emptyset}y'_{S}=\sum_{S\subseteq V\colon\left|S\right|=1}y'\left(S\right)=\sum_{v\in V}y'\left(\left\{ v\right\} \right)=LB1\,.\label{eq:LB1}
\end{equation}
Therefore, by Definition \ref{def:feasible_LB}, $LB1$ is a feasible
lower bound and $\bar{y}'$ is its feasible dual solution. Hence,
the lemma is correct.
\end{proof}

\paragraph{Correctness of Algorithm \ref{alg:CDS}.}

Let $\text{\ensuremath{\mathscr{S}}}''$ be the collection of the
\textsl{active} subsets that Algorithm \ref{alg:CDS} maintains in
the current step (i.e. $\text{\ensuremath{\mathscr{S}}}''\subseteq\text{\ensuremath{\mathscr{S}}}$),
and let $y''\left(S\right)$ be the value that the algorithm assigns
to subset $S$ in the current step. We define the vector $\bar{y}''$
of size $\left|\text{\ensuremath{\mathscr{S}}}\right|$ (where $y''_{S}$
are its components) as follows: 
\begin{equation}
y''_{S}=\begin{cases}
y''\left(S\right) & for\;S=\left\{ v\right\} \vcentcolon v\in ds\\
0 & otherwise
\end{cases}\quad\forall S\in\text{\ensuremath{\mathscr{S}}}\,.\label{eq:y''S}
\end{equation}
$\epsilon$ is the potential value of the selected node in the current
step. Let $\text{\ensuremath{\mathscr{S}}}^{1}$ be the collection
of all \textsl{un-restricted} \textsl{active} subsets that contain
some neighbor of the selected node, and $\text{\ensuremath{\mathscr{S}}}^{2}$
be the collection of all subsets in $\text{\ensuremath{\mathscr{S}}}^{1}$
that also contain a node which is both a neighbor of the selected
node and belongs to the $cds$ set. From Lemma \ref{lem:CDS}, Algorithm
\ref{alg:CDS} returns a CDS. From Lemma \ref{lem:LB2}, if the graph
diameter is at least three and node weights are $\mbox{\ensuremath{w\left(v\right)=2\cdot\dist mv+\C}}$,
then it returns a feasible lower bound.
\begin{prop}
At each step of Algorithm \ref{alg:CDS}, if node $v$ is contained
in some active subset, then there must be an un-restricted subset
which contains $v$. \label{prop:g(v)>0}
\end{prop}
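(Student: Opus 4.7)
The plan is to argue by induction on the step counter of Algorithm \ref{alg:CDS}, showing that after every step, any node $v$ appearing in some active subset appears in at least one un-restricted active subset. The base case is immediate from the initialization at lines \ref{algline_CDS:l1}--\ref{algline_CDS:l2}: the only active subsets are the singletons $\{v\}$ for $v \in ds$, and each one is given $g(\{v\}) = 1$.

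For the inductive step, I would fix a step in which node $v^{\ast}$ is selected and analyze what changes in $\text{\ensuremath{\mathscr{S}}}''$ and in $g$. The only newly created active subset is $S_{new} = \bigl(\bigcup_{S' \in \text{\ensuremath{\mathscr{S}}}^{1}} S'\bigr) \cup \{v^{\ast}\}$, formed at lines \ref{algline_CDS:l18}--\ref{algline_CDS:l19} with $g(S_{new}) = 1$, and the only subsets whose $g$-value is cleared to $0$ at this step are those $S$ satisfying $v^{\ast} \in N(S)$ (line \ref{algline_CDS:l16}). Then I would split on where the node $v$ under consideration came from.

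In the first case, $v$ was already in some active subset before the step. By the induction hypothesis, there exists an un-restricted active subset $S^{\ast}$ containing $v$. If $S^{\ast}$ remains un-restricted after the step, we are done. Otherwise the step must have triggered $v^{\ast} \in N(S^{\ast})$; by the definition of $N(\cdot)$ this forces $v^{\ast} \notin S^{\ast}$ and guarantees that $S^{\ast}$ contains a neighbor of $v^{\ast}$, i.e. $S^{\ast} \cap N(\{v^{\ast}\}) \neq \emptyset$. Combined with the fact that $g(S^{\ast}) = 1$ just before line \ref{algline_CDS:l16} executes, this places $S^{\ast} \in \text{\ensuremath{\mathscr{S}}}^{1}$, whence $S^{\ast} \subseteq S_{new}$ and therefore $v \in S_{new}$. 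Since $g(S_{new}) = 1$, the invariant persists. I would also note that the corner case $v = v^{\ast}$ with $v \in S^{\ast}$ cannot cause $S^{\ast}$ to become restricted, because $v^{\ast} \in S^{\ast}$ precludes $v^{\ast} \in N(S^{\ast})$. In the second case, $v$ was not in any active subset before the step, so $v$ must have been introduced by $S_{new}$, which forces $v = v^{\ast}$; then $v \in S_{new}$ and $S_{new}$ is un-restricted, closing the induction.

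The main obstacle I expect is not difficulty but bookkeeping precision: one must be scrupulous about the distinction between the condition $v^{\ast} \in N(S)$ used in line \ref{algline_CDS:l16} and the condition $S \cap N(\{v^{\ast}\}) \neq \emptyset$ used in the definition of $\text{\ensuremath{\mathscr{S}}}^{1}$, and must verify that whenever the former triggers, the latter also holds so that the restricted subset is genuinely absorbed into $S_{new}$. Once that implication is made explicit, all other verifications reduce to a direct reading of lines \ref{algline_CDS:l5}, \ref{algline_CDS:l16}, and \ref{algline_CDS:l18}--\ref{algline_CDS:l19}.
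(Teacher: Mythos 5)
Your proof is correct and follows essentially the same induction on the algorithm's steps as the paper, differing only in that you maintain the invariant globally for all nodes (absorbing the ``first appearance'' of a node into your second case) while the paper runs the induction per node starting from the step at which it first enters an active subset. Your explicit verification that $v^{\ast}\in N\left(S^{\ast}\right)$ implies $S^{\ast}\cap N\left(\left\{ v^{\ast}\right\} \right)\neq\emptyset$, so that any subset restricted at line \ref{algline_CDS:l16} is indeed absorbed into the new subset of line \ref{algline_CDS:l18}, is exactly the step the paper uses (stated there more tersely), so nothing is missing.
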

\begin{proof}
We will prove by induction on the steps, for each node separately.
Without loss of generality, let $v$ be the node. The base case is
the earliest step in which $v$ is contained in a subset. Let $k_{0}$
be this step and $S_{0}$ be this subset. If $v\in ds$, then by lines
\ref{algline_CDS:l1} and \ref{algline_CDS:l2}, the base case is
trivial. If $v\notin ds$, then by line \ref{algline_CDS:l18}, $v$
must be selected by the algorithm to be contained in $S_{0}$. By
line \ref{algline_CDS:l19}, the condition holds for the base case.
If $v$ is never selected, then it is irrelevant. The induction hypothesis
is that the statement holds for step $k>k_{0}$, and we will prove
that the statement still holds for step $k+1$. According to the induction
hypothesis, at step $k$ there is an \textsl{un-restricted} subset
$S'$ which contains $v$. Let $u$ be the node selected at step $k+1$.
If $u\notin N\left(S'\right)$, then by line \ref{algline_CDS:l16},
the situation remains the same. Let $S$ be the new subset at step
$k+1$. If $u\in N\left(S'\right)$, then $S'\in\text{\ensuremath{\mathscr{S}}}^{1}$
(line \ref{algline_CDS:l5}) and therefore $v\in S$ (line \ref{algline_CDS:l18}).
In line \ref{algline_CDS:l19}, $g\left(S\right)\leftarrow1$, so
the statement holds. Hence the claim is correct.
\end{proof}
\begin{prop}
After node $v$ is selected by Algorithm \ref{alg:CDS} it becomes
packed and $\epsilon\left(v\right)=\infty$. Once a node becomes packed
or the potential value becomes $\infty$, it remains so until the
end. \label{prop:alg2-remains-packed}
\end{prop}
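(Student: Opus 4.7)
The plan is to mirror the proof of Proposition \ref{prop:alg1-remains-packed} while coping with the fact that Algorithm \ref{alg:CDS} merges subsets instead of creating singletons of the form $V\setminus\{u\}$. For the first sentence of the statement, Proposition \ref{prop:alg2-packed} already gives that the selected node $v$ is packed once line \ref{algline_CDS:l15} fires. For $\epsilon(v)=\infty$, the for-loop at lines \ref{algline_CDS:l14}--\ref{algline_CDS:l17} sets $g(S)\leftarrow 0$ for every $S$ with $v\in N(S)$, so $\sum_{S\in\mathscr{S}:\,v\in N(S)}g(S)=0$ and the first case of Equation \ref{eq:epsilon_v} yields $\epsilon(v)=\infty$.

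For the persistence part I would isolate the only operations that can threaten the invariants. Line \ref{algline_CDS:l15} monotonically increases each $y(S)$ by $g(S)\epsilon\geq 0$, so a packed $u$ can only be unpacked through some $S$ with $u\in N(S)$ and $g(S)=1$; likewise $\epsilon(u)=\infty$ can drop below $\infty$ only if some $S$ with $u\in N(S)$ has its indicator flipped from $0$ back to $1$. Since line \ref{algline_CDS:l16} only sets $g$ to $0$, the lone source of danger is line \ref{algline_CDS:l19}, where the new subset $S_{\mathrm{new}}=\bigcup_{S'\in\mathscr{S}^{1}}S'\cup\{v'\}$ is inserted with $g(S_{\mathrm{new}})=1$. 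The goal is therefore to show, for every previously selected $u$, that $u\notin N(S_{\mathrm{new}})$, i.e., either $u\in S_{\mathrm{new}}$ or no neighbor of $u$ lies in $S_{\mathrm{new}}$.

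I would prove this by induction on the step index, maintaining the invariant: \emph{for every un-restricted $S\in\mathscr{S}$, if $S\cap N(\{u\})\neq\emptyset$ then $u\in S$.} The base case is the very step at which $u$ is selected, and it is secured by lines \ref{algline_CDS:l14}--\ref{algline_CDS:l17}, which restrict every $S$ with $u\in N(S)$. For the inductive step, any neighbor $w$ of $u$ found in $S_{\mathrm{new}}$ satisfies either $w\in S'$ for some $S'\in\mathscr{S}^{1}$ or $w=v'$. In the first case the invariant applied to the un-restricted $S'$ gives $u\in S'\subseteq S_{\mathrm{new}}$. In the second case $v'$ itself is adjacent to $u$, and the main obstacle is to exhibit an un-restricted subset that currently contains $u$, so that by the invariant it belongs to $\mathscr{S}^{1}$ and thereby carries $u$ into $S_{\mathrm{new}}$.

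The natural candidate is the subset created in line \ref{algline_CDS:l18} at the very step when $u$ was selected, followed along a chain of subsequent mergers: whenever this subset later becomes restricted by a further selection $v''$, that same step places it in $\mathscr{S}^{1}$ (because $v''$ is adjacent to a node of it) and the line-\ref{algline_CDS:l18} merger yields a fresh un-restricted superset that still contains $u$. Tracking this lineage is the delicate piece; once it is established, the invariant is preserved, $u\notin N(S_{\mathrm{new}})$ follows, and therefore both the packedness and the infinite potential of $u$ survive to the end of the algorithm.
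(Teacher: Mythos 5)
Your proposal is correct and follows essentially the same route as the paper: both arguments reduce the persistence claim to showing that the new subset created in line \ref{algline_CDS:l19} cannot have a previously selected node $u$ in its neighborhood, split into the case of a neighbor inside a merged $S'\in\mathscr{S}^{1}$ (impossible to hurt $u$ since no un-restricted subset was adjacent to $u$) and the case where the newly selected node itself is adjacent to $u$ (whence the un-restricted subset containing $u$ lands in $\mathscr{S}^{1}$ and carries $u$ into the union). The only difference is that the ``chain of mergers'' you sketch at the end is exactly Proposition \ref{prop:g(v)>0}, which the paper proves separately and simply cites at this point.
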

\begin{proof}
From Proposition \ref{prop:alg2-packed}, after node $v$ is selected
by the algorithm it is \textsl{packed}. According to lines \ref{algline_CDS:l14}
and \ref{algline_CDS:l16}, after node $v$ is selected, there is
$\mbox{\ensuremath{\sum_{S\in\text{\ensuremath{\mathscr{S}}}''\colon v\in N\left(S\right)}g\left(S\right)=0}}$,
and therefore by Equation \ref{eq:epsilon_v}, $\epsilon\left(v\right)=\infty$.
The only place that can change the \textsl{packing} condition or the
$\infty$ potential value is in line \ref{algline_CDS:l19}, since
only here the algorithm performs $g\left(S\right)\leftarrow1$. But
this assignment is performed only for new subsets. For both conditions,
subset $S$ will influence node $v$ only if $v\in N\left(S\right)$
and the assignment $g\left(S\right)\leftarrow1$ will be performed.
Let us assume in contradiction that there is a new subset that influences
$v$ at some later step. Let $k$ be the earliest step in which $v$
is influenced. Let $S$ be the new subset created in this step, and
$u$ be the selected node. $S$ consists of all nodes of subsets $\mbox{\ensuremath{S'\in\text{\ensuremath{\mathscr{S}}}''\vcentcolon g\left(S'\right)=1,S'\cap N\left(\left\{ u\right\} \right)\neq\emptyset}}$
and node $u$. None of these subsets can be adjacent to $v$ at step
$k-1$, since $\epsilon\left(v\right)=\infty$. Therefore, the only
possibility to influence $v$ is if $u$ and $v$ are neighbors. From
Proposition \ref{prop:g(v)>0}, there is a subset $\mbox{\ensuremath{S''\in\text{\ensuremath{\mathscr{S}}}''\vcentcolon v\in S'',g\left(S''\right)=1}}$.
By line \ref{algline_CDS:l5}, $S''\in\text{\ensuremath{\mathscr{S}}}^{1}$,
and therefore $v\in S$. This is in contradiction to the assumption
that $S$ influences $v$. Hence the claim is correct.
\end{proof}
\begin{prop}
Algorithm \ref{alg:CDS} does not select a node with potential value
$\infty$. \label{prop:chosen-not-infinity}
\end{prop}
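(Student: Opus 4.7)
My plan is to prove the proposition by contradiction. Suppose for some iteration the algorithm selects a node $v$ with $\epsilon(v)=\infty$. Because line \ref{algline_CDS:l4} chooses the node minimizing $\epsilon$, this forces $\epsilon(u)=\infty$ for every $u\in V$, which by Equation~\ref{eq:epsilon_v} means $\sum_{S\in\text{\ensuremath{\mathscr{S}}}''\colon u\in N(S)}g(S)=0$ for all $u\in V$. The goal is to exhibit a single $u$ for which this quantity is strictly positive, yielding the contradiction.

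First I would establish the invariant that at the start of every iteration that actually runs, $\sum_{S\in\text{\ensuremath{\mathscr{S}}}''}g(S)\geq 2$. This rests on two observations: (i) the loop body executes only when the stop condition $\sum g(S)=1$ fails, and (ii) the sum never drops below one, because the initialization (lines \ref{algline_CDS:l1}--\ref{algline_CDS:l2}) starts it at $|ds|\geq 1$ (guaranteed by Lemma~\ref{lem:IDS}), and each pass through the loop body appends a new subset with $g(S)\leftarrow 1$ in line \ref{algline_CDS:l19}. Thus at the moment a selection is made, at least two distinct un-restricted subsets are present in $\text{\ensuremath{\mathscr{S}}}''$.

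Next I would exploit the connectivity of $G$. Among any two distinct un-restricted subsets at most one can equal $V$ itself (as a collection element), so pick an un-restricted $S\in\text{\ensuremath{\mathscr{S}}}''$ with $S\subsetneq V$. Connectivity yields some edge leaving $S$, hence $N(S)\neq\emptyset$; let $u\in N(S)$. Because $S$ is un-restricted, no neighbor of $S$ is packed, so in particular $u$ is not packed, which together with the constraint invariant $\sum_{S'\colon u\in N(S')}y''(S')\leq c_u$ (proved in direct analogy with Lemma~\ref{lem:LB1} for Algorithm~\ref{alg:CDS}) gives $c_u-\sum_{S'\colon u\in N(S')}y''(S')>0$. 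Meanwhile $\sum_{S'\colon u\in N(S')}g(S')\geq g(S)=1>0$, so Equation~\ref{eq:epsilon_v} yields $\epsilon(u)<\infty$, contradicting the assumption that all potentials are infinite.

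The main obstacle, and the only place the argument requires care, is justifying that there is always an un-restricted subset strictly contained in $V$ at the moment of selection; once this structural fact is in hand, the rest follows directly from the definitions of ``packed'' and ``restricted'' together with the same monotone-constraint invariant used for Algorithm~\ref{alg:DS}. I would handle the edge case where the new subset created in line \ref{algline_CDS:l18} coincides with $V$ by noting that even then a second, earlier un-restricted subset must exist (by the $\geq 2$ invariant), and that second subset provides the required $S\subsetneq V$.
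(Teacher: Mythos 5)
Your proposal is correct and is essentially the paper's own argument run in the contrapositive direction: both reduce the claim to the observation that, by connectivity, any un-restricted subset $S\subsetneq V$ has a neighbor $u$ with $\sum_{S'\colon u\in N\left(S'\right)}g\left(S'\right)\geq1$ and hence $\epsilon\left(u\right)<\infty$, and both obtain such an $S$ from the stop condition of line \ref{algline_CDS:l20} combined with the fact that line \ref{algline_CDS:l19} always leaves at least one un-restricted subset. The only imprecision is that your invariant $\sum_{S}g\left(S\right)\geq2$ does not cover the first pass of the repeat-until loop, which executes unconditionally and starts with $\sum_{S}g\left(S\right)=\left|ds\right|$, possibly equal to $1$; there the required un-restricted proper subset exists anyway, since the initial active subsets are singletons and $\left|V\right|\geq2$ --- a lacuna the paper's own proof shares, as it too appeals to ``the stop condition at the end of step $k-1$''.
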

\begin{proof}
Let us assume in contradiction that there is a step in which the algorithm
selects a node with potential value $\infty$. Consider $k$ to be
the earliest step at which such a node is selected, and let $v$ be
this node. By line \ref{algline_CDS:l4}, must hold $\mbox{\ensuremath{\epsilon\left(u\right)=\infty}}$
for all $u\in V$. From Equation \ref{eq:epsilon_v}, and since $c_{v}<\infty$,
it requires that $\mbox{\ensuremath{\sum_{S\in\text{\ensuremath{\mathscr{S}}}''\colon u\in N\left(S\right)}g\left(S\right)=0}}$
for all $u\in V$. Since the graph is connected, it implies that there
is no \textsl{un-restricted} subset $S\in\text{\ensuremath{\mathscr{S}}}''\vcentcolon\left|S\right|<\left|V\right|$.
According to line \ref{algline_CDS:l19}, we know that there must
be an \textsl{un-restricted} subset $\mbox{\ensuremath{S'\in\text{\ensuremath{\mathscr{S}}}''}}$.
Then this subset must be $S'=V$. So it also has to be the only \textsl{un-restricted}
subset. This is in contradiction to the stop condition (line \ref{algline_CDS:l20})
at the end of step $k-1$. Hence the claim is correct.
\end{proof}
\begin{prop}
At each step of Algorithm \ref{alg:CDS}, the union of all un-restricted
active subsets contains the set $cds$. \label{prop:union-contains-CDS}
\end{prop}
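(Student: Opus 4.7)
The plan is to prove the statement by induction on the step counter of Algorithm~\ref{alg:CDS}, tracking both the evolution of $cds$ and the evolution of the set $U \triangleq \bigcup_{S\in\text{\ensuremath{\mathscr{S}}}''\vcentcolon g(S)=1} S$ in lock-step. I will show the stronger invariant $cds \subseteq U$ at the end of every step.

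For the base case, after lines~\ref{algline_CDS:l1}--\ref{algline_CDS:l2} we have $cds = ds$, $\text{\ensuremath{\mathscr{S}}}'' = \{\{v\} : v \in ds\}$, and $g(\{v\}) = 1$ for every $v \in ds$, so $U = \bigcup_{v\in ds}\{v\} = ds = cds$ and the invariant holds. For the inductive step, assume $cds_k \subseteq U_k$ after step $k$, pick up at step $k+1$ with the selected node $v$, and compare the two sides carefully. On the $U$ side the only changes are: (i) each $S$ with $v \in N(S)$ has $g(S)$ reset to $0$ in line~\ref{algline_CDS:l16}, and (ii) a new un-restricted subset $S_{\text{new}} = \bigl(\bigcup_{S'\in\text{\ensuremath{\mathscr{S}}}^{1}}S'\bigr)\cup\{v\}$ is created in lines~\ref{algline_CDS:l18}--\ref{algline_CDS:l19}. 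The key observation is that any previously un-restricted $S$ with $v\in N(S)$ satisfies $S\cap N(\{v\})\neq\emptyset$ and $v\notin S$, so $S\in\text{\ensuremath{\mathscr{S}}}^{1}$ and therefore $S \subseteq S_{\text{new}}$. Combined with the fact that subsets not losing their $g$-value retain all their nodes, this yields the absorption property $U_{k+1} \supseteq U_k \cup \{v\}$.

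On the $cds$ side, the new nodes added in lines~\ref{algline_CDS:l8} and \ref{algline_CDS:l10}--\ref{algline_CDS:l11} are exactly $v$ and, for each $S' \in \text{\ensuremath{\mathscr{S}}}^{1}\setminus\text{\ensuremath{\mathscr{S}}}^{2}$, one neighbor of $v$ chosen from $S'$. The node $v$ lies in $S_{\text{new}} \subseteq U_{k+1}$, and each chosen neighbor lies in some $S'\in\text{\ensuremath{\mathscr{S}}}^{1}$, hence in $S_{\text{new}}$ as well. Combining with the inductive hypothesis $cds_k \subseteq U_k \subseteq U_{k+1}$ gives $cds_{k+1}\subseteq U_{k+1}$, closing the induction.

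The main obstacle I anticipate is being precise about the bookkeeping of which subsets become restricted versus which persist: in particular, one has to notice that the ``bad'' case $v\in N(S)$ is precisely the case that places $S$ into $\text{\ensuremath{\mathscr{S}}}^{1}$, so everything that is removed from the un-restricted union is simultaneously re-absorbed into $S_{\text{new}}$. Once this correspondence is stated cleanly, the containment $cds_{k+1}\subseteq U_{k+1}$ follows element-by-element and no delicate arithmetic on the $y(\cdot)$ or $g(\cdot)$ values is required; the proposition is a purely set-theoretic bookkeeping fact about the algorithm.
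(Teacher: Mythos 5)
Your proof is correct. It uses the same underlying mechanism as the paper — namely that every un-restricted subset adjacent to the selected node $v$ lands in $\mathscr{S}^{1}$ and is therefore absorbed into the new un-restricted subset of line~\ref{algline_CDS:l18} before line~\ref{algline_CDS:l16} can strand any of its nodes — but you organize it differently: the paper argues per node (each $cds$ node lies in \emph{some} active subset, then invokes Proposition~\ref{prop:g(v)>0} to upgrade "active" to "un-restricted"), whereas you run a single global induction on the invariant $cds\subseteq U$, effectively inlining the content of Proposition~\ref{prop:g(v)>0}. Your version is in one respect tighter: the paper's proof asserts that every node of $cds\setminus ds$ "had to be selected by the algorithm," which is not literally true of the connector nodes added in line~\ref{algline_CDS:l11} (they are neighbors of the selected node, not selected themselves); you handle those nodes explicitly by observing that each is drawn from some $S'\in\mathscr{S}^{1}$ and hence lies in $S_{\text{new}}$. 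The cost of your route is that it silently re-proves a fact the paper already has on the shelf; the benefit is a self-contained argument that closes the small gap in the paper's phrasing.
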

\begin{proof}
By line \ref{algline_CDS:l1}, nodes from $ds$ are contained in some
\textsl{active} subset $S$. As for nodes from $cds\setminus ds$,
each node had to be selected by the algorithm to be added to $cds$.
Thus, by line \ref{algline_CDS:l18}, this node was also added to
some \textsl{active} subset. According to Proposition \ref{prop:g(v)>0},
a node that is contained in some \textsl{active} subset must also
be contained in an \textsl{un-restricted} subset. Hence, all the nodes
in $cds$ are included in the union.
\end{proof}
\begin{prop}
During Algorithm \ref{alg:CDS}, no node from subset $ds$ will be
selected.\label{prop:v-not-in-DS}
\end{prop}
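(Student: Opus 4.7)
The plan is to show that for every $v\in ds$ the potential value $\epsilon(v)$ is $\infty$ from the very first step of Algorithm \ref{alg:CDS} and stays $\infty$ throughout, so that Proposition \ref{prop:chosen-not-infinity} forbids selecting $v$.

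First I would handle the initialization. By line \ref{algline_CDS:l1}, the initial family of active subsets is exactly $\{\{u\}\vcentcolon u\in ds\}$. Fix $v\in ds$. For any singleton $\{u\}$ in this family with $u\neq v$, the recipient of $\{u\}$'s neighborhood is $N(\{u\})=\{w\in V\setminus\{u\}\vcentcolon (u,w)\in E\}$; since $ds$ is independent (guaranteed by Algorithm \ref{alg:DS} via Lemma \ref{lem:IDS}), $v$ is not adjacent to any other node of $ds$, so $v\notin N(\{u\})$. Also $v\notin N(\{v\})$ by definition of $N(\cdot)$. Hence $\sum_{S\in\text{\ensuremath{\mathscr{S}}}''\colon v\in N(S)}g(S)=0$ initially, and by Equation \ref{eq:epsilon_v} we obtain $\epsilon(v)=\infty$.

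Next I would argue persistence. Proposition \ref{prop:alg2-remains-packed} asserts that once a node's potential equals $\infty$ it remains $\infty$ until termination; the supporting proof relies only on the fact that $v$ lies in some \textsl{un-restricted} active subset (Proposition \ref{prop:g(v)>0}) so that any newly created subset $S$ formed in line \ref{algline_CDS:l18} necessarily contains $v$ whenever the selected node is adjacent to $v$, making $v\notin N(S)$ and thereby leaving $\epsilon(v)$ undisturbed. This hypothesis is met here because $\{v\}\in\text{\ensuremath{\mathscr{S}}}''$ is active at initialization with $g(\{v\})=1$, so Proposition \ref{prop:g(v)>0} applies to $v$ from step zero onward. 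Therefore $\epsilon(v)=\infty$ at every step.

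Finally, Proposition \ref{prop:chosen-not-infinity} says Algorithm \ref{alg:CDS} never selects a node of potential $\infty$, which together with the previous step yields the claim. The only subtle point is checking that the persistence argument of Proposition \ref{prop:alg2-remains-packed} is valid for a node whose potential is $\infty$ from the outset (not just one that attains $\infty$ after being selected), and this follows because the argument there only uses membership of $v$ in an un-restricted active subset, which is provided by the initial singleton $\{v\}$.
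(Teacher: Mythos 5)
Your proof is correct and follows essentially the same route as the paper: use the independence of $ds$ to show $\epsilon(v)=\infty$ at initialization, then invoke Proposition \ref{prop:alg2-remains-packed} for persistence and Proposition \ref{prop:chosen-not-infinity} to rule out selection. Your extra check that the persistence argument applies to a node whose potential is $\infty$ from step zero (via its initial un-restricted singleton and Proposition \ref{prop:g(v)>0}) is a careful touch the paper leaves implicit, but it is the same argument.
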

\begin{proof}
Let $v$ be some node from $ds$. Since $ds$ is an IS, at the initialization
stage (line \ref{algline_CDS:l1}) there is no subset $S\in\text{\ensuremath{\mathscr{S}}}''$
adjacent to $v$. So, $\mbox{\ensuremath{\sum_{S\in\text{\ensuremath{\mathscr{S}}}''\colon v\in N\left(S\right)}g\left(S\right)=0}}$,
and as a result $\mbox{\ensuremath{\epsilon\left(v\right)=\infty}}$.
From propositions \ref{prop:alg2-remains-packed} and \ref{prop:chosen-not-infinity},
the potential value of $v$ remains $\infty$ until the end, and therefore
the node will not be selected by the algorithm.
\end{proof}
\begin{prop}
For each node $v\in S\vcentcolon S\in\text{\ensuremath{\mathscr{S}}}''$,
if $v\notin cds$ then it must have a neighbor $\mbox{\ensuremath{u\in S\cap cds}}$.
\label{prop:neighbor-in-CDS}
\end{prop}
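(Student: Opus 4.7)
The plan is to prove the statement by induction on the steps of Algorithm~\ref{alg:CDS}. The base case is immediate: before the main loop runs, $\mathscr{S}'' = \{\{u\}\colon u \in ds\}$ and $cds = ds$, so every element of every active subset already lies in $cds$ and there is nothing to check.

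For the inductive step, suppose the invariant holds just before the step that selects $v$. Old subsets in $\mathscr{S}''$ are never modified by the algorithm—only their $g$-values may drop—and $cds$ grows monotonically, so their version of the invariant automatically persists. The only object to verify is the new subset $S = \bigl(\bigcup_{S' \in \mathscr{S}^1} S'\bigr) \cup \{v\}$ constructed in line~\ref{algline_CDS:l18}. For any $u \in S$ with $u \notin cds$ after the step's own additions, if $u \neq v$ then $u$ belongs to some $S' \in \mathscr{S}^1$, and the induction hypothesis applied to $S'$ delivers a neighbor $x \in S' \cap cds \subseteq S \cap cds$ of $u$.

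The delicate case is $u = v$, which can only occur when the test in line~\ref{algline_CDS:l7} fails and hence $|\mathscr{S}^1| = 1$; writing $\mathscr{S}^1 = \{S'\}$, every neighbor of $v$ inside $S$ lies in $S'$, so the remaining obligation is to show $S' \in \mathscr{S}^2$, i.e.\ that $v$ already has a neighbor in $S' \cap cds$. This is the principal obstacle: the plain inductive hypothesis for $S'$ only places $cds$-neighbors of \emph{inner} non-$cds$ members of $S'$ inside $S'$, with no built-in guarantee that one of those $cds$-members is adjacent to the \emph{outside} node $v$.

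To overcome this I intend to strengthen the induction with an auxiliary invariant: for every un-restricted $S \in \mathscr{S}''$ and every not-yet-packed $v' \in V \setminus S$ adjacent to $S$, $v'$ has a neighbor in $S \cap cds$. This is trivial at initialization and, once established, immediately closes the previous paragraph, since $v$ is selected (hence not packed, by Proposition~\ref{prop:chosen-not-infinity}) and adjacent to $S'$. Preservation of the auxiliary invariant in the case $|\mathscr{S}^1| > 1$ uses the bridge additions of lines~\ref{algline_CDS:l10}--\ref{algline_CDS:l11}, which drop a $cds$-neighbor of $v$ into every $S' \in \mathscr{S}^1 \setminus \mathscr{S}^2$ and hence into the new subset as well; in the case $|\mathscr{S}^1| = 1$ it follows from the auxiliary invariant at the previous step applied to the single $S'$, combined with Propositions~\ref{prop:g(v)>0} and~\ref{prop:alg2-remains-packed} to control how the newly merged subset acquires adjacent outside nodes.
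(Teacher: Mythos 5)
Your reduction to a single delicate case is correct, and your treatment of the nodes $u\neq v$ of the new subset is exactly what is needed. The gap is the auxiliary invariant you introduce to handle $u=v$: it is false. Consider the path $a-v-v''-e$ with consecutive distances $0.9$ (a valid connected UDG) and the IDS $ds=\left\{ a,e\right\} $, with weights under which $v$ is the first node selected by Algorithm \ref{alg:CDS} (this happens, for instance, for the weights induced by placing the MULE at $a$). Then $\text{\ensuremath{\mathscr{S}}}^{1}=\left\{ \left\{ a\right\} \right\} $, the test in line \ref{algline_CDS:l7} fails, $v$ is not added to $cds$, and the new un-restricted subset is $S=\left\{ a,v\right\} $. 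The node $v''$ is not packed, lies outside $S$, and is adjacent to $S$ only through $v$; its sole neighbor in $S$ is $v\notin cds$, so it has no neighbor in $S\cap cds=\left\{ a\right\} $. Hence your strengthened invariant already breaks one step into the algorithm, and it breaks precisely in the $\left|\text{\ensuremath{\mathscr{S}}}^{1}\right|=1$ case where you invoke it: a newly merged subset genuinely acquires adjacent outside nodes whose only access to it is the non-$cds$ node $v$, and no amount of bookkeeping with Propositions \ref{prop:g(v)>0} and \ref{prop:alg2-remains-packed} removes them.

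The fact you actually need is far weaker than the invariant and concerns only the selected node $v$ itself. Since $cds$ always contains $ds$ and $v\notin cds$, we have $v\notin ds$ (equivalently, Proposition \ref{prop:v-not-in-DS}); because $ds$ is a dominating set, $v$ has a neighbor $u\in ds$. That $u$ was placed in an active subset at initialization, so by Proposition \ref{prop:g(v)>0} it lies in some un-restricted subset $S''$ at the step where $v$ is selected; $S''$ contains the neighbor $u$ of $v$, so $S''\in\text{\ensuremath{\mathscr{S}}}^{1}$, and since $\left|\text{\ensuremath{\mathscr{S}}}^{1}\right|=1$ it is the unique $S'$. Thus $u\in S'\cap ds\subseteq S'\cap cds$ is the required neighbor of $v$, and it survives into $S=S'\cup\left\{ v\right\} $. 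This is essentially how the paper closes the argument (it phrases the proof non-inductively, taking $S$ without loss of generality to be the first subset that $v$ joins, but the substance is this domination argument rather than a strengthened induction). With that replacement for your auxiliary invariant, the rest of your inductive scaffolding goes through.
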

\begin{proof}
Consider some node $v$ and some subset $S$, where $S\in\text{\ensuremath{\mathscr{S}}}''$
and $v\in S\setminus cds$. Of course $v\notin ds$, consequently
$v$ was not added to $S$ at the initialization stage (step $0$),
but rather in a later step, say step $k>0$. Note that from propositions
\ref{prop:alg2-packed}, \ref{prop:alg2-remains-packed} and \ref{prop:chosen-not-infinity},
after node $v$ is selected, $\epsilon\left(v\right)=\infty$ until
the end of the algorithm, and so it will not be selected again. Without
loss of generality, let $S$ be the first subset $v$ added to. Otherwise,
since $v$ will not be selected again, by line \ref{algline_CDS:l18}
all other subsets containing $v$ also contain all $S$ nodes. Since
$ds$ is a DS and $v\notin ds$, then there is a node $\mbox{\ensuremath{u\in ds\cap N\left(\left\{ v\right\} \right)}}$.
Note that in step $0$, $u$ was contained in some subset, and therefore,
according to Proposition \ref{prop:g(v)>0}, in step $k-1$, there
is a subset $\mbox{\ensuremath{S'\in\text{\ensuremath{\mathscr{S}}}''\vcentcolon u\in S',g\left(S'\right)=1}}$.
In step $k$, $S'$ will be included in the collection $\text{\ensuremath{\mathscr{S}}}^{1}$.
We assumed that $v\notin cds$, so it must be because the condition
in line \ref{algline_CDS:l7} is not met. Thus it requires that $\left|\text{\ensuremath{\mathscr{S}}}^{1}\right|=1$,
and only $S'$ contained in $\text{\ensuremath{\mathscr{S}}}^{1}$.
Recall that the neighbor $\mbox{\ensuremath{u\in S'\cap ds}}$, hence
$\mbox{\ensuremath{u\in S'\cap cds}}$. According to line \ref{algline_CDS:l18},
$S$ contains $v$ and all $S'$ nodes, and therefore the claim is
correct.
\end{proof}
\begin{lem}
Given a connected UDG with at least two nodes, node weights $\mbox{\ensuremath{w\left(\cdot\right)\colon V\rightarrow R^{+}}}$,
and an IDS, Algorithm \ref{alg:CDS} returns a CDS.\label{lem:CDS}
\end{lem}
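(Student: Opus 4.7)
The plan is to verify the CDS property by handling domination and connectedness separately, after first establishing termination. Domination is essentially free: $cds$ is initialized to the input IDS $ds$ at line~\ref{algline_CDS:l1}, and every subsequent update to $cds$ (lines~\ref{algline_CDS:l8} and~\ref{algline_CDS:l11}) only inserts nodes, so the dominating property of $ds$ is inherited and preserved throughout the loop.

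For termination I would chain Propositions~\ref{prop:alg2-packed}, \ref{prop:alg2-remains-packed} and~\ref{prop:chosen-not-infinity}: every selected node becomes packed with potential $\infty$ immediately after selection and stays that way, while Proposition~\ref{prop:chosen-not-infinity} forbids selecting a node whose potential has already become $\infty$. Hence no node is selected more than once and the number of iterations is bounded by $|V|$. The same proposition also ensures that, as long as the stop condition $\sum_{S\in\mathscr{S}} g(S)=1$ is not yet satisfied, some node of finite potential is still available, so the loop cannot stall before the stop condition is reached.

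The hard part will be connectedness of the output $cds$. My plan is to maintain, by induction on the step index, the invariant that for every un-restricted active subset $S\in\mathscr{S}''$ the induced subgraph on $S\cap cds$ is connected in $G$. The base case is immediate since every active subset at initialization is $\{v\}$ with $v\in ds\subseteq cds$. For the inductive step, let $v$ be the selected node and $S=\{v\}\cup\bigcup_{S'\in\mathscr{S}^{1}}S'$ be the new subset formed at line~\ref{algline_CDS:l18}. I would split into cases: if the condition in line~\ref{algline_CDS:l7} fails, then $|\mathscr{S}^{1}|=1$ and $v\notin cds$, so $S\cap cds$ coincides with $S'\cap cds$ for the unique $S'\in\mathscr{S}^{1}$ and connectedness is inherited; if the condition holds, then $v$ is added to $cds$ and lines~\ref{algline_CDS:l10}–\ref{algline_CDS:l11} guarantee that every $S'\in\mathscr{S}^{1}$ contains a vertex of $cds$ adjacent to $v$ (directly for $S'\in\mathscr{S}^{2}$, and via the freshly added representative for $S'\in\mathscr{S}^{1}\setminus\mathscr{S}^{2}$). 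Gluing the inductively connected pieces $S'\cap cds$ to $v$ through these edges yields a connected $S\cap cds$.

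A subtle point I would be careful about is that the same step can enlarge $S''\cap cds$ for other un-restricted subsets $S''$ when line~\ref{algline_CDS:l11} inserts a vertex $u\in S''\setminus cds$ into $cds$; here Proposition~\ref{prop:neighbor-in-CDS} furnishes a neighbor of $u$ already in $S''\cap cds$, so $S''\cap cds$ gains $u$ together with an incident edge and remains connected, preserving the invariant for $S''$. When the loop finally terminates there is exactly one un-restricted subset $S^{\ast}$; Proposition~\ref{prop:union-contains-CDS} gives $cds\subseteq S^{\ast}$, and the invariant then yields that $cds=S^{\ast}\cap cds$ induces a connected subgraph of $G$. Combined with the inherited dominating property, this shows the returned $cds$ is a CDS.
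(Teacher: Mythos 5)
Your proposal is correct and follows essentially the same route as the paper's own proof: termination via Propositions \ref{prop:alg2-packed}, \ref{prop:alg2-remains-packed} and \ref{prop:chosen-not-infinity}, domination inherited from $ds$, and connectedness via the step-by-step invariant that $S\cap cds$ is connected for the un-restricted active subsets, with the same case split on line \ref{algline_CDS:l7} and the same use of Propositions \ref{prop:neighbor-in-CDS} and \ref{prop:union-contains-CDS} to glue the pieces and conclude at the final single un-restricted subset. The only (harmless) additions are your explicit remark that the loop cannot stall and your isolation of the line~\ref{algline_CDS:l11} side effect as a separate concern, which the paper folds into the $\mathscr{S}^{1}\setminus\mathscr{S}^{2}$ case.
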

\begin{proof}
First, we will show that the algorithm halts. From Proposition \ref{prop:alg2-packed},
the selected node is packed, and according to Proposition \ref{prop:alg2-remains-packed},
it remains so until the end. Thus after a node is selected, its potential
value will remain $\infty$. According to Proposition \ref{prop:chosen-not-infinity},
this node will not be selected again. Hence, the algorithm must halts
after at most $\left|V\right|$ steps.

We will show that the resulting $cds$ is indeed CDS. Since $cds$
contains $ds$ then it is dominates all $V$ nodes.

After the algorithm halts, by line \ref{algline_CDS:l20}, there must
be $\mbox{\ensuremath{\sum_{S\in\text{\ensuremath{\mathscr{S}}}''}g\left(S\right)=1}}$.
According to Proposition \ref{prop:union-contains-CDS}, the only
\textsl{un-restricted} subset $S\in\text{\ensuremath{\mathscr{S}}}''$
must contain $cds$. We will prove by induction on the algorithm steps
that it maintains the intersection $S'\cap cds$ connected for each
$S'\in\text{\ensuremath{\mathscr{S}}}''$ (and especially for $S$).
\\
According to line \ref{algline_CDS:l1}, initially, each subset $S'\in\text{\ensuremath{\mathscr{S}}}''$
contains a single node from $ds$. So the base case is trivial. The
induction hypothesis is that the statement holds for step $k$, and
we will prove that the statement still holds for step $k+1$. The
only new subset added during step $k+1$ is in line \ref{algline_CDS:l19}.
Consider $S'$ to be the new subset and $v$ to be the selected node.
$ds$ is a DS, and by Proposition \ref{prop:v-not-in-DS}, we know
that $v$ is not in $ds$. Then, there is a node $u\in ds$ which
is $u\in N\left(\left\{ v\right\} \right)$. By line \ref{algline_CDS:l1},
$u$ is contained in some subset, so according to Proposition \ref{prop:g(v)>0},
there is a subset $\mbox{\ensuremath{S''\vcentcolon u\in S''}}$ that
satisfies $g\left(S''\right)=1$. Therefore, $\text{\ensuremath{\mathscr{S}}}^{1}$
contains at least the subset $S''$.\\
If the condition in line \ref{algline_CDS:l7} is not met, then $v$
is not added to $cds$, and $\text{\ensuremath{\mathscr{S}}}^{1}$
contains only $S''$. From the induction hypothesis the statement
holds, since $\mbox{\ensuremath{S'\cap cds=\left(S''\cup\left\{ v\right\} \right)\cap cds=S''\cap cds}}$.\\
If the condition in line \ref{algline_CDS:l7} is met, then $v$ is
added to $cds$ and $\left|\text{\ensuremath{\mathscr{S}}}^{1}\right|>1$.
We will show that $\mbox{\ensuremath{\left(S''\cup\left\{ v\right\} \right)\cap cds}}$
is CS, for all $S''\in\text{\ensuremath{\mathscr{S}}}^{1}$. Therefore,
by line \ref{algline_CDS:l18}, $\mbox{\ensuremath{S'\cap cds}}$
must also be connected. Note that $\text{\ensuremath{\mathscr{S}}}^{2}\subseteq\text{\ensuremath{\mathscr{S}}}^{1}$.
If $S''\in\text{\ensuremath{\mathscr{S}}}^{2}$, then $v$ has a neighbor
$\mbox{\ensuremath{u\in S''\cap cds}}$. Hence $\mbox{\ensuremath{\left(S''\cup\left\{ v\right\} \right)\cap cds}}$
is also CS. If $S''\in\text{\ensuremath{\mathscr{S}}}^{1}\setminus\text{\ensuremath{\mathscr{S}}}^{2}$,
then by lines \ref{algline_CDS:l9} - \ref{algline_CDS:l12}, the
algorithm adds to $cds$ some neighbor of $v$. Let $u$ be this neighbor,
and note that $u\notin cds$. From Proposition \ref{prop:neighbor-in-CDS},
$u$ has a neighbor in $S''\cap cds$, then $\mbox{\ensuremath{S''\cap\left(cds\cup\left\{ u\right\} \right)}}$
is CS. After line \ref{algline_CDS:l11}, $S''\cap cds$ remains connected,
and so is $\mbox{\ensuremath{\left(S''\cup\left\{ v\right\} \right)\cap cds}}$.\\
The statement holds for step $k+1$. Therefore, $S'\cap cds$ is a
CS for each $S'\in\text{\ensuremath{\mathscr{S}}}''$.

If we consider the last step, $S\cap cds$ is also a CS. Since $\mbox{\ensuremath{S\cap cds=cds}}$,
then $cds$ must also be a CDS. Therefore, the lemma is correct.
\end{proof}
\begin{lem}
Given a connected UDG with a diameter of at least three and node weights
$\mbox{\ensuremath{w\left(v\right)=2\cdot\dist mv+\C}}$, Algorithm
\ref{alg:CDS} returns a feasible lower bound, having $\bar{y}''$
as its feasible dual solution that also satisfies \linebreak{}
$\mbox{\ensuremath{\sum_{S\in\text{\ensuremath{\mathscr{S}}}\colon v\in N\left(S\right)}y''_{S}\leq\frac{99}{100}\cdot w\left(v\right)\;\forall v\in V}}$.\label{lem:LB2}
\end{lem}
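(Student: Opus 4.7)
The plan is to mirror the structure of the proof of Lemma \ref{lem:LB1}: show that the capacity inequality $\sum_{S \in \mathscr{S}'' : v \in N(S)} y''(S) \leq c_v$ is maintained throughout Algorithm \ref{alg:CDS} by induction on steps, translate this via the zero padding of Equation \ref{eq:y''S} into the dual-feasibility inequality claimed by the lemma, and finally identify the resulting dual objective with $LB2$. The new twist compared with Lemma \ref{lem:LB1} is that the $y$-values are no longer initialized to zero, so the base case needs a separate geometric argument that exploits the specific form $w(v) = 2 \cdot \dist{m}{v} + \C$ together with the UDG structure.

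For the base case, at step $0$ we have $y''(\{u\}) = \tfrac{99}{500}(w(u) - 2)$ for every $u \in ds$, which is nonnegative since $w(u) \geq \C > 2$. Fix $v \in V$: the initial sum of interest ranges over singletons $\{u\}$ with $u \in N(v) \cap ds$. Because $ds$ is independent and all such $u$ lie inside the unit disk around $v$ while being pairwise at distance greater than one, a standard planar packing bound (angular pigeonhole combined with the law of cosines) gives $|N(v) \cap ds| \leq 5$. The triangle inequality yields $\dist{m}{u} \leq \dist{m}{v} + 1$ and hence $w(u) - 2 \leq w(v)$, so
\[
\sum_{u \in N(v) \cap ds} \tfrac{99}{500}(w(u) - 2) \;\leq\; 5 \cdot \tfrac{99}{500}\, w(v) \;=\; \tfrac{99}{100}\, w(v) \;=\; c_v,
\]
which is exactly the capacity constraint at step $0$.

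The inductive step then proceeds essentially verbatim as in Lemma \ref{lem:LB1}. The only line that can violate a capacity is line \ref{algline_CDS:l15}, where each un-restricted active subset having $v$ as a neighbor is incremented by the common value $\epsilon = \epsilon(v)$ of the selected node. By the minimality of $\epsilon(v)$ in line \ref{algline_CDS:l4} together with Equation \ref{eq:epsilon_v}, this increment cannot exceed the remaining slack at any un-selected node; Propositions \ref{prop:alg2-packed} and \ref{prop:alg2-remains-packed} handle the selected node, which becomes packed exactly to $c_v$ and stays packed. Thus throughout the run $\sum_{S \in \mathscr{S}'' : v \in N(S)} y''(S) \leq c_v = \tfrac{99}{100}\, w(v)$. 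Extending via the zero padding of Equation \ref{eq:y''S} to all of $\mathscr{S}$ gives the claimed inequality for $\bar{y}''$, and nonnegativity $y''_S \geq 0$ follows from nonnegative initialization and nonnegative increments.

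Finally, since the support of $\bar{y}''$ consists only of singletons $\{v\}$ with $v \in ds$, and since the diameter-at-least-three hypothesis guarantees that no single node dominates $V$ (so $f(\{v\}) = 1$ for every $v$ by Equation \ref{eq:f}), we obtain
\[
\sum_{S \subseteq V \,:\, S \neq \emptyset} f(S)\, y''_S \;=\; \sum_{v \in ds} y''(\{v\}) \;=\; LB2,
\]
which together with feasibility of $\bar{y}''$ matches Definition \ref{def:feasible_LB}. The main obstacle is the base case — specifically, the cancellation $5 \cdot \tfrac{99}{500} = \tfrac{99}{100}$ reveals that the somewhat peculiar initialization constant was chosen precisely so that the UDG packing bound of $5$ together with the Lipschitz-type estimate $w(u) - 2 \leq w(v)$ saturates the capacity $c_v$ from step zero onward; without either ingredient the bound would not close, and the remainder of the argument would be blocked.
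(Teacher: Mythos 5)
Your proposal is correct and follows essentially the same route as the paper's proof: the same induction on algorithm steps, the same base-case argument combining the five-neighbor independent-set packing bound in a UDG with the Lipschitz estimate $w(u)-2\leq w(v)$ for neighbors to saturate $c_v=\frac{99}{100}w(v)$, and the same inductive step via minimality of $\epsilon(v)$ together with Propositions \ref{prop:alg2-packed} and \ref{prop:alg2-remains-packed}. Your explicit remark that nonnegativity of the initial values needs $w(u)\geq\C>2$ is a small point the paper leaves implicit, but otherwise the two arguments coincide.
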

\begin{proof}
We will prove by induction on the algorithm steps that it maintains
the constraints \linebreak{}
$\mbox{\ensuremath{\sum_{S\in\text{\ensuremath{\mathscr{S}}}''\colon v\in N\left(S\right)}y''\left(S\right)\leq c_{v}\;\forall v\in V}}$.
In the initialization phase $\text{\ensuremath{\mathscr{S}}}''\triangleq\left\{ \left\{ v\right\} \vcentcolon v\in ds\right\} $,
so \linebreak{}
$\mbox{\ensuremath{\sum_{S\in\text{\ensuremath{\mathscr{S}}}''\colon v\in N\left(S\right)}y''\left(S\right)=\sum_{u\in ds\colon v\in N\left(\left\{ u\right\} \right)}y''\left(\left\{ u\right\} \right)}}$.
Recall that $y''\left(\left\{ u\right\} \right)$ are initialized
to $\mbox{\ensuremath{\frac{99}{500}\cdot\left(w\left(u\right)-2\right)}}$.
Since node weights are $\mbox{\ensuremath{w\left(u\right)\triangleq2\cdot\dist mu+\C}}$
and the distance between a pair of neighbors in UDG is at most $1$,
then $\mbox{\ensuremath{w\left(u\right)\leq w\left(v\right)+2\vcentcolon v\in N\left(\left\{ u\right\} \right)}}$.
$ds$ is an IS in UDG. Therefore, each node in the graph can have
at most $5$ neighbors in $ds$ (the kissing number problem). Formally,
$\mbox{\ensuremath{\left|ds\cap N\left(\left\{ v\right\} \right)\right|\leq5\;\forall v\in V}}$.
When we put it all together:

\[
\begin{array}{rl}
{\displaystyle \sum_{S\in\text{\ensuremath{\mathscr{S}}}''\colon v\in N\left(S\right)}}y''\left(S\right)= & {\displaystyle \sum_{u\in ds\colon v\in N\left(\left\{ u\right\} \right)}}y''\left(\left\{ u\right\} \right)\\
= & {\displaystyle \sum_{u\in ds\colon v\in N\left(\left\{ u\right\} \right)}}\frac{99}{500}\cdot\left(w\left(u\right)-2\right)\\
\leq & {\displaystyle \sum_{u\in ds\colon v\in N\left(\left\{ u\right\} \right)}}\frac{99}{500}\cdot\left(\left(w\left(v\right)+2\right)-2\right)\\
= & \left|ds\cap N\left(\left\{ v\right\} \right)\right|\cdot\frac{99}{500}\cdot w\left(v\right)\\
\leq & 5\cdot\frac{99}{500}\cdot w\left(v\right)=\frac{99}{100}\cdot w\left(v\right)=c_{v}\;.
\end{array}
\]

Thus the statement is correct for base case $step=0$. The induction
hypothesis is that the statement is correct up to $step=k$, and we
will prove that the statement remains correct for $step=k+1$. Note,
the only line that can violate the constraints is line \ref{algline_CDS:l15},
where $y''\left(S\right)$ is updated. In step $k+1$, the algorithm
selects node $v$. Therefore, according to Proposition \ref{prop:alg2-packed},
node $v$ is \textit{packed}, i.e. $\mbox{\ensuremath{\sum_{S\in\text{\ensuremath{\mathscr{S}}}''\colon v\in N\left(S\right)}y''\left(S\right)=c_{v}}}$,
and according to Proposition \ref{prop:alg2-remains-packed}, it remains
\textit{packed} until the end of the algorithm. Then, nodes where
the statement can be violated are only those that have not yet been
selected. Let us assume in contradiction that there is a node $u$
that has not yet been \textit{packed} and after step $k+1$ violates
the statement. Let $y''_{k}\left(S\right)$ be the value of $y''\left(S\right)$
at step $k$. So, we can write: 
\[
\begin{array}{rlc}
\sum_{S\in\text{\ensuremath{\mathscr{S}}}''\colon u\in N\left(S\right)}y''\left(S\right)> & c_{u}\\
\sum_{S\in\text{\ensuremath{\mathscr{S}}}''\colon u\in N\left(S\right)}\left(y''_{k}\left(S\right)+g\left(S\right)\epsilon\left(v\right)\right)> & c_{u}\\
\epsilon\left(v\right)> & \frac{c_{u}-\sum_{S\in\text{\ensuremath{\mathscr{S}}}''\colon u\in N\left(S\right)}y''_{k}\left(S\right)}{\sum_{S\in\text{\ensuremath{\mathscr{S}}}''\colon u\in N\left(S\right)}g\left(S\right)}= & \epsilon\left(u\right)\;.
\end{array}
\]
In contradiction to the minimality of $\epsilon\left(v\right)$. Thus
the statement is correct for step $k+1$. Hence, the algorithm satisfies
the constraints.

$\text{\ensuremath{\mathscr{S}}}''$ contains subsets $\mbox{\ensuremath{\left\{ v\right\} \vcentcolon v\in ds}}$
(line \ref{algline_CDS:l1}) and $\mbox{\ensuremath{y''_{S}=0\;\forall S\notin\left\{ \left\{ v\right\} \vcentcolon v\in ds\right\} }}$
(Equation \ref{eq:y''S}), then we get $\mbox{\ensuremath{\sum_{S\in\text{\ensuremath{\mathscr{S}}}\colon v\in N\left(S\right)}y''_{S}=\sum_{S\in\text{\ensuremath{\mathscr{S}}}''\colon v\in N\left(S\right)}y''_{S}}}$.
Since $\mbox{\ensuremath{\sum_{S\in\text{\ensuremath{\mathscr{S}}}''\colon v\in N\left(S\right)}y''\left(S\right)\leq c_{v}}}$,
then $\mbox{\ensuremath{\epsilon\left(v\right)\geq0}}$ (Equation
\ref{eq:epsilon_v}). By combining this with line \ref{algline_CDS:l15},
we get $\mbox{\ensuremath{y''\left(S\right)\geq0}}$. Consequently,
from Equation \ref{eq:y''S}, $\mbox{\ensuremath{y''_{S}\leq y''\left(S\right)}}$.
Recall that $\mbox{\ensuremath{c_{v}\triangleq\frac{99}{100}\cdot w\left(v\right)}}$.
Then the following holds: 
\[
\sum_{S\in\text{\ensuremath{\mathscr{S}}}\colon v\in N\left(S\right)}y''_{S}\leq\sum_{S\in\text{\ensuremath{\mathscr{S}}}''\colon v\in N\left(S\right)}y''\left(S\right)\leq c_{v}=\frac{99}{100}\cdot w\left(v\right)\leq w\left(v\right)\;\forall v\in V\,.
\]
Since also $\mbox{\ensuremath{y''_{S}\geq0}}$ (Equation \ref{eq:y''S}),
then vector $\bar{y}''$ satisfies all the constraints of $\MWCDSwD$.
Hence, $\bar{y}''$ is a feasible dual solution.

Since the diameter of the graph is at least three, so no subset of
a single node is a DS. That is, $\mbox{\ensuremath{f\left(\left\{ v\right\} \right)=1\;\forall v\in ds}}$
(Equation \ref{eq:f}). Combine it with Equation \ref{eq:y''S} and
line \ref{algline_CDS:l21}, we get 
\begin{equation}
\sum_{S\subseteq V\vcentcolon S\neq\emptyset}f\left(S\right)\cdot y''_{S}=\sum_{S\in\text{\ensuremath{\mathscr{S}}}}y''_{S}=\sum_{v\in ds}y''\left(\left\{ v\right\} \right)=LB2\,.\label{eq:LB2}
\end{equation}
Therefore, by Definition \ref{def:feasible_LB}, $LB2$ is a feasible
lower bound and $\bar{y}''$ is its feasible dual solution. Hence,
the lemma is correct.
\end{proof}

\paragraph{Correctness of Algorithm \ref{alg:Tree}.}

From Lemma \ref{lem:Tree}, Algorithm \ref{alg:Tree} returns a data
gathering tree and its corresponding MULE location. From Lemma \ref{lem:=0003B1},
if the graph diameter is at least three then it returns a feasible
approximation ratio.
\begin{lem}
Given a connected UDG with at least two nodes, Algorithm \ref{alg:Tree}
returns a data gathering tree and its corresponding MULE location.
\label{lem:Tree}
\end{lem}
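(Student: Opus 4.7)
The plan is to verify three things: (i) the loop in lines \ref{algline_Tree:l2}--\ref{algline_Tree:l10} terminates with well-defined $cds$, $m$ and $r$; (ii) the BFS phase (lines \ref{algline_Tree:l11}--\ref{algline_Tree:l15}) produces a valid directed spanning tree on the backbone; and (iii) the attachment phase (lines \ref{algline_Tree:l16}--\ref{algline_Tree:l19}) correctly roots every remaining node into this backbone, yielding a data gathering tree in the sense of Definition \ref{def:Data-gathering-tree}.

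First I would handle the loop. For every $m' \in V$ the weight function $w(v) = 2\cdot\dist{m'}v + \C$ is strictly positive, so Lemma \ref{lem:IDS} applies to $G$ and produces a nonempty independent dominating set $ds'$, and then Lemma \ref{lem:CDS} applies and produces a connected dominating set $cds'$. Because the sum $\sum_{v\in cds'}w(v)$ is finite and positive, after at least one iteration the update in line \ref{algline_Tree:l7} fires (trivially initialized with $cds = \emptyset$ whose sum one can take to be $+\infty$), so after the loop $cds$ is a CDS, $m \in V$, and $r$ is the first node selected by Algorithm \ref{alg:DS} on the iteration that produced the chosen $cds$, hence $r \in ds \subseteq cds$.

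Next, for the BFS phase, I would invoke Lemma \ref{lem:CDS} to get that $cds$ is a CDS, so the induced subgraph $SubG$ of $cds$ in $G$ is connected and contains $r$. Then a standard BFS from $r$ on $SubG$ returns, for every $v \in cds \setminus \{r\}$, a well-defined predecessor $p_v \in cds$ with $(v,p_v)\in E$, and the set of edges $\{(v\to p_v)\}$ forms a directed spanning tree of $cds$ in which every node of $cds$ has a directed path to $r$. This gives exactly the backbone structure that Theorem \ref{thm:Gathering-tree-approximation-by-MWCDS} and Lemma \ref{lem:OPT-bounded-from-above} rely on.

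For the attachment phase, since $cds$ is a dominating set, every $u \in V \setminus (cds \cup \{r\})$ has at least one neighbor in $cds$, so the $\arg\min$ in line \ref{algline_Tree:l17} is taken over a nonempty set and yields some $v \in cds$ with $(u,v)\in E$. Adding the directed edge $u\to v$ gives $u$ a directed path to $r$ via the backbone. Counting edges, exactly one outgoing edge is attached to each node of $V\setminus\{r\}$, so $|E_T|=|V|-1$, and since every node reaches $r$, the graph $(V,E_T)$ has no directed cycles and is connected as an undirected graph; hence it is a spanning tree rooted at $r$, matching Definition \ref{def:Data-gathering-tree}. The only mildly delicate point is making sure the loop actually assigns $m$ and $r$ at least once, which is why I would treat the initial $cds=\emptyset$ explicitly as a placeholder whose weight strictly exceeds the finite weight produced on the first iteration; after that, the correctness of $m$ and $r$ is immediate from how they are stored alongside $cds$ in line \ref{algline_Tree:l7}--\ref{algline_Tree:l8}.
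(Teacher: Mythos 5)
Your proposal is correct and follows essentially the same route as the paper: invoke Lemmas \ref{lem:IDS} and \ref{lem:CDS} to get that $cds$ is a CDS with $r\in ds'\subseteq cds$, use connectivity of the induced subgraph to justify the BFS spanning tree on the backbone, and use the domination property to attach every remaining node by a directed edge to a neighbor in $cds$, concluding that every node has a directed path to $r$. The only difference is that you explicitly patch the initialization of $cds=\emptyset$ in the selection loop (a detail the paper's proof passes over silently), which is a welcome but minor refinement rather than a different argument.
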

\begin{proof}
According to lines \ref{algline_Tree:l14} and \ref{algline_Tree:l18},
the structure consists of directed edges.\\
Since the node weight function is $0<w\left(\cdot\right)<\infty$
(line \ref{algline_Tree:l3}) and $\left|V\right|\geq2$, then according
to Lemma \ref{lem:IDS} the subset $ds'$ that produces Algorithm
\ref{alg:DS} (line \ref{algline_Tree:l4}) is an IDS, and according
to Lemma \ref{lem:CDS}, the subset $cds'$ produced by Algorithm
\ref{alg:CDS} (line \ref{algline_Tree:l5}) is a CDS. From line \ref{algline_Tree:l7},
also $cds$ is a CDS. Let $SubG$ be the induced sub graph of $cds$
in $G$. So, since $G$ is a connected graph and $cds$ is a CS, then
$SubG$ is a connected graph. By line \ref{algline_Tree:l8}, $r\in ds'$,
therefore, $r$ is a node in $SubG$. In line \ref{algline_Tree:l12},
since the \textsl{BFS algorithm} starts at node $r$ and the graph
is connected, then at line \ref{algline_Tree:l14} at least one incoming
edge must be added to $r$. From lines \ref{algline_Tree:l13}, \ref{algline_Tree:l14},
\ref{algline_Tree:l16} and \ref{algline_Tree:l18}, we see that each
node (except $r$) has a directed outgoing edge in the structure.
Hence, we can say that the structure spans all the nodes in $G$.\\
From the \textsl{BFS} mechanism to finding shortest paths, and since
for each node in $cds\setminus\left\{ r\right\} $, an added edge
is directed to its predecessor, after the \textsl{for} loop of line
\ref{algline_Tree:l13}, $r$ is reachable from all $cds\setminus\left\{ r\right\} $
nodes in the sub graph $SubG$. After the \textsl{for} loop of line
\ref{algline_Tree:l16}, it can be reached from any of the other nodes
of $V$ to some node in $cds$. Therefore, $r$ is also reachable
from all the nodes in the structure.\\
Since $G$ is a connected UDG and $cds$ is a DS, note that each directed
edge in the structure is added only between a pair of neighbors, so
it is also an undirected edge in $G$.\\
We have seen that the structure that constructs Algorithm \ref{alg:Tree}
is a directed spanning tree in $G$, and that $r$ is reachable from
all nodes. Thus, by Definition \ref{def:Data-gathering-tree}, it
can be said that the structure is a data gathering tree and $r$ is
its root.

According to the last iteration where line \ref{algline_Tree:l7}
is performed, we can see that $m$ is the MULE location used for building
$cds$, which is used later on to build the gathering tree. Hence,
the lemma is correct.
\end{proof}
\begin{lem}
Given a connected UDG with a diameter of at least three, Algorithm
\ref{alg:Tree} returns a feasible approximation ratio $\alpha$.\label{lem:=0003B1}
\end{lem}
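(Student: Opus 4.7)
The plan is to exhibit $\bar{y}=\bar{y}'+\bar{y}''$ (the pointwise sum of the dual vectors produced by Algorithms~\ref{alg:DS} and~\ref{alg:CDS} during the winning outer iteration of Algorithm~\ref{alg:Tree}) as a single feasible dual solution whose objective value equals $LB=LB1+LB2$, and then appeal directly to Definition~\ref{def:=0003B1}. First I would invoke Lemma~\ref{lem:Tree} to fix the output of Algorithm~\ref{alg:Tree} as a pair $(T,m)$ with $cds$ being a CDS in $G$; thus $\mathrm{Weight}(cds)=\sum_{v\in cds}w(v)$ is well defined with the weight function $w(v)=2\cdot\dist{m}{v}+\C$ used in line~\ref{algline_Tree:l3}.

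Next I would apply Lemma~\ref{lem:LB1} and Lemma~\ref{lem:LB2} to the very same iteration in which $m$ was chosen. These lemmas require the diameter assumption (to ensure $f(\{v\})=1$ for all relevant $v$) and yield two feasible dual vectors $\bar{y}'$ and $\bar{y}''$ for $\MWCDSwD$ satisfying the strengthened constraints
\[
\sum_{S\in\mathscr{S}\colon v\in N(S)}y'_S\leq\tfrac{1}{100}w(v),\qquad \sum_{S\in\mathscr{S}\colon v\in N(S)}y''_S\leq\tfrac{99}{100}w(v),\qquad \forall v\in V,
\]
together with $\sum_{S}f(S)\cdot y'_S=LB1$ and $\sum_{S}f(S)\cdot y''_S=LB2$.

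The key step is then to define $\bar{y}\triangleq\bar{y}'+\bar{y}''$ componentwise and verify feasibility for $\MWCDSwD$. Non-negativity is immediate. For the packing constraints, linearity of the sum and the two bounds above give, for every $v\in V$,
\[
\sum_{S\colon v\in N(S)}y_S=\sum_{S\colon v\in N(S)}y'_S+\sum_{S\colon v\in N(S)}y''_S\leq\tfrac{1}{100}w(v)+\tfrac{99}{100}w(v)=w(v).
\]
This is exactly where the otherwise mysterious split $c_v=\tfrac{1}{100}w(v)$ in Algorithm~\ref{alg:DS} versus $c_v=\tfrac{99}{100}w(v)$ in Algorithm~\ref{alg:CDS} pays off, and I expect this bookkeeping step to be the only real obstacle. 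By linearity again, $F_{\bar{y}}=F_{\bar{y}'}+F_{\bar{y}''}=LB1+LB2=LB$, so by Definition~\ref{def:feasible_LB} the value $LB$ is itself a feasible lower bound.

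Finally, combining $\mathrm{Weight}(cds)=\sum_{v\in cds}w(v)$ with the feasible dual solution $\bar{y}$ of value $LB$, Definition~\ref{def:=0003B1} applies verbatim and the quantity
\[
\alpha=\frac{\sum_{v\in cds}w(v)}{LB}=\frac{\mathrm{Weight}(cds)}{F_{\bar{y}}}\geq\frac{\mathrm{Weight}(cds)}{OPT_{cds^{\ast}}}
\]
returned in line~\ref{algline_Tree:l20} is a feasible approximation ratio, proving the lemma.
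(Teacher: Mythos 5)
Your proposal is correct and follows essentially the same route as the paper's own proof: both invoke Lemmas \ref{lem:LB1} and \ref{lem:LB2} for the weight function of line \ref{algline_Tree:l3}, form $\bar{y}=\bar{y}'+\bar{y}''$, verify feasibility via the $\tfrac{1}{100}+\tfrac{99}{100}=1$ split, obtain $F_{\bar{y}}=LB1+LB2$ by linearity, and conclude with Lemma \ref{lem:Tree} and Definition \ref{def:=0003B1}. No gaps.
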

\begin{proof}
Consider $\bar{y}'$ and $\bar{y}''$, to be the vector solution that
constructs Algorithm \ref{alg:DS} (Equation \ref{eq:y'S}) and Algorithm
\ref{alg:CDS} (Equation \ref{eq:y''S}), respectively. According
to lemmas \ref{lem:LB1} and \ref{lem:LB2}, since node weights are
$\mbox{\ensuremath{w\left(v\right)\triangleq2\cdot\dist mv+\C}}$
and the graph diameter is at least three, then holds $\mbox{\ensuremath{\sum_{S\in\text{\ensuremath{\mathscr{S}}}\colon v\in N\left(S\right)}y'_{S}\leq\frac{1}{100}\cdot w\left(v\right)}}$
and $\mbox{\ensuremath{\sum_{S\in\text{\ensuremath{\mathscr{S}}}\colon v\in N\left(S\right)}y''_{S}\leq\frac{99}{100}\cdot w\left(v\right)}}$
for all $v\in V$. Consider vector $\bar{y}$ (where $y_{S}$ are
its components) which is defined as follows: $\bar{y}\triangleq\bar{y}'+\bar{y}''$.
So we can write: 
\[
\begin{array}{rl}
{\displaystyle \sum_{S\in\text{\ensuremath{\mathscr{S}}}\colon v\in N\left(S\right)}}y_{S}= & {\displaystyle \sum_{S\in\text{\ensuremath{\mathscr{S}}}\colon v\in N\left(S\right)}}\left(\bar{y}'_{S}+\bar{y}''_{S}\right)\\
\leq & \frac{1}{100}\cdot w\left(v\right)+\frac{99}{100}\cdot w\left(v\right)\\
= & w\left(v\right)\;\forall v\in V\,.
\end{array}
\]
Since also $y_{S}=\bar{y}'_{S}+\bar{y}''_{S}\geq0$, then vector $\bar{y}$
satisfies all the constraints of $\MWCDSwD$. Hence, $\bar{y}$ is
a feasible dual solution. Using equations \ref{eq:LB1} and \ref{eq:LB2},
we can also write: 
\[
\sum_{S\subseteq V\vcentcolon S\neq\emptyset}f\left(S\right)\cdot y_{S}=\sum_{S\subseteq V\vcentcolon S\neq\emptyset}f\left(S\right)\cdot y'_{S}+\sum_{S\subseteq V\vcentcolon S\neq\emptyset}f\left(S\right)\cdot y''_{S}=LB1+LB2\,.
\]
Therefore, by Definition \ref{def:feasible_LB}, $LB=LB1+LB2$ is
a feasible lower bound.

In Lemma \ref{lem:Tree}, we showed that $cds$ is a CDS. Then by
line \ref{algline_Tree:l20} and from Definition \ref{def:=0003B1},
we can say that the value $\alpha$ returned by algorithm \ref{alg:Tree}
is a feasible approximation ratio.
\end{proof}

\subsection{Performance guarantees\label{subsec:Performance}}

In this section we show performance guarantees of Algorithm \ref{alg:Tree}
(including algorithms \ref{alg:DS} and \ref{alg:CDS}) in the context
of approximation and runtime. We analyze the approximation ratio of
the gathering tree produced by Algorithm \ref{alg:Tree} with respect
to the optimal gathering tree for the $\MULE$ problem. According
to Theorem \ref{thm:Gathering-tree-approximation-by-MWCDS}, we show
that if the input graph satisfies the condition $\mbox{\ensuremath{\underset{v\in V_{BB}^{\ast}}{Average}\left[\dist{m^{\ast}}v\right]>1.3}}$,
then the algorithm achieves an approximation of $\mbox{\ensuremath{20+\varepsilon}}$.
It should be noted that for graphs that do not meet this condition,
the problem still remains open. From now on, let us consider only
graphs that satisfy the condition. Lemma \ref{lem:diameter>=00003D3}
and Lemma \ref{lem:diameter<3} prove the approximation ratio. Further,
we analyze the algorithm runtime and present optimizations that can
achieve time complexity of $O\left(n^{3}\cdot\varDelta\left(G\right)\right)$,
where $\varDelta\left(G\right)$ is the maximum degree in the graph
and $n$ is the number of nodes.

\paragraph{Algorithm \ref{alg:Tree} produces a $\mbox{\ensuremath{\boldsymbol{\left(20+\varepsilon\right)-approximate}}}$
solution.}

Let us refer to nodes in $cds\setminus ds$ as the ''\textsl{connectors}'',
since they connect the $ds$ nodes.
\begin{lem}
There is an injective mapping function from each connector to a single
node in $ds$, such that no more than two connectors are mapped to
one node. The connector is at most one unit away from the node, and
if there is an additional connector, then it is at most two units
away.\label{lem:map-2-to-1}
\end{lem}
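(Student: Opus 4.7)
The plan is to construct the mapping $\phi\colon cds\setminus ds \to ds$ inductively, as Algorithm \ref{alg:CDS} executes. By Proposition \ref{prop:v-not-in-DS}, each connector in $cds\setminus ds$ appears either as a selected node $v$ added at line \ref{algline_CDS:l8}, or as an extra neighbour $u$ added at line \ref{algline_CDS:l11}. At every step in which such connectors are created I would extend $\phi$ while maintaining three invariants: each $d\in ds$ has at most two preimages, a first preimage sits within Euclidean distance $1$ of $d$, and a second preimage (if any) sits within Euclidean distance $2$.

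When a selected $v$ satisfies $|\mathscr{S}^1|\geq 2$, every witness $w_S\in S\cap N(\{v\})$ satisfies $\text{dist}(v,w_S)\leq 1$ by the UDG property. I would first try to map $v$ to a DS node adjacent to $v$ inside some $S\in \mathscr{S}^2$; using that $S\cap cds$ is connected (from the proof of Lemma \ref{lem:CDS}) and that the initial content of $S$ was a singleton $\{d\}\subseteq ds$, I would trace a single $cds$-edge inside $S$ from $v$'s $cds$-neighbour to a DS node at Euclidean distance at most $2$ from $v$. For each extra neighbour $u$ added for a subset $S'\in \mathscr{S}^1\setminus \mathscr{S}^2$, Proposition \ref{prop:neighbor-in-CDS} supplies a $cds$-neighbour of $u$ inside $S'$; chaining by one further UDG edge reaches a DS node in $S'$ within Euclidean distance $2$ of $u$.

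The principal obstacle is enforcing the ``at most two preimages'' bound simultaneously with the distance bounds. I would tackle this by a charging argument: each merge event combines $|\mathscr{S}^1|\geq 2$ subsets and produces at most $1+|\mathscr{S}^1\setminus \mathscr{S}^2|$ new connectors, while a telescoping count limits the total number of merge events to $|ds|-1$. Charging each new connector to a DS node in one of the merged subsets, and always preferring a DS node that still has fewer than two preimages, I would argue that whenever the ``fresh DS node'' option inside a subset is exhausted, the new connector must be adjacent either to the selected $v$ or to a previously mapped distance-$1$ connector, forcing Euclidean distance at most $2$ from the doubly charged DS node. Verifying that this case split works out for every configuration of $\mathscr{S}^1$ versus $\mathscr{S}^2$, without ever violating invariants (ii) or (iii), is the technical heart of the proof.
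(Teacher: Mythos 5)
Your proposal leaves the actual content of the lemma unproved. The two easy ingredients --- that a selected node $v$ and each extra neighbour added in line \ref{algline_CDS:l11} lie within one UDG hop of $v$, and hence within Euclidean distance $1$ or $2$ of \emph{some} $ds$ node --- are fine, but they only show that a nearby $ds$ node exists, not that targets can be chosen consistently so that no $ds$ node receives more than two connectors with the required $1$-unit/$2$-unit profile. That is precisely the step you defer to an unverified case analysis, and the fallback you sketch for the saturated case is incoherent: if every admissible $ds$ node already has two preimages, observing that the new connector is within distance $2$ of a ``doubly charged'' node does not help, since assigning a third connector there violates the very bound you are trying to prove. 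The telescoping count does not rescue you either: it bounds only the \emph{total} number of connectors (and even obtaining $\left|cds\setminus ds\right|\leq2\left(\left|ds\right|-1\right)$ rather than $3\left(\left|ds\right|-1\right)$ requires the extra observation that $\text{\ensuremath{\mathscr{S}}}^{2}\neq\emptyset$ at every merging step, because the selected node always has a $ds$-neighbour sitting in an un-restricted active subset), whereas the lemma is a per-node statement with per-node distance constraints; an averaging argument cannot deliver it.

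The paper closes this gap with a structural device your plan lacks: it assembles the edges generated by the merge decisions of Algorithm \ref{alg:CDS} into a spanning tree $T'$ of $cds$, roots it at $r\in ds$, and orients every edge towards $r$. Each connector then receives a directed path of length at most two hops from some $ds$ node, and --- this is the key point --- since every node of a rooted in-tree has exactly one outgoing edge, each $ds$ node has a \emph{unique} forward path, hence at most one connector one hop away and at most one connector two hops away. The ``at most two preimages'' bound and the $1$-unit/$2$-unit distances then fall out simultaneously from the UDG property. If you wish to keep your online charging scheme, you would have to build an equivalent structural invariant into the induction (for instance, that each $ds$ node ever serves as the distance-$1$ target of at most one connector and as the distance-$2$ target of at most one connector) and actually prove it is maintained at every merge; as written, the proposal asserts the conclusion rather than deriving it.
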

\begin{proof}
In order to prove this claim, we first build a spanning tree $T'=\left(cds,E'\right)$,
which spans all $cds$ nodes. Then, based on $T'$, we explain the
mapping of each \textsl{connector}.

The construction of $T'$ is based on the decisions made by Algorithm
\ref{alg:CDS}. For each node $v$ selected at each step, we add two
types of edges. The first type is for each subset $S\in\text{\ensuremath{\mathscr{S}}}^{2}$,
we add the edge $\left(v,u\right)$ to $E'$, where $\mbox{\ensuremath{u\in S\cap N\left(\left\{ v\right\} \right)\cap ds}}$
if there is one, and if not, $\mbox{\ensuremath{u\in S\cap N\left(\left\{ v\right\} \right)\cap cds}}$.
The second type of edges is for each subset $S\in\text{\ensuremath{\mathscr{S}}}^{1}\setminus\text{\ensuremath{\mathscr{S}}}^{2}$,
we add the edge $\left(v,u\right)$ to $E'$, where $u$ is a neighbor
according to the choice of the algorithm in line \ref{algline_CDS:l10}.
At the end, we go over all edges in $E'$ and delete edges with nodes
in $V\setminus cds$.

The proof that $T'=\left(cds,E'\right)$ is a spanning tree for $cds$
is done by induction on the algorithm steps. Recall that $\text{\ensuremath{\mathscr{S}}}''$
is the collection of all \textsl{active} subsets of Algorithm \ref{alg:CDS}.
We want to show that at each step and for each \textsl{un-restricted}
\textsl{active} subset $S\in\text{\ensuremath{\mathscr{S}}}''$, the
induced subgraph of $S\cap cds$ in $T'$ is a spanning tree for the
nodes in $S\cap cds$. The base case, step $0$, is trivial. The induction
hypothesis is that the statement holds for step $k>0$, and we will
prove that the statement still holds for step $k+1$. Consider $S'$
to be the new subset created in step $k+1$. $S'$ contains all subsets
of $\text{\ensuremath{\mathscr{S}}}^{1}$. From the induction hypothesis
for step $k$ and since each $S''\in\text{\ensuremath{\mathscr{S}}}^{1}$
is \textsl{un-restricted}, the induced subgraph of $S''\cap cds$
in $T'$ is a spanning tree for $S''\cap cds$. Let $v$ be the selected
node. Since $ds$ is a DS, then $v$ has a neighbor in $ds$. At initialization,
this neighbor is contained in some subset. So, according to Proposition
\ref{prop:g(v)>0}, implies that $\left|\text{\ensuremath{\mathscr{S}}}^{1}\right|>0$.
If at the end of the algorithm $v\in cds$, then for each $S''\in\text{\ensuremath{\mathscr{S}}}^{1}$
one edge is added to $E'$, and connects $S''\cap cds$ (also its
sub spanning tree) to $v$. If eventually $v\notin cds$, then $v\notin S''\cap cds$
and its edges are deleted at the end. Since all edges are added in
a star form around $v$, then the result structure is still a tree,
and it also spans node $v$. Hence, it is a spanning tree for $S'\cap cds$.
According to the induction principle, the statement is correct for
every step. In particular for the last step, in which $\text{\ensuremath{\mathscr{S}}}''$
contains only one \textsl{un-restricted} subset $S$. From Proposition
\ref{prop:union-contains-CDS} we know that $cds\subseteq S$, so
$T'$ is a spanning tree for $cds$.

The mapping function between the \textsl{connectors} and the $ds$
nodes is described as follows. Since the algorithm adds to $cds$,
only \textsl{connectors} with at least two edges in $E'$, then the
\textsl{connectors} can not be leaves in $T'$. Consider the root
node $r$, we know that $r\in ds$. We go over all edges of $T'$,
and each edge we directed towards $r$. Each \textsl{connector} is
mapped to the nearest (by hops) node in $ds$ which has a directed
path to that \textsl{connector}. In case of multiple choice, select
one arbitrarily.

The proof that we do not map to one node in $ds$, more than two \textsl{connectors}
that meet the distance requirements is as follows. We will show that
each \textsl{connector} $v\in cds\setminus ds$ has a node $u\in ds$
so that there is a directed path from $u$ to $v$ of a length no
more than two hops. If $v$ has an incoming edge from a node in $ds$,
then we are done. Otherwise, let us consider $v'$ to be one of the
\textsl{connectors} of the incoming edges (note, since $v$ is not
a leaf then there must be one). Since $T'$ is a directed tree towards
a particular node, then $v'$ has only one outgoing edge, and this
edge is to $v$. All other edges of $v'$ must be incoming. We have
already seen that $\left|\text{\ensuremath{\mathscr{S}}}^{1}\right|>0$
since each selected node has a neighbor from $ds$ in $G$, so each
\textsl{connector} has a neighbor from $ds$, also in tree $T'$.
Then, one of the incoming edges of $v'$ must be from a node in $ds$,
and let us denote this node by $u$. Thus, we have the path $u\rightarrow v'\rightarrow v$
in $T'$, and we are done.\\
Since $T'$ is a directed tree towards a particular node, each node
has only one outgoing edge. Therefore, each node from $ds$ can reach
at most two \textsl{connectors} at a distance of two hops in $T'$.
These are the only two \textsl{connectors} that can be mapped to this
node. Since we are dealing with an UDG, the first must be at most
one unit away, and the second at most two units. Hence, we complete
the proof.
\end{proof}
\begin{lem}
Given a connected UDG with \textbf{a diameter of at least three},
Algorithm \ref{alg:Tree} returns a \linebreak{}
$\mbox{\ensuremath{\left(20+\varepsilon\right)-approximate}}$ solution
to the $\MULE$ problem.\label{lem:diameter>=00003D3}
\end{lem}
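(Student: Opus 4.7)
My plan is to chain three inequalities connecting the tree cost $\cost{T}{m}$ returned by Algorithm~\ref{alg:Tree} to the MULE optimum $OPT_{T^*}$: a primal-dual bound of $\text{Weight}(cds)$ against $OPT_{cds^*}$, the MWCDS-to-MULE reduction of Theorem~\ref{thm:Gathering-tree-approximation-by-MWCDS}, and the tree-cost upper bound of Lemma~\ref{lem:OPT-bounded-from-above}. Since the outer loop of Algorithm~\ref{alg:Tree} (line~\ref{algline_Tree:l2}) enumerates every candidate MULE location and keeps the lightest resulting CDS, the returned $cds$ has weight at most that of the CDS produced when $m' = m^*$ and weights are set to $w(v) = 2\,\dist{m^*}{v}+\C$. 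I analyze only that iteration. By Lemma~\ref{lem:=0003B1} and weak duality (Equation~\ref{eq:T-R-P-12}), it suffices to prove the primal-dual inequality $\text{Weight}(cds) \leq 20 \cdot LB$ for $LB = LB1 + LB2$.

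To prove this inequality I would split $\text{Weight}(cds) = \sum_{u \in ds} w(u) + \sum_{c \in cds \setminus ds} w(c)$ and treat each piece separately. For the connectors I would invoke Lemma~\ref{lem:map-2-to-1}: each connector maps injectively (at most two-to-one) into $ds$, with its paired $ds$-node at graph distance at most two, hence at Euclidean distance at most two in the UDG. Because $w$ is affine in $\dist{m^*}{\cdot}$, the triangle inequality gives $w(c) \leq w(u)+4$ for the paired $u$; summing and absorbing the additive slack via $w \geq \C \geq 3$ yields $\sum_{c \in cds \setminus ds} w(c) \leq K_1 \sum_{u \in ds} w(u)$ for a small constant $K_1$, so $\text{Weight}(cds) \leq (1+K_1) \sum_{u \in ds} w(u)$.

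It then remains to bound $\sum_{u \in ds} w(u)$ against $LB$. For this I would combine two facts from the primal-dual construction. First, each $u \in ds$ is packed by Algorithm~\ref{alg:DS}, so $\frac{1}{100}w(u) = \sum_{S \colon u \in N(S)} y'(S)$; the only subsets carrying positive $y'$ are the singletons $\{v\}$ and the co-singletons $V \setminus \{v\}$, and the former charge at most $|ds \cap N(v)| \leq 5$ members of $ds$ by the kissing-number bound for an MIS in a UDG, while the latter charge only the unique node they neighbor. Second, Algorithm~\ref{alg:CDS} initializes $y''(\{u\}) = \frac{99}{500}(w(u)-2) > 0$ for every $u \in ds$, contributing directly to $LB2$. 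Together these absorb the full weight $w(u) = \frac{1}{100}w(u) + \frac{99}{100}w(u)$ of each $ds$-node into $LB1 + LB2$ up to constants; a careful accounting (in which the $\frac{1}{100}$ and $\frac{99}{500}$ constants have clearly been tuned for this purpose) yields $\sum_{u \in ds} w(u) \leq K_2 \cdot LB$ with $(1+K_1) K_2 \leq 20$.

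Finally, I would chain the three bounds: every non-$cds$ node becomes a leaf in $T$ (lines~\ref{algline_Tree:l16}--\ref{algline_Tree:l18}), so $V_{BB} \subseteq cds$ and Lemma~\ref{lem:OPT-bounded-from-above} gives $\cost{T}{m} \leq \text{Weight}(V_{BB}) \leq \text{Weight}(cds) \leq 20 \cdot LB \leq 20 \cdot OPT_{cds^*}$; Theorem~\ref{thm:Gathering-tree-approximation-by-MWCDS} then yields $OPT_{cds^*} < (1+\varepsilon) OPT_{T^*}$, and multiplying gives $\cost{T}{m} < 20(1+\varepsilon)OPT_{T^*} = (20+20\varepsilon)OPT_{T^*}$, which rewrites as $(20+\varepsilon) OPT_{T^*}$ after rescaling the $\varepsilon$ (still tending to zero as the graph spreads). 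The main obstacle is the arithmetic of the third step: one must avoid the naive kissing-number-times-five product and instead exploit the specific $\frac{1}{100}$ / $\frac{99}{500}$ split so that $\sum_{u \in ds} w(u)$ is bounded by a constant close to $4 \cdot LB$ rather than a large multiple of it, making the final ratio exactly $20$.
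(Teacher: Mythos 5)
Your overall architecture matches the paper's: bound $Weight\left(cds\right)$ against the dual objective via Lemma \ref{lem:map-2-to-1}, pass to $OPT_{cds^{\ast}}$ by weak duality (Equation \ref{eq:T-R-P-12}), restrict to the iteration $m'=m^{\ast}$ using the minimization over MULE locations, and finish with Lemma \ref{lem:OPT-bounded-from-above} and Theorem \ref{thm:Gathering-tree-approximation-by-MWCDS}; that outer chaining is correct. The problem is that the quantitative middle of your argument does not close to the constant $20$. You charge every connector $w\left(c\right)\leq w\left(u\right)+4$ and absorb the additive slack using only $w\geq\C\geq3$; both relaxations are too lossy. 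Lemma \ref{lem:map-2-to-1} gives the sharper fact that of the at most two connectors mapped to $u\in ds$ one is within one unit (hence $+2$) and the other within two units (hence $+4$), yielding $Weight\left(cds\right)\leq\sum_{v\in ds}\left(3w\left(v\right)+6\right)$, and the argument needs $\C>14.5$ (a consequence of $0<\RM<0.3$, see Figure \ref{fig:C(Rm)}), not $\C\geq3$. Against the per-node dual credit $\frac{99}{500}\left(w\left(v\right)-2\right)$ the ratio is $\frac{500}{99}\left(3+\frac{12}{w-2}\right)$, which equals $20$ exactly at $w=14.5$; replacing $+6$ by $+8$ already gives about $20.8$, and replacing $14.5$ by $3$ gives about $75$. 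So with your stated constants the claim $\left(1+K_{1}\right)K_{2}\leq20$ is false.

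Separately, your route to $\sum_{u\in ds}w\left(u\right)\leq K_{2}\cdot LB$ through $LB1$ and the packing condition of Algorithm \ref{alg:DS} does not work as sketched: a node $u\in ds$ is packed against \emph{all} active subsets adjacent to it, including the co-singleton $V\setminus\left\{ u\right\} $, which may carry an arbitrarily large share of $c_{u}$ yet contributes nothing to $LB1=\sum_{v\in V}y'\left(\left\{ v\right\} \right)$ since its component is zeroed in $\bar{y}'$. Hence the kissing-number charging of $\frac{1}{100}w\left(u\right)$ to $LB1$ fails. It is also unnecessary: the paper's proof uses only the initialization $y''\left(\left\{ v\right\} \right)=\frac{99}{500}\left(w\left(v\right)-2\right)$ in line \ref{algline_CDS:l2} of Algorithm \ref{alg:CDS}, i.e. $F_{\bar{y}}\geq\sum_{v\in ds}\frac{99}{500}\left(w\left(v\right)-2\right)$, which gives $K_{2}=\frac{500}{99}\cdot\frac{14.5}{12.5}\approx5.86$ (not ``close to $4$'') and pairs with $1+K_{1}=3+\frac{6}{14.5}\approx3.41$ to give exactly $20$. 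To repair your proof, drop the $LB1$ accounting, use the two-tier connector bound, and invoke $\C>14.5$.
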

\begin{proof}
Recall that $w\left(v\right)=2\cdot\dist mv+\C$. According to Lemma
\ref{lem:map-2-to-1}, each node $v$ from $ds$ has at most two \textsl{connectors}
in $cds\setminus ds$, where one \textsl{connector} has a weight greater
by $2$ and the other by $4$, than $v$'s weight. Therefore, we can
bound $cds$ weight by $Weight\left(cds\right)\leq\sum_{v\in ds}\left(3\cdot w\left(v\right)+6\right)$.
In Algorithm \ref{alg:CDS} line \ref{algline_CDS:l2}, we assign
initial $y$'s values, then we can bound $F_{\bar{y}}$ from below
by $F_{\bar{y}}\geq\sum_{v\in ds}\frac{99}{500}\cdot\left(w\left(v\right)-2\right)$.
Since both sums contain only the $ds$ nodes, we can calculate the
ratio even for just a single node. Note that $w\left(v\right)\geq C$,
and since $\mbox{\ensuremath{0<\RM<0.3}}$, then $C>14.5$ (Figure
\ref{fig:C(Rm)}). So from Definition \ref{def:=0003B1}, we can state
that the approximation ratio $\alpha$ is:
\begin{align}
\frac{Weight\left(cds\right)}{F_{\bar{y}}} & \leq\frac{\sum_{v\in ds}\left(3\cdot w\left(v\right)+6\right)}{\sum_{v\in ds}\frac{99}{500}\cdot\left(w\left(v\right)-2\right)}=\frac{3\cdot w\left(v\right)+6}{\nicefrac{99}{500}\cdot\left(w\left(v\right)-2\right)}=\nonumber \\
 & =\frac{3}{\nicefrac{99}{500}}+\frac{6+6}{\nicefrac{99}{500}\cdot\left(w\left(v\right)-2\right)}\leq\frac{500}{99}\left(3+\frac{12}{\left(14.5-2\right)}\right)=20\;.\label{eq:P-G-1}
\end{align}
From Equation \ref{eq:T-R-P-12}, $OPT_{cds^{\ast}}\geq F_{\bar{y}}$,
then we can write:
\begin{equation}
\frac{Weight\left(cds\right)}{OPT_{cds^{\ast}}}\stackrel{\left(\ref{eq:T-R-P-12}\right)}{\leq}\frac{Weight\left(cds\right)}{F_{\bar{y}}}\stackrel{\left(\ref{eq:P-G-1}\right)}{\leq}20\;.\label{eq:P-G-2}
\end{equation}
Let $T$ be the gathering tree that produces Algorithm \ref{alg:Tree}
and $m$ is its MULE location. From Lemma \ref{lem:OPT-bounded-from-above}
and since the tree's BackBone is $cds$, we get:
\begin{equation}
\cost Tm\stackrel{\left(\ref{eq:R-T-M-1}\right)}{\leq}\sum_{v\in V_{BB}}\left(2\cdot\dist mv+\C\right)=\sum_{v\in cds}\left(2\cdot\dist mv+\C\right)=Weight\left(cds\right)\;.\label{eq:P-G-3}
\end{equation}
Let us distinguish between the two MULE locations, $m$ is the one
chosen by the algorithm and $m^{\ast}$ of the optimal solution. Let
$cds_{m}^{\ast}$ be the optimal solution of MWCDS for location $m$,
where $cds_{m}$ is the CDS constructed by the algorithm, and $cds_{m^{\ast}}^{\ast}$
is for location $m^{\ast}$. The algorithm selects the minimal-weight
CDS, so we have the following inequality: 
\begin{equation}
20\cdot OPT_{cds_{m^{\ast}}^{\ast}}\stackrel{\left(\ref{eq:P-G-2}\right)}{\geq}Weight\left(cds_{m^{\ast}}\right)\geq Weight\left(cds_{m}\right)\;.\label{eq:P-G-4}
\end{equation}
Let $T^{\ast}$ be the optimal gathering tree and $m^{\ast}$ is its
MULE location. From Theorem \ref{thm:Gathering-tree-approximation-by-MWCDS},
$OPT_{T^{\ast}}>\left(1+\varepsilon\right)^{-1}\cdot OPT_{cds_{m^{\ast}}^{\ast}}$.
So we can put it all together into one equation and conclude that:
\begin{align*}
\frac{\cost Tm}{OPT_{T^{\ast}}} & \stackrel{\left(\ref{eq:R-T-M-2}\right)}{<}\frac{\cost Tm}{\left(1+\varepsilon\right)^{-1}\cdot OPT_{cds_{m^{\ast}}^{\ast}}}\stackrel{\left(\ref{eq:P-G-3}\right)}{\leq}\left(1+\varepsilon\right)\cdot\frac{Weight\left(cds_{m}\right)}{OPT_{cds_{m^{\ast}}^{\ast}}}\stackrel{\left(\ref{eq:P-G-4}\right)}{\leq}\\
 & \stackrel{\left(\ref{eq:P-G-4}\right)}{\leq}\left(1+\varepsilon\right)\cdot\frac{Weight\left(cds_{m}\right)}{20^{-1}\cdot Weight\left(cds_{m}\right)}=20\cdot\left(1+\varepsilon\right)=20+\varepsilon\;,
\end{align*}
where the last equality is if we include the multiplication by $20$
into the inaccuracy $\varepsilon$. So that \linebreak{}
$\mbox{\ensuremath{\varepsilon\triangleq20\left(\frac{2}{\left(\C+2.6\right)}\cdot\left(\underset{v\in V_{BB}^{\ast}}{Average}\left[\dist{m^{\ast}}v\right]-1.3\right)\right)^{-1}}}$.
\end{proof}
\begin{lem}
Given a connected UDG with \textbf{a diameter of less than three},
Algorithm \ref{alg:Tree} returns a $\mbox{\ensuremath{\left(20+\varepsilon\right)-approximate}}$
gathering tree to the $\MULE$ problem.\label{lem:diameter<3}
\end{lem}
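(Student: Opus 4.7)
The plan is to handle the small-diameter case by direct combinatorial analysis, since the primal-dual duality argument used in Lemma \ref{lem:diameter>=00003D3} breaks down here. The obstruction is that Lemmas \ref{lem:LB1} and \ref{lem:LB2} require $f\left(\left\{v\right\}\right)=1$ for every $v\in V$, which fails precisely when some singleton is already a DS --- the regime of diameter less than three. Without a feasible $F_{\bar{y}}$ to bound $OPT_{cds^{*}}$ from below, one must compare the algorithm's output to the MWCDS optimum by structural means.

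First, I would split into two subcases according to whether the graph has diameter $1$ or $2$. For diameter $1$ (a clique), I would trace Algorithms \ref{alg:DS} and \ref{alg:CDS} to show that for any candidate MULE location $m'$ they return $cds=\left\{v_{\min}\right\}$, the minimum-weight node under $w\left(v\right)=2\cdot\dist{m'}{v}+\C$: the first iteration of Algorithm \ref{alg:DS} already selects such a node and produces a DS (since the graph is complete), and Algorithm \ref{alg:CDS} halts immediately because $\sum_{S}g\left(S\right)=1$ at initialization. Hence $\text{Weight}\left(cds_{m'}\right)=OPT_{cds^{*}_{m'}}$ exactly in this subcase. For diameter $2$, the graph lies within a Euclidean disk of radius $2$, so by a packing argument any IDS has at most a constant number of nodes, and by Lemma \ref{lem:map-2-to-1} the total CDS size is also bounded by a constant. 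Together with the kissing-number bound used in the proof of Lemma \ref{lem:LB2}, one obtains $\text{Weight}\left(cds_{m'}\right)\leq K\cdot OPT_{cds^{*}_{m'}}$ for an absolute constant $K\leq20$.

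Next, I would close the argument exactly as in Lemma \ref{lem:diameter>=00003D3}. Algorithm \ref{alg:Tree} iterates over all MULE positions and picks one minimizing $\text{Weight}\left(cds_{m'}\right)$, so in particular $\text{Weight}\left(cds_{m}\right)\leq\text{Weight}\left(cds_{m^{*}}\right)\leq K\cdot OPT_{cds^{*}_{m^{*}}}$. By Lemma \ref{lem:OPT-bounded-from-above} we have $\cost{T}{m}\leq\text{Weight}\left(cds_{m}\right)$, and by Theorem \ref{thm:Gathering-tree-approximation-by-MWCDS} --- still applicable since we assume the condition $\underset{v\in V_{BB}^{\ast}}{\text{Average}}\left[\dist{m^{\ast}}v\right]>1.3$ --- we have $OPT_{cds^{*}_{m^{*}}}<\left(1+\varepsilon\right)\cdot OPT_{T^{*}}$. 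Chaining these gives
\[
\frac{\cost{T}{m}}{OPT_{T^{*}}}\leq K\cdot\left(1+\varepsilon\right)\,,
\]
and absorbing the factor $K\leq20$ into the inaccuracy term (identically to the final step of Lemma \ref{lem:diameter>=00003D3}) recovers the uniform bound $20+\varepsilon$.

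The main obstacle is verifying the constant $K\leq20$ in the diameter-$2$ subcase: a diameter-$2$ UDG can have nontrivial structure, and one must argue that Algorithms \ref{alg:DS} and \ref{alg:CDS} cannot inflate the CDS beyond a $20$-factor over the true MWCDS optimum without access to the dual-feasibility tools. The key technical ingredients will be the packing bound on IDSs in a radius-$2$ disk, the $2$-to-$1$ connector-to-IDS mapping of Lemma \ref{lem:map-2-to-1}, and the observation that $w\left(v\right)\geq\C$ which provides a crude but nonvacuous lower bound on every nonempty CDS weight.
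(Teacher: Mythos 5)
Your overall skeleton is the right one --- and it is the paper's: abandon the dual lower bound (which indeed fails because $f\left(\left\{ v\right\} \right)=0$ for some singleton), prove directly that $Weight\left(cds\right)\leq20\cdot OPT_{cds^{\ast}}$ by combinatorial means, and then chain through Lemma \ref{lem:OPT-bounded-from-above} and Theorem \ref{thm:Gathering-tree-approximation-by-MWCDS} exactly as in Lemma \ref{lem:diameter>=00003D3}. The closing chain in your third paragraph is correct. But the step you yourself flag as ``the main obstacle'' is genuinely open in your write-up, and the ingredients you list will not close it. An absolute packing bound on an independent set inside a Euclidean disk of radius $2$ gives something on the order of $20$--$25$ nodes in $ds$, hence up to $\sim75$ nodes in $cds$ via Lemma \ref{lem:map-2-to-1}; pairing that with the only lower bound you propose, $OPT_{cds^{\ast}}\geq\C$ (one node of weight at least $\C$), yields a ratio of roughly $3\cdot25\cdot\left(4+\C\right)/\C\approx95$ for $\C\approx14.5$ --- nowhere near $20$. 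The kissing-number bound from Lemma \ref{lem:LB2} does not help either, since there is no dual variable left to charge against.

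The missing idea is to relate $\left|ds\right|$ to the cardinality of the optimal \emph{connected dominating} set rather than to an absolute constant. The paper invokes the result of Wu et al.\ that in a UDG every maximal independent set satisfies $\left|mis\right|\leq3.8\cdot\left|mcds^{\ast}\right|+1.2$ (and $ds$ is a MIS by Proposition \ref{prop:MIS}), while lower-bounding $OPT_{cds^{\ast}}\geq\C\cdot\left|cds^{\ast}\right|\geq\C\cdot\left|mcds^{\ast}\right|$. Combined with $\dist mv\leq2$ (so $w\left(v\right)\leq4+\C$), $\left|cds\right|\leq3\left|ds\right|$ from Lemma \ref{lem:map-2-to-1}, and $\C>14.5$, this gives
\[
\frac{Weight\left(cds\right)}{OPT_{cds^{\ast}}}\leq\frac{3\left(4+\C\right)\left(3.8\left|mcds^{\ast}\right|+1.2\right)}{\C\cdot\left|mcds^{\ast}\right|}<19.14<20\;,
\]
because the dominant term $11.4$ comes from the ratio $3\cdot3.8$ and the remaining terms are controlled by $\C>14.5$ and $\left|mcds^{\ast}\right|\geq1$. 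Note also that your diameter-$1$ subcase is unnecessary once this bound is in hand (the argument covers diameters $1$ and $2$ uniformly), and your claim there that Algorithm \ref{alg:CDS} ``halts immediately'' would itself need care, since its \textsl{repeat--until} loop executes at least one iteration.
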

\begin{proof}
Since the graph is an UDG with a bounded area, then for each node
$\mbox{\ensuremath{v\in V}}$ holds $\mbox{\ensuremath{\dist mv\leq2}}$.
Recall that the node weight function is $\mbox{\ensuremath{2\cdot\dist mv+\C}}$,
so we have $\mbox{\ensuremath{Weight\left(cds\right)\leq\left(4+C\right)\cdot\left|cds\right|}}$.
From Lemma \ref{lem:map-2-to-1}, we know that $\mbox{\ensuremath{\left|cds\right|\leq3\cdot\left|ds\right|}}$.
Therefore, we can write the following equation:
\begin{equation}
Weight\left(cds\right)\leq3\cdot\left(4+C\right)\cdot\left|ds\right|\;.\label{eq:P-G-5}
\end{equation}
Note that $\mbox{\ensuremath{w\left(v\right)\geq\C}}$. Let $mcds^{\ast}$
be the optimal solution for the MCDS problem. Thus, since $cds^{\ast}$
is a CDS, it must be hold that $\mbox{\ensuremath{\left|cds^{\ast}\right|\geq\left|mcds^{\ast}\right|}}$.
We obtain the following inequality:
\begin{equation}
OPT_{cds^{\ast}}=\sum_{v\in cds^{\ast}}w\left(v\right)\geq C\cdot\left|cds^{\ast}\right|\geq C\cdot\left|mcds^{\ast}\right|\;.\label{eq:P-G-6}
\end{equation}
Let $mis$ be a maximal independent set. Weili Wu et al. \cite{Wu2006}
showed that for UDG graph, $\mbox{\ensuremath{\left|mis\right|\leq3.8\cdot\left|mcds^{\ast}\right|+1.2}}$.
From Proposition \ref{prop:MIS}, $ds$ is a MIS, then 
\begin{equation}
\left|ds\right|\leq3.8\cdot\left|mcds^{\ast}\right|+1.2\;.\label{eq:P-G-7}
\end{equation}
If we sum it all up here, we get:
\begin{align*}
\frac{Weight\left(cds\right)}{OPT_{cds^{\ast}}} & \stackrel{\left(\ref{eq:P-G-5}\right)}{\leq}\frac{3\cdot\left(4+C\right)\cdot\left|ds\right|}{OPT_{cds^{\ast}}}\stackrel{\left(\ref{eq:P-G-6}\right)}{\leq}\frac{3\cdot\left(4+C\right)\cdot\left|ds\right|}{C\cdot\left|mcds^{\ast}\right|}\stackrel{\left(\ref{eq:P-G-7}\right)}{\leq}\\
 & \stackrel{\left(\ref{eq:P-G-7}\right)}{\leq}\frac{3\cdot\left(4+C\right)\cdot\left(3.8\cdot\left|mcds^{\ast}\right|+1.2\right)}{C\cdot\left|mcds^{\ast}\right|}=\\
 & =\frac{45.6\cdot\left|mcds^{\ast}\right|+11.4\cdot C\cdot\left|mcds^{\ast}\right|+14.4+3.6\cdot C}{C\cdot\left|mcds^{\ast}\right|}\;.
\end{align*}
Recall that $\C>14.5$, and $mcds^{\ast}$ is a DS, so it must contain
at least one node, i.e. $\mbox{\ensuremath{\left|mcds^{\ast}\right|\geq1}}$.
We can continue the inequality as follows:
\begin{equation}
\frac{Weight\left(cds\right)}{OPT_{cds^{\ast}}}<\frac{45.6}{14.5}+11.4+\frac{14.4}{14.5\cdot1}+\frac{3.6}{1}<19.14<20\;.\label{eq:P-G-8}
\end{equation}
As in Lemma \ref{lem:diameter>=00003D3}, let us distinguish between
two MULE locations, $m$ and $m^{\ast}$. Remember that the algorithm
selects the minimal-weight CDS, then we can state that:
\begin{equation}
20\cdot OPT_{cds_{m^{\ast}}^{\ast}}\stackrel{\left(\ref{eq:P-G-8}\right)}{>}Weight\left(cds_{m^{\ast}}\right)\geq Weight\left(cds_{m}\right)\;.\label{eq:P-G-9}
\end{equation}
Similarly to Lemma \ref{lem:diameter>=00003D3}, we can put it all
together and conclude that:
\begin{align*}
\frac{\cost Tm}{OPT_{T^{\ast}}} & \stackrel{\left(\ref{eq:R-T-M-2}\right)}{<}\frac{\cost Tm}{\left(1+\varepsilon\right)^{-1}\cdot OPT_{cds_{m^{\ast}}^{\ast}}}\stackrel{\left(\ref{eq:P-G-3}\right)}{\leq}\left(1+\varepsilon\right)\cdot\frac{Weight\left(cds_{m}\right)}{OPT_{cds_{m^{\ast}}^{\ast}}}\stackrel{\left(\ref{eq:P-G-9}\right)}{<}\\
 & \stackrel{\left(\ref{eq:P-G-9}\right)}{<}\left(1+\varepsilon\right)\cdot\frac{Weight\left(cds_{m}\right)}{20^{-1}\cdot Weight\left(cds_{m}\right)}=20\cdot\left(1+\varepsilon\right)=20+\varepsilon\;,
\end{align*}
where the last equality is if we include the multiplication by $20$
into the inaccuracy $\varepsilon$ (as in Lemma \ref{lem:diameter>=00003D3}).
\end{proof}

\paragraph{The total time complexity can be optimized to $O\left(n^{3}\cdot\varDelta\left(G\right)\right)$.}
\begin{prop}
The number of \textsl{active} subsets maintained by Algorithm \ref{alg:DS}
is at most twice the number of nodes.\label{prop:alg1-num-of-subsets}
\end{prop}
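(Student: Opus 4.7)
The plan is to classify the active subsets maintained by Algorithm~\ref{alg:DS} into two disjoint families and bound the size of each by $n = |V|$.

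First, I would observe from line~\ref{algline_DS:l1} that $\text{\ensuremath{\mathscr{S}}}$ is initialized with exactly the $n$ singleton subsets $\{v\}$ for $v \in V$. These singletons are never removed from $\text{\ensuremath{\mathscr{S}}}$ (the algorithm only unions in new subsets), so they contribute exactly $n$ to the count throughout execution.

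The only other place where subsets are added to $\text{\ensuremath{\mathscr{S}}}$ is in the inner loop of lines~\ref{algline_DS:l11}--\ref{algline_DS:l16}, where subsets of the form $S' = V \setminus \{u\}$ (for $u \in V \setminus A$) are inserted. I would argue that each such ``complement'' subset is inserted at most once during the entire run of the algorithm, and that at most $n$ distinct complements can ever be created. The distinctness is immediate: each $S' = V \setminus \{u\}$ is uniquely indexed by its sole non-member $u$, so there are at most $n$ candidate complements in total.

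For the ``at most once'' part, the key observation is the guard in line~\ref{algline_DS:l12}: the complement $V \setminus \{u\}$ is only inserted when $\sum_{S \in \text{\ensuremath{\mathscr{S}}}: u \in N(S)} g(S) = 0$. But if $V \setminus \{u\}$ has already been inserted earlier, then its unique neighbor is $u$ itself, and its $g$-value is set to $1$ at insertion (line~\ref{algline_DS:l14}). The only way $g(V \setminus \{u\})$ could later become $0$ is for $u$ to have been selected (line~\ref{algline_DS:l9}), but then $u \in A$ and the outer condition $u \in V \setminus A$ in line~\ref{algline_DS:l11} would fail. Hence $g(V \setminus \{u\}) = 1$ persists as long as $u$ is a candidate in line~\ref{algline_DS:l11}, which prevents a second insertion of the same complement. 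Summing the two families yields the bound $|\text{\ensuremath{\mathscr{S}}}| \leq 2n$.

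The only subtlety — and the one place I would take care — is verifying that $g(V \setminus \{u\})$ cannot be reset to $0$ by some other selected node before a duplicate insertion could occur; but since the unique neighbor of $V \setminus \{u\}$ is $u$, this reduction to ``$u$ has been selected'' is clean and forces $u \in A$, closing the argument.
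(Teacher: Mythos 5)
Your proof is correct and follows essentially the same route as the paper's: both count the $n$ initial singletons and then show that at most one complement subset $V\setminus\{u\}$ is ever inserted per node $u$, using the fact that $u$ is the unique neighbor of $V\setminus\{u\}$, so its $g$-value stays at $1$ (blocking the guard in line~\ref{algline_DS:l12}) until $u$ itself is selected and thereby leaves $V\setminus A$. The extra remark that distinct complements are indexed by their sole non-member is a harmless additional safeguard, not a different argument.
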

\begin{proof}
Initially, the number of \textsl{active} subsets ($\left|\text{\ensuremath{\mathscr{S}}}'\right|$)
is as the number of nodes in the graph. We will show that for each
node, at most one new subset is added to $\text{\ensuremath{\mathscr{S}}}'$.
Without loss of generality, look at some node $\mbox{\ensuremath{u\in V}}$.
Consider the first iteration in which the \textit{if} statement in
line \ref{algline_DS:l12} is met, for node $u$. Note, $u$ is not
yet selected by the algorithm. The new subset $S'$ is added in line
\ref{algline_DS:l14}, where $u$ is its only neighbor. Since the
assignment $\mbox{\ensuremath{g\left(S'\right)\leftarrow1}}$ is performed,
then only after the algorithm will assign $\mbox{\ensuremath{g\left(S'\right)\leftarrow0}}$,
the statement will be able to met again. The only place where the
assignment $\mbox{\ensuremath{g\left(S'\right)\leftarrow0}}$ can
be performed is in line \ref{algline_DS:l9}, when $u$ is chosen
by the algorithm. In the same step, node $u$ is added to $A$, so
the \textit{if} statement in line \ref{algline_DS:l12} will not be
checked again for $u$. Hence, no other new subset will be added for
node $u$.
\end{proof}
\begin{prop}
At each step of Algorithm \ref{alg:CDS}, a node can be contained
in only one un-restricted \textsl{active} subset. \label{prop:g(v)=00003D1}
\end{prop}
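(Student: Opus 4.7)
The plan is to prove this by induction on the iterations of the repeat loop in Algorithm~\ref{alg:CDS}. After the initialization in lines~\ref{algline_CDS:l1}--\ref{algline_CDS:l2}, the collection $\text{\ensuremath{\mathscr{S}}}''$ consists of the singletons $\{v\}$ for $v\in ds$, each with $g(\{v\})=1$. Since $ds$ is a set, each node of $ds$ lies in exactly one un-restricted subset and no other node lies in any subset, so the base case holds.

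For the inductive step I would actually strengthen the invariant to: \textbf{(a)} each node is contained in at most one un-restricted active subset, and \textbf{(b)} if a node $x$ lies in any un-restricted active subset, then $x\in ds$ or $x$ was the node selected in some previous iteration. The reason for (b) is that it lets me pin down, at the start of iteration $k{+}1$, exactly where the newly selected node $u$ can live. By Proposition~\ref{prop:v-not-in-DS} we have $u\notin ds$, and by Propositions~\ref{prop:alg2-packed}, \ref{prop:alg2-remains-packed}, and~\ref{prop:chosen-not-infinity}, once a node is selected its potential remains $\infty$, so $u$ is selected for the first (and only) time at iteration $k{+}1$. Combined, the strengthened invariant~(b) forces $u$ to lie in \emph{no} un-restricted active subset at the start of this iteration.

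From here the rest is bookkeeping. Every $S'\in\text{\ensuremath{\mathscr{S}}}^{1}$ is, by line~\ref{algline_CDS:l5}, un-restricted and contains a neighbor of $u$; since $u\notin S'$ by the previous paragraph, we get $u\in N(S')$, and line~\ref{algline_CDS:l16} then sets $g(S')\leftarrow 0$ for every $S'\in\text{\ensuremath{\mathscr{S}}}^{1}$. Thus, after line~\ref{algline_CDS:l19}, the set of un-restricted active subsets becomes $(\text{un-restricted before})\setminus\text{\ensuremath{\mathscr{S}}}^{1}\cup\{S_{new}\}$, where $S_{new}=\bigl(\bigcup_{S'\in\text{\ensuremath{\mathscr{S}}}^{1}}S'\bigr)\cup\{u\}$. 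I would then verify (a) for each node $x$ in three cases: if $x=u$ then only $S_{new}$ is un-restricted and contains $x$; if $x\in S_{new}\setminus\{u\}$ then $x$ lay in some $S'\in\text{\ensuremath{\mathscr{S}}}^{1}$, which by IH was the \emph{unique} un-restricted subset through $x$ and is now restricted, leaving only $S_{new}$; and if $x\notin S_{new}$ the count of un-restricted subsets containing $x$ can only decrease. Part~(b) is immediate: any $x\in S_{new}$ is either $u$ (just selected) or lay in some un-restricted $S'$ before, hence by IH was already in $ds$ or previously selected.

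The main obstacle is recognising that (a) alone is not a strong enough induction hypothesis: to rule out the scenario where $u$ already sits inside an un-restricted $S'\in\text{\ensuremath{\mathscr{S}}}^{1}$ (which would leave $u\notin N(S')$ and prevent line~\ref{algline_CDS:l16} from restricting $S'$, spoiling uniqueness), one must track the provenance of nodes appearing in un-restricted subsets. The strengthened invariant~(b), together with the three propositions cited above, is exactly what closes this loophole; the rest of the argument is essentially mechanical.
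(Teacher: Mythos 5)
Your proof is correct and follows essentially the same route as the paper: both arguments hinge on showing that the selected node cannot already lie in an un-restricted active subset (via Propositions \ref{prop:v-not-in-DS}, \ref{prop:alg2-remains-packed} and \ref{prop:chosen-not-infinity}), so that line \ref{algline_CDS:l16} restricts every subset in $\text{\ensuremath{\mathscr{S}}}^{1}$ and the new subset of line \ref{algline_CDS:l19} becomes the unique un-restricted subset through its members. The paper phrases this as a proof by contradiction at the step where a node would enter a second un-restricted subset, while you run an explicit induction whose auxiliary invariant (b) merely makes explicit the provenance fact the paper uses implicitly.
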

\begin{proof}
Without loss of generality, let us consider node $u\in V$. We will
prove by contradiction, if node $u$ is contained in some \textsl{un-restricted}
subset $\mbox{\ensuremath{S\in\text{\ensuremath{\mathscr{S}}}''}}$
then there is no other \textsl{un-restricted} subset $\mbox{\ensuremath{S'\in\text{\ensuremath{\mathscr{S}}}''}}$
that contains $u$. Let us assume in contradiction that $u$ is contained
in both $S$ and $S'$. Note, $u$ can not be added to the two subsets
in the same step. We assume that $u$ is already contained in subset
$S'$, and let us consider the step where $u$ is added to subset
$S$. According to propositions \ref{prop:alg2-remains-packed} and
\ref{prop:chosen-not-infinity}, since $u$ is contained in $S'$
then it cannot be the selected node. Therefore, $u$ had to be added
to $S$ due to the union in line \ref{algline_CDS:l18} of collection
$\text{\ensuremath{\mathscr{S}}}^{1}$. $S'$ is the only \textsl{un-restricted}
subset containing $u$, so $S'$ must be in the collection. By Proposition
\ref{prop:chosen-not-infinity}, the selected node $v$ can not be
in $S'$. Since $\mbox{\ensuremath{S'\in\text{\ensuremath{\mathscr{S}}}^{1}}}$,
then there must be a neighbor of $v$ is $S'$. Therefore, the \textit{if}
statement in line \ref{algline_DS:l16} is met and the assignment
$\mbox{\ensuremath{g\left(S'\right)\leftarrow0}}$ is performed. But
this is in contradiction to the assumption that $S'$ is \textsl{un-restricted}.
Hence, subset $S$ must be the only \textsl{un-restricted} subset
that contains $u$.
\end{proof}
\begin{prop}
At each step of Algorithm \ref{alg:CDS}, a node can have at most
five adjacent \textsl{un-restricted} subsets. \label{prop:5-adjacent-sets}
\end{prop}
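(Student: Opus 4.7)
The plan is to fix a node $v$ and the current step of Algorithm \ref{alg:CDS}, and consider the un-restricted active subsets $S_1,\dots,S_k$ adjacent to $v$ (those with $v\in N(S_i)$ and $g(S_i)=1$). By Proposition \ref{prop:g(v)=00003D1} these subsets are pairwise disjoint on $V$, so I can choose distinct representatives $u_i\in S_i\cap N(\{v\})$. The goal is to show that $u_1,\dots,u_k$ form an independent set lying in the closed unit neighbourhood of $v$, and then invoke the standard UDG/kissing-number bound (the same $\leq 5$ fact already used in Lemma \ref{lem:LB2}) to conclude $k\leq 5$.

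The core step is pairwise non-adjacency of the $u_i$'s. I will suppose for contradiction that $u_i$ and $u_j$ are adjacent in $G$ for some $i\neq j$. If both lie in $ds$, this directly contradicts $ds$ being an independent set. Otherwise, without loss of generality $u_i\notin ds$, and since lines \ref{algline_CDS:l18}--\ref{algline_CDS:l19} are the only way a non-$ds$ node ever enters an active subset, $u_i$ must have been selected at some earlier step $\tau$. I then claim that at step $\tau$ the node $u_j$ already belongs to some un-restricted active subset: if $u_j\in ds$ this follows from the initialization in line \ref{algline_CDS:l1} combined with Proposition \ref{prop:g(v)>0}; if $u_j\notin ds$ and was selected before $\tau$ the same argument applies; and if $u_j\notin ds$ was selected after $\tau$, I simply swap the roles of $u_i$ and $u_j$ since the setup is symmetric. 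Because $u_j$ is a neighbour of $u_i$, the un-restricted subset containing $u_j$ at step $\tau$ belongs to the collection $\mathscr{S}^{1}$ constructed in line \ref{algline_CDS:l5} and is therefore absorbed into the new un-restricted subset created in lines \ref{algline_CDS:l18}--\ref{algline_CDS:l19}, which simultaneously contains $u_i$. From then on, absorptions only merge un-restricted subsets (a subset becoming restricted donates all its nodes to a single newer un-restricted subset, never splitting), so $u_i$ and $u_j$ continue to cohabit a common un-restricted subset, contradicting $S_i\neq S_j$.

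Hence the chosen representatives $u_1,\dots,u_k$ are pairwise non-adjacent neighbours of $v$, i.e.\ an independent set contained in the closed unit disk around $v$, and the UDG kissing-number bound immediately yields $k\leq 5$. The main obstacle is the temporal bookkeeping behind the claim that once two nodes share an un-restricted subset they continue to share one for the remainder of the algorithm; this piggybacks on Propositions \ref{prop:alg2-packed}, \ref{prop:alg2-remains-packed}, \ref{prop:g(v)>0}, and \ref{prop:g(v)=00003D1}, which together control how the collection of un-restricted subsets evolves step by step.
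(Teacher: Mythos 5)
Your proof is correct, and it takes a structurally different route from the paper's. The paper proves this proposition by induction on the algorithm's steps: for each node it does a case analysis on how the new subset created in line \ref{algline_CDS:l18} can change the number of adjacent \textsl{un-restricted} subsets, and in the only case where that number can actually grow (the node is a neighbour of the selected node but adjacent to no subset of $\mathscr{S}^{1}$) it asserts that one representative neighbour from each adjacent subset, together with the selected node, forms an independent set inside the node's unit disk, and then applies the same kissing-number bound you use. The pairwise independence of those representatives is, however, only \emph{asserted} in the paper, not argued. Your proof supplies precisely that missing piece: the observation that if two adjacent nodes each lie in some \textsl{un-restricted} active subset, then at the step when the later of the two first enters an active subset (initialization for $ds$ nodes, selection for the rest) the other's \textsl{un-restricted} subset lands in $\mathscr{S}^{1}$ and is absorbed, and that thereafter a subset only becomes \textsl{restricted} in the very step in which all of its nodes are donated to the single new \textsl{un-restricted} subset, so the two nodes cohabit one \textsl{un-restricted} subset forever; combined with Proposition \ref{prop:g(v)=00003D1} this forces distinct adjacent \textsl{un-restricted} subsets to have pairwise non-adjacent representatives, and the UDG packing bound of $5$ (the same fact used in Lemma \ref{lem:LB2}) finishes the argument. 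Both proofs therefore rest on the same two ingredients --- independence of representatives plus the kissing-number bound --- but yours is direct rather than step-inductive, and it closes a genuine gap in the paper's own write-up.
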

\begin{proof}
We will prove by induction on the steps. Initially, the subsets of
$\text{\ensuremath{\mathscr{S}}}''$ are based on the $ds$ nodes
and they are all \textsl{un-restricted}. Since $ds$ is an IS then
the statement holds for base case, step $0$. The induction hypothesis
is that the statement holds for step $k>0$, and we will prove that
the statement still holds for step $k+1$. We will show that for each
node $\mbox{\ensuremath{u\in V}}$ the condition is preserved. Let
$v$ be the selected node in step $k+1$, and $S$ be the new subset.
If $u$ is not adjacent to any subset of $\text{\ensuremath{\mathscr{S}}}^{1}$
and is not a neighbor of $v$, then the new subset $S$ has no influence
on $u$. If $u$ adjacent to at least one subset of $\text{\ensuremath{\mathscr{S}}}^{1}$,
then after the union in line \ref{algline_CDS:l18} the number of
\textsl{un-restricted} subsets adjacent to $u$ can not increase.
Consider the case where $u$ is not adjacent to any subset of $\text{\ensuremath{\mathscr{S}}}^{1}$
but is neighbor of $v$. Each subset adjacent to $u$ contains a neighbor
of $u$. Note, all of these neighbors are independent, and are also
independent in $v$. Therefore, the number of subsets adjacent to
$u$ can not be greater than five, including the new \textsl{un-restricted}
subset $S$ that contains $v$. Hence, the condition holds for each
step, and the statement is correct.
\end{proof}

\subparagraph{Algorithm \ref{alg:DS} can be implemented in such a way that will
produce a solution in $O\left(n^{2}\cdot\Delta\left(G\right)\right)$
time.}

In order to do so, we maintain for each node $v\in V$ two additional
variables. $sum_{y}\left(v\right)$ will be maintained equivalent
to the value of $\mbox{\ensuremath{\sum_{S\in\text{\ensuremath{\mathscr{S}}}'\colon v\in N\left(S\right)}y\left(S\right)}}$,
and $sum_{g}\left(v\right)$ for the value of $\mbox{\ensuremath{\sum_{S\in\text{\ensuremath{\mathscr{S}}}'\colon v\in N\left(S\right)}g\left(S\right)}}$.
These two types of variables will be used to calculate the potential
function $\epsilon\left(\cdot\right)$ and the \textsl{if} statement
of line \ref{algline_DS:l12}, in constant time. $sum_{y}\left(\cdot\right)$
variables will be initialized to zero, and $sum_{g}\left(\cdot\right)$
variables will be initialized to their node degree since all subsets
are initially \textsl{un-restricted}. The $sum_{y}\left(\cdot\right)$
values should be updated after changing the $y\left(S\right)$ values
in line \ref{algline_DS:l8}. To do so, we will add an update after
this line, for each node $u\in N\left(S\right)$, the $sum_{y}\left(u\right)$
value will be increased by $g\left(S\right)\cdot\epsilon$. The $sum_{g}\left(\cdot\right)$
values should be updated in two places. If the algorithm performs
the assignment $\mbox{\ensuremath{g\left(S\right)\leftarrow0}}$ in
line \ref{algline_DS:l9}, when previously $g\left(S\right)$ value
was $1$, then $sum_{g}\left(u\right)$ should be decrease by $1$
for each $\mbox{\ensuremath{u\in N\left(S\right)}}$. After the assignment
$\mbox{\ensuremath{g\left(S'\right)\leftarrow1}}$ in line \ref{algline_DS:l14},
$sum_{g}\left(u\right)$ should be increased by $1$. Now, to calculate
the algorithm runtime, we will focus within the general \textsl{repeat}
loop, on the \textsl{for} loop in line \ref{algline_DS:l7}. All other
components can be computed in $O\left(n\right)$ time, if we refer
to the complement subsets only symbolically and we consider unsorted
sets. Each operation on $N\left(S\right)$ is performed in $O\left(\Delta\left(G\right)\right)$
time, since a subset can have as many neighbors as a node has. Note
that according to Proposition \ref{prop:alg1-num-of-subsets}, $\mbox{\ensuremath{\left|\text{\ensuremath{\mathscr{S}}}'\right|\leq2n}}$.
The \textsl{for} loop can perform at most $2n$ iterations, as the
number of subsets in $\text{\ensuremath{\mathscr{S}}}'$. By Proposition
\ref{prop:alg1-node-selected-once}, the \textsl{repeat} loop can
perform at most $n$ iterations. Therefore, the total runtime of Algorithm
\ref{alg:DS} is $O\left(n^{2}\cdot\Delta\left(G\right)\right)$.

\subparagraph{Algorithm \ref{alg:CDS} can also be implemented in such a way that
will produce a solution in $O\left(n^{2}\cdot\Delta\left(G\right)\right)$
time.}

Also here (as Algorithm \ref{alg:DS}), we maintain for each node
$v\in V$ the two variables $sum_{y}\left(v\right)$ (for $\mbox{\ensuremath{\sum_{S\in\text{\ensuremath{\mathscr{S}}}''\colon v\in N\left(S\right)}y\left(S\right)}}$)
and $sum_{g}\left(v\right)$ (for $\mbox{\ensuremath{\sum_{S\in\text{\ensuremath{\mathscr{S}}}''\colon v\in N\left(S\right)}g\left(S\right)}}$),
and in addition the variable $\text{\ensuremath{\mathfrak{s}}}\left(v\right)$.
This variable will hold an \textsl{un-restricted} subset that contains
$v$, and from Proposition \ref{prop:g(v)=00003D1} we know that this
subset is the only one. The first two types of variables will be used
to calculate the potential function $\epsilon\left(\cdot\right)$
and the \textsl{if} statement of line \ref{algline_CDS:l7}, in constant
time. While the third will be used in lines \ref{algline_CDS:l5},
\ref{algline_CDS:l6} and \ref{algline_CDS:l16} to find all subsets
adjacent to a node, in $O\left(\Delta\left(G\right)\right)$ time.
Here, variables $sum_{g}\left(v\right)$ will be initialized to $\mbox{\ensuremath{\left|N\left(\left\{ v\right\} \right)\cap ds\right|}}$
for each node $\mbox{\ensuremath{v\in V}}$, $sum_{y}\left(v\right)$
to zeros, and variables $\text{\ensuremath{\mathfrak{s}}}\left(v\right)$
will hold subset $\left\{ v\right\} $ for each $\mbox{\ensuremath{v\in ds}}$
and $\emptyset$ for each $\mbox{\ensuremath{v\in V\setminus ds}}$.
Variables $\text{\ensuremath{\mathfrak{s}}}\left(\cdot\right)$ should
be updated after adding the new subset in line \ref{algline_CDS:l19}.
For each node $u$ in the new \textsl{un-restricted} subset $S$ we
assign $\mbox{\ensuremath{\text{\ensuremath{\mathfrak{s}}}\left(u\right)\leftarrow S}}$.
The $sum_{y}\left(\cdot\right)$ values should be updated after changing
the $y\left(S\right)$ values by the \textsl{for} loop in line \ref{algline_CDS:l14}.
Thus, after line \ref{algline_CDS:l17} we will insert their update.
Let $\text{\ensuremath{\mathscr{S}}}_{u}^{1}$ be the collection of
all \textsl{un-restricted} \textsl{active} subsets adjacent to node
$u$. For each node $\mbox{\ensuremath{u\in V}}$ we find $\text{\ensuremath{\mathscr{S}}}_{u}^{1}$
using $\text{\ensuremath{\mathfrak{s}}}\left(\cdot\right)$, and need
to increase $sum_{y}\left(u\right)$ by $\mbox{\ensuremath{\left|\text{\ensuremath{\mathscr{S}}}_{u}^{1}\right|\cdot\epsilon}}$.
The $sum_{g}\left(\cdot\right)$ values should be recalculated after
changing the $g\left(S\right)$ values in lines \ref{algline_CDS:l16}
and \ref{algline_CDS:l19}. So, for each node $\mbox{\ensuremath{u\in V}}$
we find $\text{\ensuremath{\mathscr{S}}}_{u}^{1}$ using $\text{\ensuremath{\mathfrak{s}}}\left(\cdot\right)$,
and assign $\mbox{\ensuremath{sum_{g}\left(u\right)\leftarrow\left|\text{\ensuremath{\mathscr{S}}}_{u}^{1}\right|}}$.
Let us calculate the algorithm's runtime. The main component of the
runtime is the \textsl{repeat} loop, which according to propositions
\ref{prop:alg2-remains-packed} and \ref{prop:chosen-not-infinity},
performs at most $n$ iterations. According to Proposition \ref{prop:5-adjacent-sets},
$\mbox{\ensuremath{\left|\text{\ensuremath{\mathscr{S}}}^{2}\right|\leq\left|\text{\ensuremath{\mathscr{S}}}^{1}\right|\leq5}}$.
Then, within the \textsl{repeat} loop, no component requires more
than $\mbox{\ensuremath{O\left(n\cdot\Delta\left(G\right)\right)}}$
time to calculate (also the maintenance operations we added for $sum_{y}\left(\cdot\right)$
and $sum_{g}\left(\cdot\right)$). Therefore, the total runtime is
$O\left(n^{2}\cdot\Delta\left(G\right)\right)$.

\subparagraph{The time complexity of Algorithm \ref{alg:Tree} is $O\left(n^{3}\cdot\Delta\left(G\right)\right)$.}

Stems from the fact that the algorithm runs algorithms \ref{alg:DS}
and \ref{alg:CDS}, $n$ times.

\section{Empirical Results\label{sec:Empirical-Results}}

In this section we present simulation results of Algorithm \ref{alg:Tree}.
The simulations were performed on random UDG graphs, in a model of
uniform random distribution of nodes across a given square area. Each
graph is characterized by the size of the square area on which it
is spread, and by the average density of its nodes. The simulation
was performed only on connected graphs. The MULE transmission range
($\RM$) is determined to be $0.2$. Since the goal is to show the
$\alpha$-approximation ratio obtained by the algorithm, the MULE
location for each simulation is predetermined, in order to reduce
runtime. The MULE location is intuitively determined to be the closest
node to the center of the given area. There are two interesting aspects
to explore. One aspect, is the performance of the algorithm for increasing
nodes density, on a given area. For this aspect, we prepared three
sequences of random graphs with the same area. The second aspect is
the performance of the algorithm for increasing area. For this aspect,
we prepared three sequences of random graphs with the same density
of nodes.

The first results we present in Figure \ref{fig:Simulation-results}
(a) are simulations for increasing nodes density. We chose three sizes
of square area of $4$, $16$ and $36$ square units. For each area
and density, we ran five simulations of different random graphs. We
have seen that these areas are sufficient to represent the general
trend. First of all, in the results we can see a convergence trend
of all the simulations to a constant approximation value, as the density
increases. This can be explained by the fact that as the density of
the nodes grows, the selection of nodes for $cds$ is increasingly
based on their geographical location in the area. After all, we are
looking for a set that will dominate the entire area. For a fixed
area, the position will remain the same. Another thing that can be
observed is the convergence of the distribution of the results, as
the area increases. The convergence of the distribution can be explained
as follows. The set of nodes that the algorithm finds, as well as
the optimal solution, must dominate the entire area as the density
increases. So when the area increases, also the weights of the nodes
far from the MULE, are increased respectively. In the end, both sets
($cds$ and the optimal solution) are required to dominate the same
area, so the small shifts to the right and left, less affect the overall
cost. What affects more, is the need to have one dominating node in
the unit disk around each node in the graph, and this is also required
from the optimal solution.

The second results we present in Figure \ref{fig:Simulation-results}
(b) are simulations for increasing area. We chose three different
densities of $2.5$, $10$, and $40$. We have seen that they represent
the general trend. In addition, we illustrated the value of the approximation
inaccuracy $\left(1+\varepsilon\right)$ resulting from the reduction
to the MWCDS problem. Since we can not calculate its exact value because
it depends on the optimal solution, then we used the $cds$ nodes
that the algorithm finds. We denote this estimator by $\hat{\varepsilon}$.
As we explained in the previous simulation results, as the area grows,
then the general form of the optimal solution must be more similar
to the solution found by the algorithm. Since the optimal gathering
tree also needs to span all the nodes in the graph, then the trend
should also be similar, as the area increases. Therefore, the $\hat{\varepsilon}$
value calculated by the average of the $cds$ nodes distances should
behave similarly to $\varepsilon$. In the results, we can see the
convergence of the approximation value to a constant. The explanation
for this is as before, since both the algorithm's solution and the
optimal solution should dominate the entire area, and if a node is
selected slightly to the right or left relative to the optimal solution
node, then the effect becomes negligible relative to the weight of
nodes far from the MULE. Another thing that can be observed is convergence
as the density increases. The explanation for this is, of course,
from the results of the previous simulation, in which we saw this
convergence explicitly. One last thing to note is the downward trend
of the estimator $\left(1+\hat{\varepsilon}\right)$, which gives
us an insight for the value of $\left(1+\varepsilon\right)$. For
the range of area sizes we used, we can already see that the approximation
inaccuracy has a very reasonable value. 

\begin{figure}
\begin{minipage}[t]{0.5\columnwidth}%
\begin{center}
(a) Approximation value for increasing \\
nodes density
\par\end{center}
\begin{center}
\includegraphics[width=1\columnwidth]{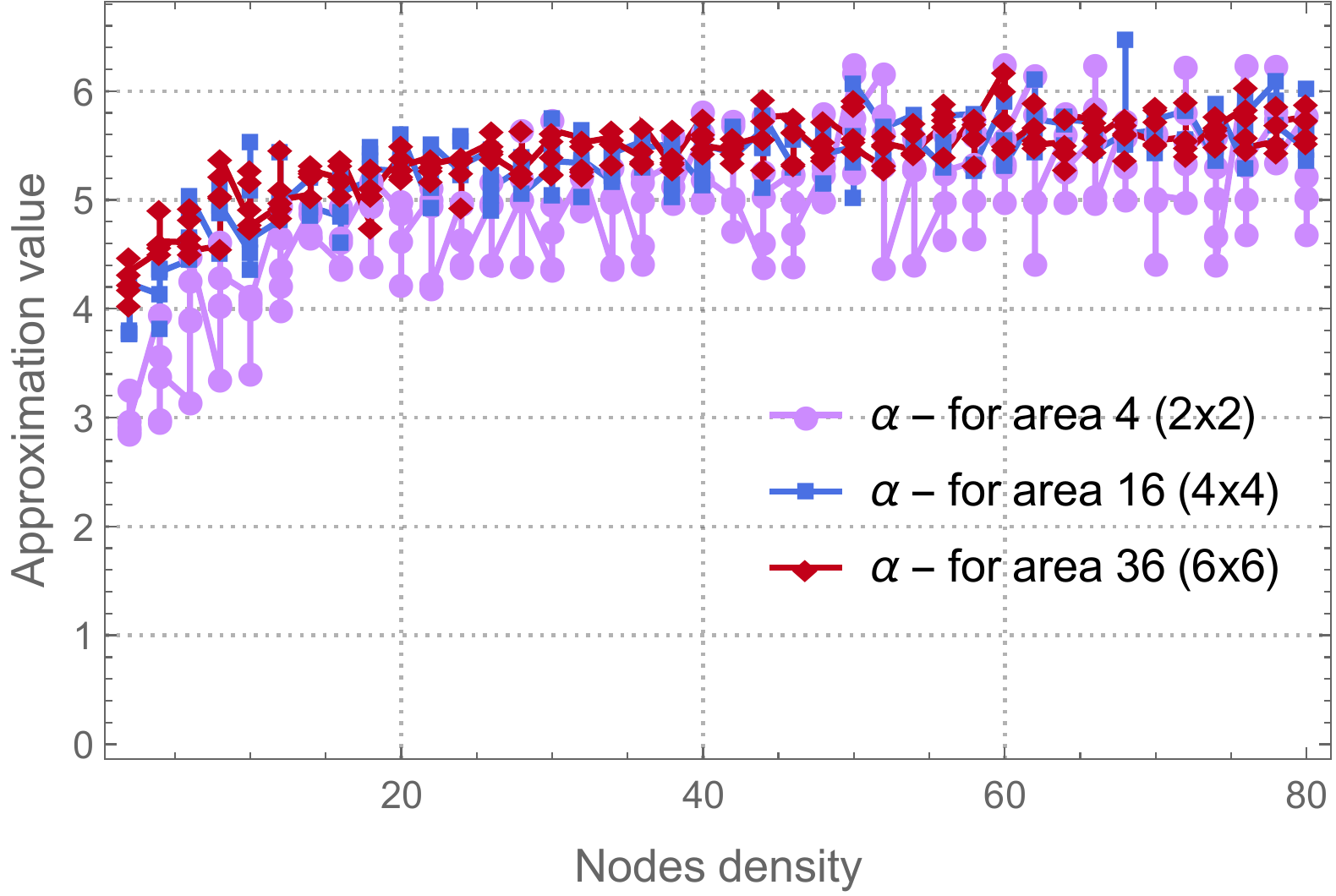}
\par\end{center}%
\end{minipage}%
\begin{minipage}[t]{0.5\columnwidth}%
\begin{center}
(b) Approximation value for increasing \\
area size
\par\end{center}
\begin{center}
\includegraphics[width=1\columnwidth]{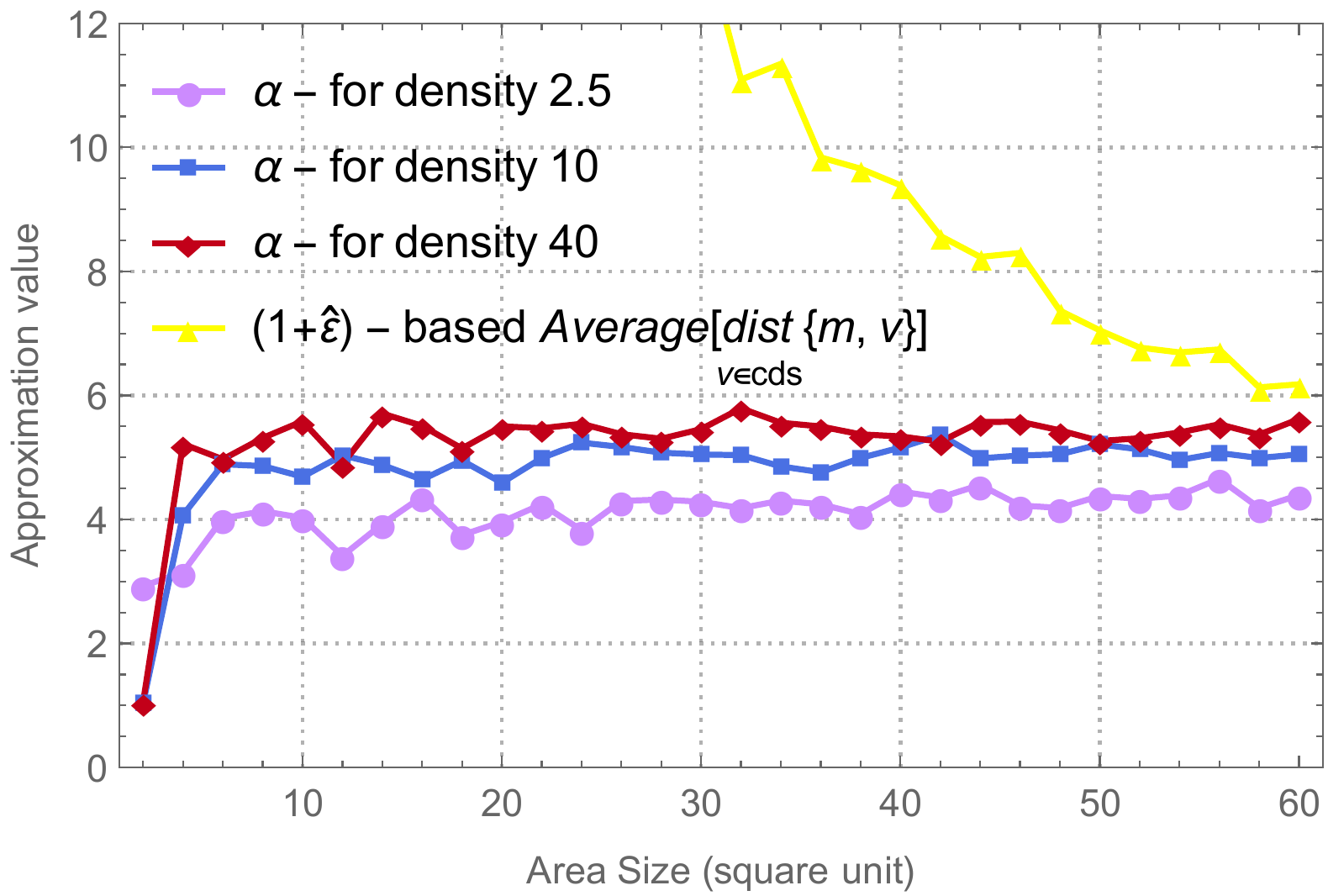}
\par\end{center}%
\end{minipage}

\caption{\label{fig:Simulation-results}Simulation results for the approximation
ratio $\alpha$ of Algorithm \ref{alg:Tree}.}

\end{figure}

\section{Conclusions and Future Work\label{sec:Conclusions}}

In this work, we studied the use of data MULEs in order to gather
data and increase information survivability in wireless sensor networks.
We considered the MULE problem for a general UDG with a single MULE
and only one failed sensor, and referred to this problem as $\MULE$.
We showed that with a reasonable assumption it can be reduced to a
MWCDS problem with an approximation of $\left(1+\varepsilon\right)$.
Then, we proposed a \textsl{primal-dual }algorithm that finds a $20-approximate$
solution to the reduced problem of MWCDS in polynomial time. Finally,
we introduced the complete algorithm which produces a $\left(20+\varepsilon\right)$-approximate
solution for the $\MULE$ problem in $O\left(n^{3}\cdot\varDelta\left(G\right)\right)$
time. Also, we have shown by simulation that in practice the algorithm
achieves even better results. Future extensions of our work could
investigate how to expand the algorithm for multiple MULEs in the
network, or for different cost functions. In addition, it will be
interesting to explore the implications of the algorithm technique
on the research field of the MWCDS problem. 

\begin{flushleft}
\textbf{Acknowledgements:} The research has been supported by Israel
Science Foundation grant No. 317/15 and the US Army Research Office
under grant \#W911NF-18-1-0399.
\par\end{flushleft}


\bibliographystyle{plain}
\phantomsection\addcontentsline{toc}{section}{\refname}\bibliography{Bibliography}

\end{document}